\theoremstyle{plain} \textwidth=430pt \textheight=660pt
\newcommand{\Rmnum}[1]{\expandafter\@slowroman #1@}
\newtheorem{theorem}{Theorem}[section]
\newtheorem{definition}[theorem]{Definition}
\newtheorem{lemma}[theorem]{Lemma}
\newtheorem{example}[theorem]{Example}
\newtheorem{assumption}[theorem]{Assumption}
\newtheorem{proposition}[theorem]{Proposition}
\title{Local Statistics of Realizable Vertex Models}
\author{Zhongyang Li\footnote{Department of mathematics, Brown University,
Providence, RI 02912, USA, zli@math.brown.edu}}
\date{}
\begin{document}
\maketitle
\begin{abstract}
We study planar ``vertex'' models, which are probability measures on
edge subsets of a planar graph, satisfying certain constraints at
each vertex, examples including dimer model, and 1-2 model, which we
will define. We express the local statistics of a large class of
vertex models on a finite hexagonal lattice as a linear combination
of the local statistics of dimers on the corresponding Fisher graph,
with the help of a generalized holographic algorithm. Using an
$n\times n$ torus to approximate the periodic infinite graph, we
give an explicit integral formula for the free energy and local
statistics for configurations of the vertex model on an infinite
bi-periodic graph. As an example, we simulate the 1-2 model by the
technique of Glauber dynamics.
\end{abstract}
\section{Introduction}

A \textbf{vertex model} is a graph $G=(V,E)$ where we associate to
each vertex $v\in V$ a \textbf{signature} $r_v$. A \textbf{local
configuration} at a vertex $v$ is a subset of incident edges of $v$.
A \textbf{configuration} of the graph $G$ is an edge subset of $G$.
The signature $r_v$ at a vertex $v$ is a function which assigns a
nonnegative real number (\textbf{weight}) to each local
configuration at $v$. The \textbf{partition function} of the vertex
model is the weighted sum of configurations $X\in\{0,1\}^{E}$, where
the weight of a configuration is the product of weights of local
configurations, obtained by restricting the configuration at each
vertex. Dimers, loop models, and random tiling models are some
special examples of vertex models.

Direct computations of the partition function of a general vertex
model usually require exponential time. On the other hand, using the
Fisher-Kasteleyn-Temperley method \cite{ka1,ka2}, we can efficiently
count the number of perfect matchings (dimer configurations) of a
finite planar graph. The idea of \textbf{generalized holographic
reduction} is to reduce a vertex model on a planar graph to a dimer
model on another planar graph, essentially by a linear base change,
see section 3. For an original version of the holographic reduction
(Valiant's Algorithm), see \cite{val}.

However, not all satisfying assignment problems can be reduced to a
perfect matching problem (``realized'') using the holographic
algorithm. We study the realizability problem of the generalized
algorithm for vertex models on the hexagonal lattice and prove that
the signature of realizable models form a submanifold of positive
codimension of the manifold of all signatures, see Theorem 3.6. An
example of realizable models is the \textbf{1-2 model}, which is a
signature on the honeycomb lattice,  only one or two edges allowed
to be present in each local configuration, see Figure 13, 14. The
realizability problem of Valiant's algorithm is studied by Cai
\cite{cai}, and the realizability problem of \textbf{uniform 1-2
model} (not-all-equal relation), a special 1-2 model which assigns
all the configurations weight 1, under Valiant's algorithm is
studied by Schwartz and Bruck \cite{sb}.

Realizable vertex models may be reduced to dimer models in more than
one way, that is, using different bases. However, all the dimer
models corresponding to the same vertex model are shown to be
\textbf{gauge equivalent}, i.e. obtained from one another by a
trivial reweighting.

One of the simplest vertex configuration models is a graph with the
same signature at all vertices. Using the singular value
decomposition, we prove that such models on a hexagonal lattice are
realizable if and only if they are realizable under orthogonal base
change. Moreover, the orthogonal realizability condition takes a
very nice form; see section 3.2.

We compute the local statistics of realizable vertex models on a
hexagonal lattice with the help of the generalized holographic
reduction, i.e. for the natural probability measure, we compute the
probabilities of given configurations at finitely many fixed
vertices, which are proved to be computable by sums of finitely many
Pfaffians, see Theorem 5.1 and Theorem 5.2.

The weak limit of probability measures of the vertex model on finite
graphs are of considerable interest. Using an $n\times n$ torus to
approximate the infinite periodic graph, we give an explicit
integral formula for the probability of a specific local
configuration at a fixed vertex, see section 6. These results follow
from a study of the zeros of the \textbf{characteristic polynomial},
or the \textbf{spectral curve}, on the unit torus $\mathbb{T}^2$.
For a more general result about the intersection of the spectral
curve with $\mathbb{T}^2$, see \cite{li}. For example, using our
method, we compute the probability that a $\{001\}$ dimer occurs and
for uniform 1-2 model, and the probability that a $\{011\}$
configuration occurs at a vertex for critical 1-2 model, see
Examples 7.2 and 7.3.

The main result of this paper can be stated in the following
theorems
\begin{theorem}
For a periodic, realizable, positive-weight vertex model on a
hexagonal lattice $G$ with period $1\times 1$, assume the
corresponding Fisher graph has positive edge weights, then the free
energy of G is
\begin{align*}
F:=\lim_{n\rightarrow\infty}\frac{1}{n^2}\log
S(G_n)=\frac{1}{8\pi^2}\iint_{|z|=1,|w|=1}\log
P(z,w)\frac{dz}{iz}\frac{dw}{iw}
\end{align*}
where $G_n$ is the quotient graph $G/(n\mathbb{Z}\times
n\mathbb{Z})$, $P(z,w)$ is the characteristic polynomial.
\end{theorem}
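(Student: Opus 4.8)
The plan is to pass from the vertex model to a dimer model and then apply Kasteleyn theory on the torus. First I would invoke the generalized holographic reduction of Section 3 (Theorem 3.6 together with the gauge-equivalence discussion): since the model is realizable, there is a Fisher graph $G^F$, with the assumed positive edge weights, for which the holant identity yields $S(G_n)=Z(G_n^F)$, where $Z(G_n^F)$ is the dimer partition function on $G_n^F=G^F/(n\mathbb{Z}\times n\mathbb{Z})$. This reduces the computation of the free energy to the large-$n$ asymptotics of $\frac{1}{n^2}\log Z(G_n^F)$, which is now a genuine dimer problem on a periodic planar graph on the torus, accessible by the Fisher--Kasteleyn--Temperley method precisely because the Fisher weights are positive.

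Next I would set up Kasteleyn matrices on the torus. Fixing a Kasteleyn orientation of $G_n^F$, the standard torus theory expresses $Z(G_n^F)$ as a signed half-sum of four Pfaffians,
\begin{align*}
Z(G_n^F)=\tfrac{1}{2}\bigl(-\mathrm{Pf}(K^{00}_n)+\mathrm{Pf}(K^{01}_n)+\mathrm{Pf}(K^{10}_n)+\mathrm{Pf}(K^{11}_n)\bigr),
\end{align*}
where $K^{\theta\tau}_n$ is the Kasteleyn matrix with boundary phases $(-1)^\theta,(-1)^\tau$ inserted along the two homology generators. Using the $1\times 1$ periodicity, the discrete Fourier transform block-diagonalizes each $K^{\theta\tau}_n$ into $1\times 1$-domain blocks $K(z,w)$, with $z,w$ ranging over the $n$-th roots of $(-1)^\theta,(-1)^\tau$. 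Identifying $P(z,w)=\det K(z,w)$ as the characteristic polynomial, I obtain $\mathrm{Pf}(K^{\theta\tau}_n)^2=\pm\det K^{\theta\tau}_n=\pm\prod_{z,w}P(z,w)$, so that for each $(\theta,\tau)$,
\begin{align*}
\frac{1}{n^2}\log\bigl|\mathrm{Pf}(K^{\theta\tau}_n)\bigr|=\frac{1}{2n^2}\sum_{z,w}\log\bigl|P(z,w)\bigr|.
\end{align*}

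Finally I would take the limit. The right-hand side is $\tfrac12$ times a Riemann sum for $\frac{1}{(2\pi)^2}\int_0^{2\pi}\!\!\int_0^{2\pi}\log|P(e^{i\theta},e^{i\phi})|\,d\theta\,d\phi$, which after the substitution $z=e^{i\theta},w=e^{i\phi}$, $\frac{dz}{iz}=d\theta$, equals $\frac{1}{8\pi^2}\iint_{|z|=|w|=1}\log P(z,w)\frac{dz}{iz}\frac{dw}{iw}$, exactly the stated constant. To conclude I must show that the full half-sum of four Pfaffians has the same exponential growth rate as a single dominant Pfaffian, i.e.\ that no cancellation lowers the rate, so that $\frac{1}{n^2}\log Z(G_n^F)$ converges to the same integral. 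I expect this step to be the main obstacle: it requires controlling the zeros of $P(z,w)$ on the unit torus $\mathbb{T}^2$, both to guarantee integrability of $\log P$ (so the Riemann sums converge despite logarithmic singularities) and to rule out sign cancellations among the four Pfaffians; this is precisely the spectral-curve analysis for which I would appeal to the intersection results of \cite{li}.
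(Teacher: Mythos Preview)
Your proposal is correct and follows essentially the same route as the paper's proof in Section~6: holographic reduction to dimers on the Fisher graph, the four-Pfaffian torus formula, block-diagonalization via the enlarging-fundamental-domain identity, and a Riemann-sum limit. The only difference is that the spectral-curve input you defer to \cite{li} is exactly Theorem~4.4 of this paper (proved via Lemmas~4.1--4.3), showing $P(z,w)$ either has no zeros on $\mathbb{T}^2$ or a single real node; the paper then disposes of the cancellation worry by the sandwich $\max_{\theta,\tau}\lvert\mathrm{Pf}\,K_n^{\theta\tau}\rvert\le Z(M_n)\le 2\max_{\theta,\tau}\lvert\mathrm{Pf}\,K_n^{\theta\tau}\rvert$.
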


\begin{theorem}
Assume the periodic vertex model on hexagonal lattice is realizable
to the dimer model on a Fisher graph with positive, periodic edge
weights, and assume the entries of the corresponding base change
matrices are nonzero. Let $\lambda_n$ be the probability measure
defined for configurations on toroidal hexagonal lattice $G_n$.
Moreover, for a configuration $c$ at a vertices $v$
\begin{align*}
\lim_{n\rightarrow\infty}\lambda_n(c,v)=\sum_{d_{j}}[\prod_{i=1}^{p}w_{d_{j}}]|\mathrm{Pf}(K_{\infty}^{-1})_{V(d_{j})}|
\end{align*}
where
\begin{align*}
K_{\infty}^{-1}((u,x_1,y_1),(v,x_2,y_2))&=&\frac{1}{4\pi^2}\iint_{\mathbb{T}^2}z^{x_1-x_2}w^{y_1-y_2}\frac{\mathrm{Cof}(K(z,w))_{u,v}}{P(z,w)}\frac{dz}{iz}\frac{dw}{iw}
\end{align*}
$d_{j}$ are local dimer configurations on the gadget of the Fisher
graph corresponding to $v$. $V(d_{j})$ is the set of vertices
involved in the configuration $d_{j}$.
\end{theorem}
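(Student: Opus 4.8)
The plan is to reduce everything to a finite-volume Pfaffian identity and then pass to the limit through a Fourier analysis of the Kasteleyn operator on the torus. First I would invoke the generalized holographic reduction of Section 3 together with Theorem 5.1 and Theorem 5.2: on the finite toroidal lattice $G_n$ the probability $\lambda_n(c,v)$ of the configuration $c$ at $v$ is already expressed as a finite linear combination, indexed by the local dimer configurations $d_j$ on the gadget of the Fisher graph encoding $v$, of local dimer statistics on the Fisher graph, each equal to a Pfaffian of the submatrix of the inverse Kasteleyn matrix $K_n^{-1}$ supported on the vertex set $V(d_j)$, with the product weight $\prod_{i}w_{d_j}$ in front. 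The hypothesis that the base-change matrices have nonzero entries ensures that this reduction is nondegenerate, so the weights $w_{d_j}$ and the index set $\{d_j\}$ are well defined and, crucially, independent of $n$.

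Since the index set $\{d_j\}$ is finite and fixed, it suffices to establish the entrywise convergence $K_n^{-1}\to K_\infty^{-1}$ and then pass the limit inside the finite sum, using continuity of the Pfaffian in its entries. To identify the limit I would diagonalize $K_n$ by the characters of the group $(\mathbb{Z}/n\mathbb{Z})^2$ acting on $G_n$. Because the Fisher graph is bi-periodic with a $1\times 1$ fundamental domain, $K_n$ block-diagonalizes into the finite matrices $K(z,w)$ as $(z,w)$ ranges over the appropriate $n$-th roots of unity, with $P(z,w)=\det K(z,w)$ the characteristic polynomial of Theorem 1.1. Inverting block by block and applying Cramer's rule gives
\begin{align*}
K_n^{-1}((u,x_1,y_1),(v,x_2,y_2))=\frac{1}{n^2}\sum_{(z,w)}z^{x_1-x_2}w^{y_1-y_2}\frac{\mathrm{Cof}(K(z,w))_{u,v}}{P(z,w)},
\end{align*}
which is exactly a Riemann sum for the integral defining $K_\infty^{-1}$.

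The main obstacle is the convergence of this Riemann sum to the stated integral, since the integrand is singular along the zeros of $P(z,w)$ on the unit torus $\mathbb{T}^2$. Here I would appeal to the analysis of the intersection of the spectral curve with $\mathbb{T}^2$ in \cite{li}: under the positive-weight hypothesis the real zero set of $P$ on $\mathbb{T}^2$ is either empty or a lower-dimensional set along which $\mathrm{Cof}(K)/P$ has an integrable singularity, so the integral converges absolutely and the discrete sums converge to it. A secondary point, handled after this, is the spin-structure subtlety for dimers on a torus: the toroidal partition function is a combination of four Pfaffians with twisted boundary conditions, so one must verify that each of the four corresponding finite measures yields the same bulk correlations. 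Using the free-energy formula of Theorem 1.1 to isolate the dominant contribution, the correlations of all four measures converge to the same infinite-volume inverse Kasteleyn matrix $K_\infty^{-1}$. Combining this with the entrywise convergence and continuity of the Pfaffian, the $n\to\infty$ limit of $\lambda_n(c,v)$ collapses to the finite sum of terms $[\prod_{i}w_{d_j}]\,|\mathrm{Pf}(K_\infty^{-1})_{V(d_j)}|$, as claimed.
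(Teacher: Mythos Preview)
Your proposal is correct and follows essentially the same route as the paper: invoke Theorem 5.1 to write $\lambda_n(c,v)$ as a finite sum of local dimer probabilities, express each as a Pfaffian of a submatrix of $(K_n^{\theta\tau})^{-1}$, Fourier-diagonalize to obtain the Riemann-sum formula, and pass to the integral using the structure of the zero set of $P(z,w)$ on $\mathbb{T}^2$. The only substantive difference is in the treatment of the node case: the paper splits explicitly into two cases via Theorem~4.4 (either $P$ has no zeros on $\mathbb{T}^2$, or a single real node) and for the latter cites \cite{bt} to show that $\mathrm{Pf}(K_n^{1,1})_{E^c}/2Z_n\to 0$ while the remaining three Pfaffians combine to $\mathrm{Pf}(K_\infty^{-1})_E$, rather than your looser appeal to Theorem~1.1 and integrability from \cite{li}.
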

\noindent \textbf{Acknowledgments} The author would like to thank
Richard Kenyon for stimulating discussions. The author is also
grateful to David Wilson, B\'{e}atrice de Tili\`{e}re and Sunil
Chhita  for valuable comments.
\section{Background}

\subsection{Vertex Models}
Let $\{0,1\}^k$ denote the set of all binary sequences of length
$k$. A \textbf{vertex model} is a graph $G=(V,E)$ where we associate
to each vertex $v\in V$ a function
\begin{align*}
r_v:\{0,1\}^{deg(v)}\rightarrow\mathbb{R}^{+}
\end{align*}
$r_v$ is called the \textbf{signature} of the vertex model at vertex
$v$. We give a linear ordering on the edges adjacent to $v$, and we
fix such an ordering around each vertex once and for all. This way
the binary sequences of length $\deg(v)$ are in one-to-one
correspondence with the local configurations at $v$. Each edge
corresponds to a digit; if that edge is included in the
configuration, the corresponding digit is 1, otherwise the
corresponding digit is 0. Hence we can also consider $r_v$ as a
column vector indexed by local configurations at $v$:
\begin{align*}
r_v=\left(\begin{array}{c}r_v(0...00)\\r_v(0...01)\\r_v(0...10)\\...\\r_v(1...11)\end{array}\right)
\end{align*}

\begin{example}[signature of the vertex model at a vertex]Assume we have a
degree-2 vertex with signature
\begin{align*}
r_v=\left(\begin{array}{c}r_v(00)\\r_v(01)\\r_v(10)\\r_v(11)\end{array}\right)=\left(\begin{array}{c}\alpha\\
\beta\\ \gamma\\ \delta \end{array}\right)
\end{align*}
that means we give weights to the four different local
configurations as in Figure 1:
\begin{figure}[htbp]
\centering
\includegraphics*[276,799][336,818]{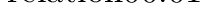}\qquad
\includegraphics*[270,794][330,815]{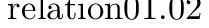}\qquad
\includegraphics*[281,797][341,814]{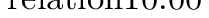}\qquad
\includegraphics*[276,796][336,813]{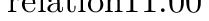}
\caption{Relation at a Vertex}
\end{figure}
\end{example}

Assume $G$ is a finite graph. We define a probability measure with
sample space the set of all configurations, $\Omega=\{0,1\}^{|E|}$.
The probability of a specific configuration $\mathcal{R}$ is
\begin{align}
\lambda(\mathcal{R})=\frac{1}{S}\prod_{v\in
V}r_v(\mathcal{R})\label{measure}.
\end{align}
The product is over all vertices. $r_v(\mathcal{R})$ is the weight
of the local configuration obtained by restricting $\mathcal{R}$ to
the vertex $v$, and $S$ is a normalizing constant called the
\textbf{partition function} for vertex models, defined to be
\begin{align*}
S=\sum_{\mathcal{R}\in\Omega}\prod_{v\in V}r_v(\mathcal{R}).
\end{align*}
The sum is over all possible configurations of $G$.

Now we consider a vertex model on a $\mathbb{Z}^2$-periodic planar
graph $G$. By this we mean that $G$ is embedded in the plane so that
translations in $\mathbb{Z}^2$ act by signature-preserving
isomorphisms of $G$. Examples of such graphs are the square and
Fisher lattices, as shown in Figure 7. Let $G_n$ be the quotient of
$G$ by the action of $n\mathbb{Z}^2$. It is a finite graph embedded
into a torus. Let $S_n$ be the partition function of the vertex
model on $G_n$. The \textbf{free energy} of the infinite periodic
vertex model $G$ is defined to be
\begin{align*}
F:=\lim_{n\rightarrow\infty}\frac{1}{n^2}\log S_n
\end{align*}

\subsection{Perfect Matching}

For more information, see \cite{ke1}. A \textbf{perfect matching},
or a \textbf{dimer cover}, of a graph is a collection of edges with
the property that each vertex is incident to exactly one edge. A
graph is \textbf{bipartite} if the vertices can be 2-colored, that
is, colored black and white so that black vertices are adjacent only
to white vertices and vice versa.

To a weighted finite graph $G=(V,E,W)$, the weight $W:
E\rightarrow\mathbb{R}^{+}$ is a function from the set of edges to
positive real numbers. We define a probability measure, called the
\textbf{Boltzmann measure} $\mu$ with sample space the set of dimer
covers. Namely, for a dimer cover $D$
\begin{align*}
\mu(D)=\frac{1}{Z}\prod_{e\in D}W(e)
\end{align*}
where the product is over all edges present in $D$, and $Z$ is a
normalizing constant called the \textbf{partition function} for
dimer models, defined to be
\begin{align*}
Z=\sum_{D}\prod_{e\in D}W(e),
\end{align*}
the sum over all dimer configurations of $G$.

If we change the weight function $W$ by multiplying the edge weights
of all edges incident to a single vertex $v$ by the same constant,
the probability measure defined above does not change. So we define
two weight functions $W,W'$ to be \textbf{gauge equivalent} if one
can be obtained from the other by a sequence of such
multiplications.

The key objects used to obtain explicit expressions for the dimer
model are \textbf{Kasteleyn matrices}. They are weighted, oriented
adjacency matrices of the graph $G$ defined as follows. A
\textbf{clockwise-odd orientation} of $G$ is an orientation of the
edges such that for each face (except the infinite face) an odd
number of edges pointing along it when traversed clockwise. For a
planar graph, such an orientation always exists \cite{ka2}. The
Kasteleyn matrix corresponding to such a graph is a
$|V(G)|\times|V(G)|$ skew-symmetric matrix $K$ defined by
\begin{align*}
K_{u,v}=\left\{\begin{array}{cc}W(uv)&{\rm if}\ u\sim v\ {\rm and}\
u\rightarrow v \\-W(uv)&{\rm if}\ u\sim v\ {\rm and}\ u\leftarrow
v\\0&{\rm else}.
\end{array}\right.
\end{align*}
It is known \cite{ka1,ka2,gt,kos} that for a planar graph with a
clock-wise odd orientation, the partition function of dimers
satisfies
\begin{align*}
Z=\sqrt{\det K}.
\end{align*}

Now let $G$ be a $\mathbb{Z}^2$-periodic planar graph. Let $G_n$ be
a quotient graph of $G$, as defined before. Let
$\gamma_{x,n}(\gamma_{y,n})$ be a path in the dual graph of $G_n$
winding once around the torus horizontally(vertically). Let
$E_H(E_V)$ be the set of edges crossed by $\gamma_{x}(\gamma_y)$. We
give a \textbf{crossing orientation} for the toroidal graph $G_n$ as
follows. We orient all the edges of $G_n$ except for those in
$E_H\cup E_V$. This is possible since no other edges are crossing.
Then we orient the edges of $E_H$ as if $E_V$ did not exist. Again
this is possible since $G-E_V$ is planar. To complete the
orientation, we also orient the edges of $E_V$ as if $E_H$ did not
exist.

For $\theta,\tau\in\{0,1\}$, let $K_n^{\theta,\tau}$ be the
Kasteleyn matrix $K_n$ in which the weights of edges in $E_H$ are
multiplied by $(-1)^{\theta}$, and those in $E_V$ are multiplied by
$(-1)^{\tau}$. It is proved in \cite{gt} that the partition function
$Z_n$ of the graph $G_n$ is
\begin{align*}
Z_n=\frac{1}{2}|\mathrm{Pf}(K_n^{00})+\mathrm{Pf}(K_n^{10})+\mathrm{Pf}(K_n^{01})-\mathrm{Pf}(K_n^{11})|.
\end{align*}

Let $E_m=\{e_1=u_1v_1,...,e_m=u_mv_m\}$ be a subset of edges of
$G_n$. Kenyon \cite{ke2} proved that the probability of these edges
occurring in a dimer configuration of $G_n$ with respect to the
Boltzmann measure $P_n$ is
\begin{align*}
P_n(e_1,...,e_m)=\frac{\prod_{i=1}^{m}W(u_iv_i)}{2Z_n}|\mathrm{Pf}(K_n^{00})_E^{c}+\mathrm{Pf}(K_n^{10})_E^{c}+\mathrm{Pf}(K_n^{01})_E^{c}-\mathrm{Pf}(K_n^{11})_E^{c}|
\end{align*}
where $E_m^c=V(G_n)\setminus\{u_1,v_1,...,u_m,v_m\}$, and
$(K_n^{\theta\tau})_{E_m^c}$ is the submatrix of $K_n^{\theta\tau}$
whose lines and columns are indexed by $E_m^c$.

The asymptotic behavior of $Z_n$ when $n$ is large is an interesting
subject. One important concept is the partition function per
fundamental domain, which is defined to be
\begin{align*}
\lim_{n\rightarrow\infty}\frac{1}{n^2}\log Z_n.
\end{align*}

Let $K_1$ be a Kasteleyn matrix for the graph $G_1$. Given any
parameters $z, w$, we construct a matrix $K(z,w)$ as follows. Let
$\gamma_{x,1}$, $\gamma_{y,1}$ be the paths introduced as above.
Multiply $K_{u,v}$ by $z$ if Pfaffian orientation on that edge is
from $u$ to $v$, otherwise multiply $K_{u,v}$ by $\frac{1}{z}$, and
similarly for $w$ on $\gamma_y$. Define the \textbf{characteristic
polynomial} $P(z,w)=\det K(z,w)$. The \textbf{spectral curve} is
defined to be the locus $P(z,w)=0$.

Gauge equivalent dimer weights give the same spectral curve. That is
because after Gauge transformation, the determinant multiplies by a
nonzero constant, and the locus of $P(z,w)$ does not change.

A formula for enlarging the fundamental domain is proved in
\cite{ckp,kos}. Let $P_n(z,w)$ be the characteristic polynomial of
$G_n$ with period $1\times 1$, and $P_1(z,w)$ be the characteristic
polynomial of $G_1$, then
\begin{align*}
P_n(z,w)=\prod_{u^n=z}\prod_{v^n=w}P_1(u,v)
\end{align*}

\subsection{Matchgates, Matchgrids}

A \textbf{matchgate} $\Gamma$ is a planar local graph including a
set $X$ of external vertices, i.e. vertices located along the
boundary of the local graph. The external vertices are ordered
anti-clockwise on the boundary. $\Gamma$ is called an odd(even)
matchgate if it has an odd(even) number of vertices.

The \textbf{signature of the matchgate} is a vector indexed by
subsets of external vertices, $\{0,1\}^{|X|}$. For a subset
$X_0\subset X$, the entry of the signature at $X_0$ is the partition
function of dimer configurations on a subgraph of the matchgate. The
subgraph is obtained from the matchgate by removing all the external
vertices in $X_0$.

\begin{example}[signature of a matchgate] Assume we have a matchgate
with external vertices 1, 2, 3, and edge weights as illustrated in
the following figure:
\begin{figure}[htbp]
\centering
\scalebox{0.7}[0.7]{\includegraphics*[210,710][377,816]{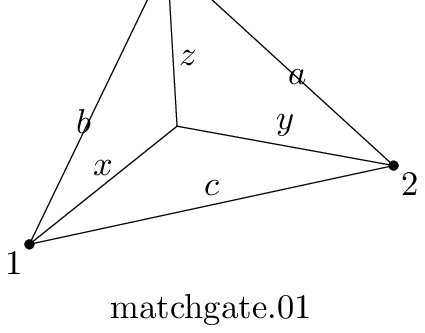}}
\caption{Matchgate}
\end{figure}

then the signature of the matchgate is
\begin{align*}
\mathfrak{m}\left(\begin{array}{c}000\\001\\010\\011\\100\\101\\110\\111\end{array}\right)=\left(\begin{array}{c}ax+by+cz\\0\\0\\x\\0\\y\\z\\0\end{array}\right)
\end{align*}
\end{example}

A \textbf{matchgrid} $M$ is a weighted planar graph consisting of a
collection of matchgates and connecting edges. Each connecting edge
has weight 1 and joins an external vertex of a matchgate with an
external vertex of another matchgate, so that every external vertex
is incident to exactly one connecting edge.

\section{Generalized Holographic Reduction}

In this section we introduce a generalized holographic algorithm to
compute the partition function of the vertex model on a finite
planar graph in terms of the partition function for perfect
matchings on a matchgrid. The idea is using a matchgate to replace
each vertex, and perform a change of basis, such that after the base
change process, the signature of a vertex becomes the signature of
the corresponding matchgate. We describe the algorithm in detail as
follows.

For a finite graph $G$, we associate to each oriented edge $e$ a
2-dimensional vector space $V_e$. To the edge with the reversed
orientation, the associated vector space is the dual space, i.e.
$V_{-e}=V_{e}^*$. Give a set basis $\{f_{e}^0,f_{e}^1\}$ for each
$V_e$, satisfying
\begin{align*}
f_{-e}^j(f_{e}^i)=\delta_{ij}.
\end{align*}
Let $v$ be a vertex with incident edges $e_{l_1},...,e_{l_{k}}$,
oriented away from $v$. The signature of the vertex model at a
vertex $v$, $r_v$, can be considered as an element in
$W_v=V_{e_{l_1}}\otimes ...\otimes V_{e_{l_k}}$. Hence $r_v$ has
representations under bases $F=\{f_e^0,f_e^1\}_{e\in E}$ and
$B=\{b_e^0,b_e^1\}_{e\in E}$ as follows
\begin{align}
r_v&=\sum_{c_{l_1},...,c_{l_k}}r_v(c_{l_1}\cdots
c_{l_k})b_{e_{l_1}}^{c_{l_1}}\otimes\cdots\otimes
b_{e_{l_k}}^{c_{l_k}}\\
&=\sum_{c_{l_1},...,c_{l_k}}r_{v,f}(c_{l_1}\cdots
c_{l_k})f_{e_{l_1}}^{c_{l_1}}\otimes\cdots\otimes
f_{e_{l_k}}^{c_{l_k}}\label{vertexsignature}
\end{align}
where $B$ are the set of standard bases for each $V_e$
\begin{align*}
b_e^0=\left(\begin{array}{c}1\\0\end{array}\right)\qquad
b_e^1=\left(\begin{array}{c}0\\1\end{array}\right)
\end{align*}
$c_{l_i}\in\{0,1\}$. $c_{l_1}\cdots c_{l_k}$ are binary sequences of
length $k$. From the definition of the signature of vertex models,
obviously $r_v(c_{l_1}\cdots c_{l_k})$ is the weight of the
configuration $c_{l_1}\cdots c_{l_k}$ at vertex $v$.

We construct a matchgrid $M$ as follows. We replace each vertex $v$
by a matchgate $\mathcal{D}_v$, such that the number of external
vertices of $\mathcal{D}_v$ is the same as the degree of $v$, and
the edges of the vertex model graph $G$ become connecting edges
joining different matchgates in the matchgrid $M$. Examples of such
replacements are illustrated in the following Figure.
\begin{figure}[htbp]
\centering
\includegraphics*[279,762][328,818]{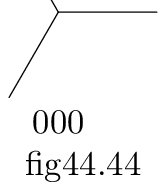}\qquad\qquad\qquad
\includegraphics*[268,733][349,819]{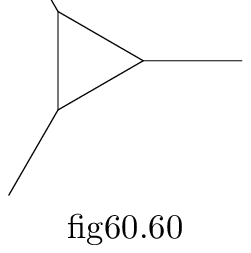}
\caption{degree-3 vertex and matchgate}
\end{figure}
\begin{figure}[htbp]
\centering
\includegraphics*[266,751][348,815]{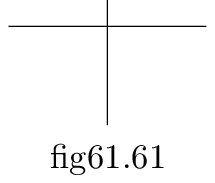}\qquad\qquad\qquad
\includegraphics*[257,725][355,817]{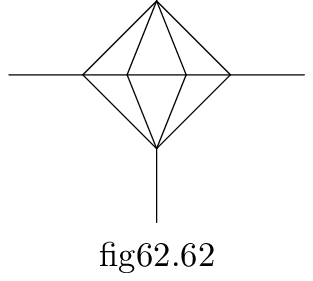}
\caption{degree-4 vertex and matchgate}
\end{figure}

If the signature $m_v$ of the matchgate $\mathcal{D}_v$ satisfies
\begin{align}
m_v=\sum_{c_{l_1}\cdots c_{l_{k}}}r_{v,f}(c_{l_1}\cdots
c_{l_k})b_{e_{l_1}}^{c_{l_1}}\otimes\cdots\otimes
b_{e_{l_k}}^{c_{l_k}},\label{matchgatesignature}
\end{align}
that is, the representation of $m_v$ under bases $B$ is the same as
the representation of $r_v$ under bases $F$, then we have the
following theorem:

\begin{theorem}Under the above base change process, the partition
function of the vertex model of $G$ is equal to the partition
function of the dimer model of $M$.
\end{theorem}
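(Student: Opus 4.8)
The plan is to recognize both partition functions as contractions of the same family of tensors and then to exploit the basis-independence of such a contraction. First I would rewrite the vertex-model partition function $S$ as a full tensor contraction. For each oriented edge $e=\{u,v\}$, vertex $u$ contributes the factor $V_e$ and vertex $v$ the dual factor $V_{-e}=V_e^*$ inside $W_u$ and $W_v$; contracting $\bigotimes_v r_v$ by applying the canonical pairing $V_e\otimes V_e^*\to\mathbb{R}$ along every edge produces a scalar. Evaluating this contraction in the standard basis $B$, the relation $b_{-e}^j(b_e^i)=\delta_{ij}$ forces a single shared bit $c_e$ on each edge, so the contraction equals $\sum_{c:E\to\{0,1\}}\prod_v r_v(c|_v)$, which is precisely the vertex-model partition function $S$.

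The heart of the argument is that this contraction is coordinate-free and may therefore be computed in the basis $F$ as well. Because the defining relation $f_{-e}^j(f_e^i)=\delta_{ij}$ makes the $F$-bases dual-compatible in exactly the same way as $B$, the pairing along each edge is again $\delta$, so the identical contraction evaluated in $F$ equals $\sum_c\prod_v r_{v,f}(c|_v)$, with $r_{v,f}$ the $F$-coordinates of $r_v$ recorded in (\ref{vertexsignature}). Since the two evaluations compute the same basis-independent scalar, I obtain the holographic identity $S=\sum_c\prod_v r_{v,f}(c|_v)$; nothing beyond the invariance of contraction under a simultaneous change of basis on each dual pair $(V_e,V_{-e})$ is used.

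Next I would identify this sum with the dimer partition function $Z(M)$ of the matchgrid. Every perfect matching $D$ of $M$ either uses or omits each connecting edge; recording $c_e=1$ when the connecting edge associated with $e$ lies in $D$ defines a map $c:E\to\{0,1\}$, and the two external vertices joined by that edge are matched externally exactly when $c_e=1$. Because connecting edges carry weight $1$, summing over all matchings with a fixed $c$ factorizes over the matchgates, each factor being the weighted count of internal matchings of $\mathcal{D}_v$ after deleting its externally matched external vertices, i.e. exactly the signature entry $m_v(c|_v)$. Hence $Z(M)=\sum_c\prod_v m_v(c|_v)$, and the defining condition (\ref{matchgatesignature}), which says $m_v=r_{v,f}$ in coordinates, turns this into $\sum_c\prod_v r_{v,f}(c|_v)=S$.

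The step I expect to demand the most care is the matchgrid factorization: one must verify that, once $c$ is fixed, the internal matchings of distinct matchgates are chosen independently, that deleting the externally matched external vertices produces exactly the subgraph whose dimer partition function is $m_v(c|_v)$, and that the bit $c_e$ is shared consistently by the two gates a connecting edge joins. Each of these is routine, but they must be pinned down so that the factorization is unambiguous. With the contraction expressed in both bases and the matchgrid sum factored, the equality $S=Z(M)$ follows by comparing the two expressions term by term.
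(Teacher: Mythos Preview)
Your proposal is correct and follows essentially the same approach as the paper: both arguments identify the partition functions as the image of $\bigotimes_v r_v$ (respectively $\bigotimes_v m_v$) under the contraction map $\Phi$ induced by the canonical pairings $V_e\otimes V_e^*\to\mathbb{C}$, and then invoke the basis-independence of that pairing together with the hypothesis that $m_v$ in the $B$-basis equals $r_v$ in the $F$-basis. Your write-up is simply more explicit about the two endpoints---spelling out why the contraction in $B$ yields $S$ and why the matchgrid partition function factorizes as $\sum_c\prod_v m_v(c|_v)$---whereas the paper asserts these identifications in a single sentence.
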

\begin{proof}There is a natural mapping $\Phi$ from $\otimes_{v\in
V}W_v$ to $\mathbb{C}$ induced by $\otimes_{e\in E}\phi_e$, where
$\phi_e$ is the natural pairing from $V_e\otimes V_e^*$ to
$\mathbb{C}$. Note that in $\otimes_{v\in V}W_v$, each $V_e$ and
$V_{-e}$ appear exactly once. Since the representation of $m_v$
under bases $B$ is the same as the representation of $r_v$ under
bases $F$, we have
\begin{align}
\Phi(\otimes_{v\in V}m_v)=\Phi(\otimes_{v\in V}r_v).\label{holant}
\end{align}
(\ref{holant}) follows from the fact that each $\phi_e: V_e\otimes
V_e^*\rightarrow \mathbb{C}$ is independent of bases as long as we
choose the dual basis for the dual vector space. However, the left
side of (\ref{holant}) is exactly the partition function of the
dimer model of $M$, while the right side of (\ref{holant}) is
exactly the partition function of the vertex model of $G$.
\end{proof}

Define the \textbf{base change matrix at edge $e$},
$T_e=\left(\begin{array}{cc}f_e^0&f_e^1\end{array}\right)$.  The
\textbf{base change matrix at vertex $v$}, $T_{v}$, acting on $W_v$
by multiplication, is defined to be
\begin{align*}
T_v=\otimes_{\{e|e\ \mathrm{is\ incident\ to\ }v,\ \mathrm{and\
oriented\ away\ from\ }v\}}T_e.
\end{align*}

In order for a vertex model problem to be reduced to dimer model
problem, one sufficient condition is that at each vertex, the
signature of the vertex model $r_v$ under the bases $F$ is the same
as the signature of the matchgate $m_v$ under the standard bases.
Namely,
\begin{align}
T_vm_v=r_v. \label{rlbcd}
\end{align}
(\ref{rlbcd}) follows directly from (\ref{vertexsignature}) and
(\ref{matchgatesignature}).

\begin{example}
Consider the graph in Figure 3 with standard dimer signature
\begin{figure}[htbp]
\centering
\includegraphics*[275,770][335,818]{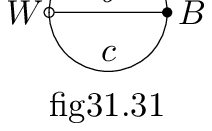}
\caption{}
\end{figure}
\begin{eqnarray*}
r_w=r_b&=&\left(\begin{array}{cccccccc}r_{000}&r_{001}&r_{010}&r_{011}&r_{100}&r_{101}&r_{110}&r_{111}\end{array}\right)^t\\\\
&=&\left(\begin{array}{cccccccc}0&1&1&0&1&0&0 &0\end{array}\right)^t
\end{eqnarray*}

Define the base change matrix on edges $a,b,c$ to be
\begin{eqnarray*}
T_a=T_b=T_c=\left(\begin{array}{cc}0&1\\1&0\end{array}\right),
\end{eqnarray*}
By definition, $T_w=T_a\otimes T_b\otimes T_c$. Note that
$T_v=(T_v^{t})^{-1}=T_b$  After the base change, we have the
standard loop signature
\begin{eqnarray*}
\tilde{r}_w=\tilde{r}_b=T_w\cdot r_w=T_b\cdot r_b\\
=\left(\begin{array}{cccccccc}0&0&0&1&0&1&1&0\end{array}\right)^t
\end{eqnarray*}

For instance, after the base change, the dimer configuration $001$
with only $c$-edge occupied becomes
\begin{eqnarray*}
&&T_w\cdot(b_0\otimes b_0\otimes b_1)\\
&&=\left(\begin{array}{cccccccc}0&0&0&0&0&0&1&0\end{array}\right),
\end{eqnarray*}
which is the configuration $110$, the loop configuration with
$a$-edge and $b$-edge occupied.
\end{example}

However, since the number of vertices in a matchgate is either even
or odd, at a vertex of degree $d$, the signature of a matchgate must
be a $2^{d-1}$ dimensional subspace of $\mathbb{C}^{2^d}$, with
those $2^{d-1}$ entries being 0. These are the entries correspond to
the partition function of dimer configurations on a subgraph of the
matchgate with an odd number of vertices, see example 2.1. This is
the \textbf{parity constraint}. As a result, by dimension count we
can see that it is not possible to use holographic algorithm to
reduce all vertex models into dimer models. To characterize the
special class of vertex models applicable to holographic reduction,
we introduce the following definition.

\begin{definition}
A network of relations on a finite graph is \textbf{realizable}, if
there exists a system of bases reducing the model to the set of
perfect matchings of a matchgrid.
\end{definition}

\begin{definition}
A network of relations is \textbf{bipartite realizable}, if it is
realizable and the corresponding matchgrid is a bipartite graph.
\end{definition}

\noindent\textbf{Remark.} The above generalizes Valiant's algorithm
\cite{val} in the sense that our basis can be different from edge to
edge. As a consequence, our approach results in an enlargement of
the dimension of realizable submanifold, which will be shown in the
next section.

\subsection{Realizability} We are interested in periodic vertex models
on the honeycomb lattice with period $n\times n$. The quotient graph
can be embedded on a torus $\mathbb{T}^2=S^1\times S^1$.  We
classify all the edges into $a$-type, $b$-type and $c$-type
according to their direction, and assume $b$-type and $c$-type edges
have the same direction as the two basic homology cycles $(1,0)$ and
$(0,1)$ of torus, respectively, see Figure 6.
\begin{figure}[htbp]
\centering
\scalebox{0.8}[0.8]{\includegraphics*[176,652][430,820]{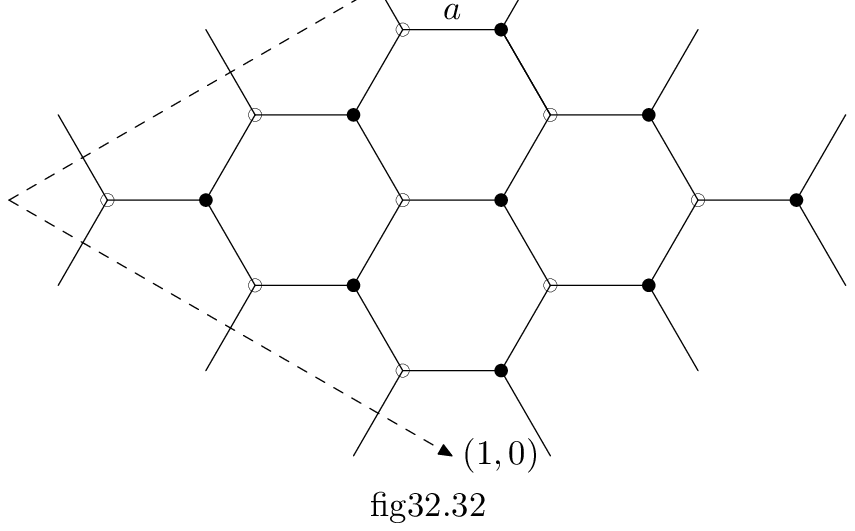}}
\caption{Periodic Honeycomb Lattice}
\end{figure}

Assume the vertex model is realizable, then each corresponding
matchgate may have either an even or an odd number of vertices. By
enlarging the fundamental domain, we can always assume that there
are an even number of matchgates with an even number of vertices.
Then by permuting rows of matrices on a finite number of edges, we
can always have all the matchgates having an odd number of vertices.
For example, assume we have a pair of adjacent matchgates with
matrix on the connecting edge $e$,
$T_e=\left(\begin{array}{c}t_0\\t_1\end{array}\right)$($T_e$ here is
actually the inverse of the base change matrix defined before ), as
illustrated in Figure 7.
\begin{figure}[htbp]
\centering
\scalebox{0.7}[0.7]{\includegraphics*[210,710][400,800]{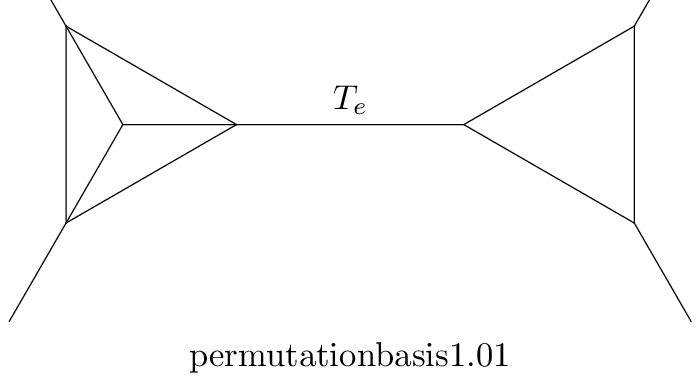}}
\caption{Permutation of Basis Vectors 1}
\end{figure}
Then if we assume
$\tilde{T}_e=\left(\begin{array}{c}t_1\\t_0\end{array}\right)$
obtained by permutating two basis vectors of $T_e$, we actually
interchange the roles of 0 and 1 at the corresponding digit of the
binary sequences as indices of signatures of the matchgate. Namely,
assume $m_v=T_vr_v$, and $\tilde{m}_v=\tilde{T}_vr_v$, where
$\tilde{T}_v=\tilde{T}_a\otimes T_b\otimes T_c$. Then for any binary
sequence $c_1c_2c_3$, the entry ${m_v}_{\{c_1c_2c_3\}}$ is the same
as $\tilde{m_v}_{\{(1-c_1)c_2c_3\}}$, according to equation
(\ref{rlbcd}). If originally we have an even matchgate at $v$, by
parity constraint, the $001,010,100,111$ entries of $m_v$ are zero.
After the permutation of basis vectors, $\tilde{m}_v$ will have
$101,110,000,011$ entries to be zero. Hence $\tilde{m}_v$ has to be
an odd matchgate, as illustrate in Figure 8.
\begin{figure}[htbp]
\centering
\scalebox{0.7}[0.7]{\includegraphics*[210,710][400,800]{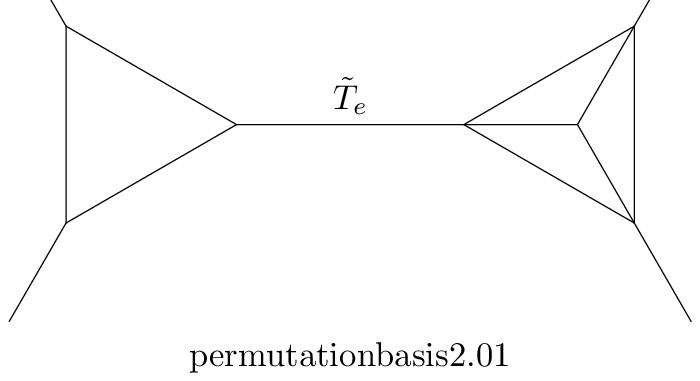}}
\caption{Permutation of Basis Vectors 2}
\end{figure}
By finitely many times of such permutations, we can move two even
matchgates to adjacent positions. Then we permutate the basis on the
connecting edge, we can decrease the number of even matchgates by 2.
Since we have an even number of even matchgates in total, if we
repeat the process, in the end, all matchgates will be odd.

\begin{assumption}From now on, we make the following assumption:
\begin{itemize}
\item All entries of base change matrices are nonzero.
\item All entries of the matchgate signature are nonzero.
\end{itemize}
\end{assumption}

Since the honeycomb lattice is a bipartite graph, we can color all
the vertices in black and white such that black vertices are
adjacent only to white vertices and vice versa. Assume vertex
signatures at black and white vertices are as follows:
\begin{align}
r_w^{ij}&=\left(\begin{array}{cccccccc}x_{000}^{ij}&x_{001}^{ij}&x_{010}^{ij}&x_{011}^{ij}&x_{100}^{ij}&x_{101}^{ij}&x_{110}^{ij}&x_{111}^{ij}\end{array}\right)^t\label{relationsignaturex}\\
r_b^{ij}&=\left(\begin{array}{cccccccc}y_{000}^{ij}&y_{001}^{ij}&y_{010}^{ij}&y_{011}^{ij}&y_{100}^{ij}&y_{101}^{ij}&y_{110}^{ij}&y_{111}^{ij}\end{array}\right)^t\label{relationsignaturey}
\end{align}
where $(i,j),1\leq i\leq n,1\leq j\leq n$ is the row and column
index of white and black vertices. Assume $000=1$, then entries of
signatures are indexed from 1 to 8. Associate to an $a$-edge,
$b$-edge and $c$-edge incident to a white or black vertices, a basis
\begin{eqnarray*}
T_a^{(i,j,k)}=\left(\begin{array}{cc}n_{01}^{(i,j,k)}&p_{01}^{(i,j,k)}\\n_{11}^{(i,j,k)}&p_{11}^{(i,j,k)}\end{array}\right)=\left(\begin{array}{cc}\textbf{n}_1^{(i,j,k)}& \textbf{p}_1^{(i,j,k)}\end{array}\right)\\
T_b^{(i,j,k)}=\left(\begin{array}{cc}n_{02}^{(i,j,k)}&p_{02}^{(i,j,k)}\\n_{12}^{(i,j,k)}&p_{12}^{(i,j,k)}\end{array}\right)=\left(\begin{array}{cc}\textbf{n}_2^{(i,j,k)}& \textbf{p}_2^{(i,j,k)}\end{array}\right)\\
T_c^{(i,j,k)}=\left(\begin{array}{cc}n_{03}^{(i,j,k)}&p_{03}^{(i,j,k)}\\n_{13}^{(i,j,k)}&p_{13}^{(i,j,k)}\end{array}\right)=\left(\begin{array}{cc}\textbf{n}_3^{(i,j,k)}& \textbf{p}_3^{(i,j,k)}\end{array}\right)\\
\end{eqnarray*}
where $k=1$ for white vertices and $k=0$ for black vertices.

By the realizability equation (\ref{rlbcd}), and the fact that all
matchgates are odd, the 1st, 4th, 6th, 7th entries of the matchgate
signatures $m_v$ are 0, so we have a system of 4 algebraic equations
at each vertex $v$. For example, at each black vertex, we have
\begin{align}\label{blackrealizability}
m_{i,j,0}=T_a^{i,j,0}\otimes T_b^{i,j,0}\otimes T_c^{i,j,0}\cdot
r_b^{ij}
\end{align}
and the fact that the 1st entry of $m_{i,j,0}$ is 0 gives the
following equation
\begin{align*}
 n_{01}n_{02}n_{03}y_1+n_{01}n_{02}p_{03}y_2+n_{01}p_{02}n_{03}y_3+n_{01}p_{02}p_{03}y_4\\
 +p_{01}n_{02}n_{03}y_5+p_{01}n_{02}p_{03}y_6+p_{01}p_{02}n_{03}y_7+p_{01}p_{02}p_{03}y_8=0
\end{align*}
That the 4th, 6th, 7th entries of $m_{i,j,0}$ are 0 give three other
similar equations. Similarly at each white vertex, we have
\begin{equation}\label{whiterealizability}
m_{i,j,1}=[(T_a^{i,j,1}\otimes T_b^{i,j,1}\otimes
T_c^{i,j,1})^t]^{-1}\cdot r_w^{ij}
\end{equation}
the same process gives a system of 4 equations at the white vertex.

Let
$a_{lm}^{ij}=\frac{n_{lm}^{(i,j,1)}}{p_{lm}^{(i,j,1)}}$,$b_{lm}^{ij}=\frac{n_{lm}^{(i,j,0)}}{p_{lm}^{(i,j,0)}}$,
$l\in\{0,1\}$,$m\in\{1,2,3\}$. Then the equations we get are linear
with respect to $a_{01}^{ij}$, $a_{11}^{ij}$, $b_{01}^{ij}$,
$b_{11}^{ij}$, which can be solved explicitly.
\begin{eqnarray}
a_{01}^{ij}=\frac{r^{ij}\cdot u^{ij}}{r^{ij}\cdot v^{ij}}=\frac{q^{ij}\cdot u^{ij}}{q^{ij}\cdot v^{ij}}\label{expbe1}\\
a_{11}^{ij}=\frac{s^{ij}\cdot u^{ij}}{s^{ij}\cdot v^{ij}}=\frac{p^{ij}\cdot u^{ij}}{p^{ij}\cdot v^{ij}}\\
b_{01}^{ij}=-\frac{\xi^{ij}\cdot t^{ij}}{\xi^{ij}\cdot
w^{ij}}=-\frac{\kappa^{ij}\cdot t^{ij}}{\kappa^{ij}\cdot w^{ij}}\\
b_{11}^{ij}=-\frac{\sigma^{ij}\cdot t^{ij}}{\sigma^{ij}\cdot
w^{ij}}=-\frac{\lambda^{ij}\cdot t^{ij}}{\lambda^{ij}\cdot
w^{ij}}\label{expbe2}
\end{eqnarray}
where
\begin{eqnarray}\label{variable}
\left\{\begin{array}{ll}p^{ij}=\left(\begin{array}{cccc}a_{02}^{ij}a_{03}^{ij}&
a_{02}^{ij}& a_{03}^{ij}& 1\end{array}\right)&
q^{ij}=\left(\begin{array}{cccc}a_{02}^{ij}a_{13}^{ij}& a_{02}^{ij}&
a_{13}^{ij}& 1\end{array}\right)\\
r^{ij}=\left(\begin{array}{cccc}a_{12}^{ij}a_{03}^{ij}& a_{12}^{ij}&
a_{03}^{ij}& 1\end{array}\right)&
s^{ij}=\left(\begin{array}{cccc}a_{12}^{ij}a_{13}^{ij}& a_{12}^{ij}&
a_{13}^{ij}& 1\end{array}\right)\\
\xi^{ij}=\left(\begin{array}{cccc}b_{02}^{ij}b_{03}^{ij}&
b_{02}^{ij}& b_{03}^{ij}& 1\end{array}\right)&
\sigma^{ij}=\left(\begin{array}{cccc}b_{02}^{ij}b_{13}^{ij}&
b_{02}^{ij}& b_{13}^{ij}& 1\end{array}\right)\\
\lambda^{ij}=\left(\begin{array}{cccc}b_{12}^{ij}b_{03}^{ij}&
b_{12}^{ij}& b_{03}^{ij}& 1\end{array}\right)&
\kappa^{ij}=\left(\begin{array}{cccc}b_{12}^{ij}b_{13}^{ij}&
b_{12}^{ij}& b_{13}^{ij}& 1\end{array}\right)\\
u^{ij}=\left(\begin{array}{cccc}x_4^{ij}&-x_3^{ij}&-x_2^{ij}&
x_1^{ij}\end{array}\right)&
v^{ij}=\left(\begin{array}{cccc}x_8^{ij}&-x_7^{ij}&-x_6^{ij}&
x_5^{ij}\end{array}\right)\\
w^{ij}=\left(\begin{array}{cccc}y_1^{ij}&y_2^{ij}&y_3^{ij}&
y_4^{ij}\end{array}\right)&
t^{ij}=\left(\begin{array}{cccc}y_5^{ij}&y_6^{ij}&y_7^{ij}&
y_8^{ij}\end{array}\right)\end{array}\right.
\end{eqnarray}

Since $a_{0m}\neq a_{1m}$ and $b_{0m}\neq b_{1m}$, after clearing
denominators and some reducing, for any $(i,j)$, equations
(\ref{expbe1})-(\ref{expbe2}) are equivalent to
\begin{eqnarray}
2y_2y_8-2y_6y_4+(b_{13}+b_{03})(y_1y_8+y_2y_7-y_5y_4-y_6y_3)+2(y_1y_7-y_5y_3)b_{13}b_{03}=0\label{s1}\\
-2x_1x_7+2x_5x_3+(a_{13}+a_{03})(x_2x_7+x_1x_8-x_4x_5-x_6x_3)+2(x_6x_4-x_2x_8)a_{13}a_{03}=0\label{s2}\\
2y_8y_3-2y_4y_7+(b_{12}+b_{02})(y_1y_8+y_3y_6-y_4y_5-y_7y_2)+2b_{12}b_{02}(y_6y_1-y_2y_5)=0\label{s3}\\
2x_5x_2-2x_1x_6+(x_1x_8-x_2x_7-x_5x_4+x_3x_6)(a_{12}+a_{02})+2(x_7x_4-x_3x_8)a_{02}a_{12}=0\label{s4}
\end{eqnarray}

If we solve $a_{02},a_{12},b_{02},b_{12}$ explicitly, a similar
process yields
\begin{eqnarray}
2x_3x_2-2x_1x_4+(x_5x_4+x_1x_8-x_2x_7-x_3x_6)(a_{01}+a_{11})+2(x_6x_7-x_5x_8)a_{01}a_{11}=0\label{s5}\\
2y_6y_7-2y_5y_8+(y_3y_6+y_2y_7-y_1y_8-y_4y_5)(b_{01}+b_{11})+2(y_2y_3-y_1y_4)b_{01}b_{11}=0\label{s6}
\end{eqnarray}

We get two equations per edge, one involving the a-variables, the
other involving the b-variables. But a-variables and b-variables are
actually the same thing for each single edge. From equation
(\ref{s1}),(\ref{s2}), we can solve basis entries
$a_{l,3}^{i,j}=b_{l,3}^{i,j-1}$; from equation
(\ref{s3}),(\ref{s4}), we can solve basis entries
$a_{l,2}^{i,j}=b_{l,2}^{i-1,j}$. Finally from
(\ref{expbe1})-(\ref{expbe2}), we can solve $a_{l,1}$ and $b_{l,1}$,
the only constraint left will be $a_{l1}^{i,j}=b_{l,1}^{i,j}$, which
is two polynomial equations with respect to relation signature at
each a-edge. Together with Theorem 2 in appendix, we have the
following theorem

\begin{theorem}
Under assumption 3.5, the realizable signatures on the $n\times n$
periodic honeycomb lattice form a $14n^2$ dimensional submanifold of
the $16n^2$ dimensional manifold of all positive signatures; the
bipartite realizable signatures on the $n\times n$ periodic
honeycomb lattice form a $12n^2$ dimensional submanifold of the
$16n^2$ dimensional manifold of all positive signatures.
\end{theorem}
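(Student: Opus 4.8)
The plan is to exhibit the realizable locus as the zero set of an explicit system of polynomial equations in the signature coordinates and then apply the regular value theorem. First I would pin down the ambient dimension: the quotient $G_n$ carries $n^2$ white and $n^2$ black degree-$3$ vertices, each with an $8$-entry positive signature as in (\ref{relationsignaturex}),(\ref{relationsignaturey}), so the space of all positive signatures is an open subset of $\mathbb{R}^{16n^2}$, confirming the ambient dimension. Realizability is the existence of base change matrices making every matchgate odd, which under Assumption 3.5 is exactly the system (\ref{blackrealizability}),(\ref{whiterealizability}); passing to the ratios $a_{lm}^{ij},b_{lm}^{ij}$ this becomes (\ref{expbe1})--(\ref{expbe2}) together with (\ref{s1})--(\ref{s6}).

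Next I would eliminate the basis ratios stage by stage, treating the signature entries as parameters. Reading (\ref{s1}) and (\ref{s2}) at the two endpoints of a fixed $c$-edge, together with the edge identification $a_{l3}^{ij}=b_{l3}^{i,j-1}$, gives two quadratics in the two unknowns $a_{03}^{ij}=b_{03}^{i,j-1}$ and $a_{13}^{ij}=b_{13}^{i,j-1}$; wherever the Jacobian of this pair in those two unknowns is nonsingular, the implicit function theorem expresses the $c$-edge ratios as smooth functions of the neighbouring signatures. Applying the same reasoning to (\ref{s3}),(\ref{s4}) solves the $b$-edge ratios through $a_{l2}^{ij}=b_{l2}^{i-1,j}$, and (\ref{expbe1})--(\ref{expbe2}) then give the remaining ratios $a_{l1}^{ij},b_{l1}^{ij}$ as explicit ratios of linear forms in the already-solved quantities. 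After this sweep all $12n^2$ basis ratios are eliminated, and the only conditions not yet imposed are the self-consistencies $a_{l1}^{ij}=b_{l1}^{ij}$ for $l\in\{0,1\}$ along each $a$-edge.

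This leaves exactly two polynomial equations per $a$-edge, hence a system $F=(F_1,\dots,F_{2n^2})=0$ in the $16n^2$ signature coordinates whose zero set is the realizable locus. To obtain a smooth submanifold of codimension $2n^2$, and therefore of dimension $14n^2$, I would check that $0$ is a regular value of $F$, that is, that the differentials $dF_1,\dots,dF_{2n^2}$ are everywhere independent on the positive orthant. This independence is the crux of the argument and is precisely what Theorem 2 of the appendix supplies; the subtle point is that although the $b$- and $c$-edge eliminations couple adjacent vertices, each $a$-edge constraint still depends on a pair of signature directions private to its own two endpoints, so the full $2n^2\times 16n^2$ Jacobian has maximal rank. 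The preimage theorem then yields the first assertion.

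For bipartite realizability I would run the identical elimination while additionally demanding that the matchgrid be $2$-colourable. This excludes the odd-cycle matchgates arising generically and confines each matchgate to the bipartite family, which under Assumption 3.5 contributes one further algebraic relation per vertex, i.e. a system $G=0$ of an additional $2n^2$ equations. The remaining work is again a rank computation: that the joint differential of $(F,G)$ has rank $4n^2$, so that the bipartite realizable locus is cut out transversally as a submanifold of codimension $4n^2$ and dimension $12n^2$. I expect this to be the main obstacle, since one must verify not only that the bipartite conditions are individually nondegenerate but that they remain independent of the realizability conditions $F$ already imposed; as in the first part, this maximal-rank statement is exactly the content of Theorem 2 of the appendix.
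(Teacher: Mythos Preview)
Your elimination scheme is essentially the paper's own: solve the $c$- and $b$-edge ratios from (\ref{s1})--(\ref{s4}) using the edge identifications, recover the $a$-edge ratios from (\ref{expbe1})--(\ref{expbe2}), and observe that the only surviving conditions are the consistencies $a_{l1}^{ij}=b_{l1}^{ij}$, two per $a$-edge, hence $2n^2$ polynomial constraints on the $16n^2$ signature coordinates. For the bipartite case the paper likewise adds one further relation per vertex. So the architecture of your argument matches the paper.

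Where you go wrong is in what you attribute to ``Theorem~2 of the appendix''. That theorem does \emph{not} supply any rank or independence statement for the differentials $dF_k$; it is simply the assertion that bipartite realizability is equivalent to realizability together with the explicit quartic
\[
v_1^2v_8^2+v_2^2v_7^2+v_3^2v_6^2+v_4^2v_5^2-2v_1v_8v_2v_7-\cdots+4v_2v_3v_5v_8=0
\]
at every vertex. This is the source of the additional $2n^2$ equations and hence of the drop from $14n^2$ to $12n^2$, not a transversality lemma. Likewise Theorem~A.1 records the explicit realizability equation at each vertex but says nothing about Jacobian rank. The paper in fact never verifies a regular-value condition; once the constraints are written down explicitly it reads off the dimension by equation counting. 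So the step you flag as ``the crux'' and claim is handled by the appendix is neither proved there nor addressed in the paper; if you want to make the submanifold statement rigorous at the level of the preimage theorem, that rank computation is something you would have to carry out yourself.

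A minor point: your description of the bipartite constraint as ``excluding odd-cycle matchgates'' is not quite the mechanism. What Theorem~A.2 actually does is force the $\{111\}$ entry of each matchgate signature to vanish (equivalently, degenerate the triangle gadget), and the displayed quartic is the signature-level translation of that single scalar condition per vertex.
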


For the exact realizability condition, see the appendix.

Under assumption 3.5, the weight \{111\} is non-vanishing at each
matchgate. Since the probability measure will not change if all
entries of the signature at one vertex are multiplied by a constant,
we can assume at each matchgate, the weight of $\{111\}$ is 1.
Therefore, we have

\begin{theorem}
A realizable vertex model on a finite hexagonal lattice $G$ can
always be transformed to dimers on $M$, a Fisher graph as shown in
Figure 9, with the partition function of the vertex model of $G$
equal to the partition function of the dimer model of $M$, up to
multiplication of a constant.
\end{theorem}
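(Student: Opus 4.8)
The plan is to reduce the global statement to a single-vertex computation and then invoke the base-change identity of Theorem 3.1. By Theorem 3.6 the hypothesis of realizability provides base-change matrices $T_e$ at every edge, and the permutation argument preceding Assumption 3.5 lets me assume that all of the resulting matchgates are odd. For an odd matchgate the signature $m_v=T_vr_v$ is supported only on the odd-parity configurations, so its $000,011,101,110$ entries vanish and only the $001,010,100,111$ entries survive; under Assumption 3.5 these four surviving entries are all nonzero, and positive whenever $r_v$ is positive. The strategy is then threefold: exhibit one explicit planar odd matchgate realizing each admissible $m_v$, assemble these gadgets into a matchgrid $M$, and apply Theorem 3.1 to equate the two partition functions.

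First I would take the gadget to be a single triangle on the three external vertices, with internal edge weights $w_{12},w_{13},w_{23}$, which is exactly the building block of the Fisher graph of Figure 9. Its signature is immediate from the definition of a matchgate signature: removing an even-sized subset of external vertices leaves a subgraph on an odd number of vertices, which admits no perfect matching, forcing the $000,011,101,110$ entries to be $0$; removing a single external vertex leaves the opposite internal edge, contributing its weight; and removing all three leaves the empty graph with partition function $1$. Thus the triangle signature is $(0,\,w_{12},\,w_{13},\,0,\,w_{23},\,0,\,0,\,1)^t$ in the ordering $000,\dots,111$.

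Next I would match this to $m_v$. Since Assumption 3.5 makes the $\{111\}$ entry of $m_v$ nonzero, I normalize by dividing the whole signature at $v$ by that entry, which by the invariance of the probability measure under rescaling the signature at one vertex changes the partition function only by a nonzero constant; after normalization the $\{111\}$ entry equals $1$, matching the triangle. Reading off the three remaining entries then determines $w_{12},w_{13},w_{23}$ uniquely, and positivity of the signature yields positive triangle weights. Because the map from the three edge weights to the three admissible off-$\{111\}$ entries is a bijection, every admissible odd signature is realized by exactly one triangle, so no further obstruction arises at the level of a single vertex.

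Finally I would assemble the gadgets: replace each hexagonal vertex by its triangle and turn each hexagonal edge into a weight-$1$ connecting edge joining external vertices of adjacent triangles, inheriting the anticlockwise ordering of external vertices from the cyclic order of the $a,b,c$ edges around each vertex. The result is precisely the Fisher graph $M$ of Figure 9, and Theorem 3.1 then identifies the dimer partition function of $M$ with the vertex-model partition function built from the signatures $m_v$; collecting the per-vertex normalizing factors from the previous step yields the claimed equality up to a multiplicative constant. I expect the main difficulty to lie not in the single-vertex computation but in the two global consistency checks: verifying that the triangle-to-signature map is genuinely onto all admissible odd signatures, so that the base-change matrices produced by Theorem 3.6 are always realized by some triangle rather than forcing a larger gadget, and confirming that the locally defined triangles patch together into a well-defined planar graph carrying a coherent clockwise-odd orientation, so that Theorem 3.1 applies to $M$ as a whole.
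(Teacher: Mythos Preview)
Your argument is correct and follows essentially the same route as the paper: reduce to odd matchgates via the permutation argument, invoke Assumption~3.5 to make the $\{111\}$ entry nonzero, normalize it to $1$ (changing the partition function by a constant), and read off the three triangle edge weights from the remaining nonzero entries; then apply Theorem~3.1. One small correction: your parenthetical claim that the surviving entries of $m_v$ are ``positive whenever $r_v$ is positive'' is not justified and in fact need not hold (cf.\ Lemma~4.1, where the base-change entries necessarily mix signs), but Theorem~3.7 does not assert positivity of the Fisher-graph weights, so this does not affect the proof.
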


\begin{figure}[htbp]
  \centering
\includegraphics*[115,583][500,825]{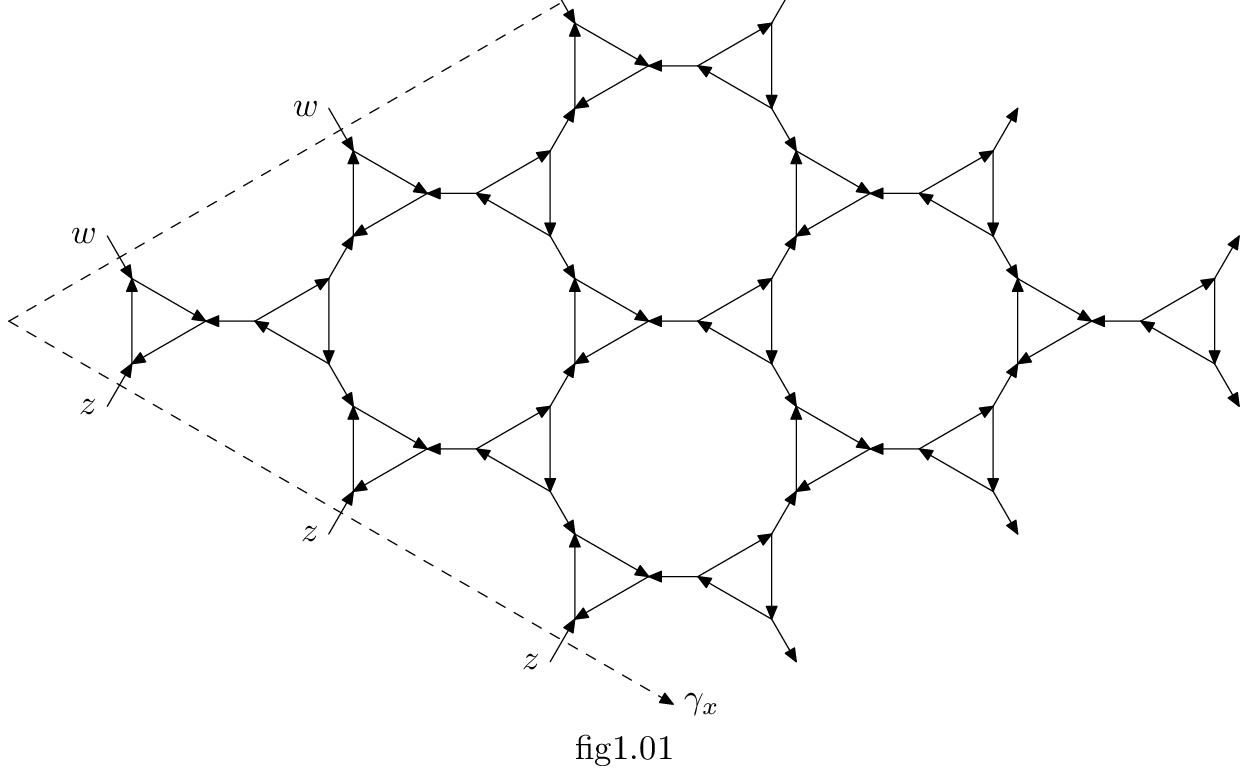}
   \caption{Matchgrid with $3\times$ 3 Fundamental Domain }
\end{figure}

It is possible to construct a matchgrid with different weights to
produce the same vertex model. Since the holographic reduction is an
invertible process, by which we mean that because the base change
matrices are nonsingular, we can always achieve the matchgate
signature from the vertex signature and vice versa, there is an
equivalence relation on dimer models producing the same vertex
model.

\begin{definition}
A vertex model is holographically equivalent to a dimer model on a
matchgrid, if it can be reduced to the dimer model on the matchgrid
using the holographic algorithm, in such a way that partition
function of the vertex model corresponds to perfect matchings of the
matchgrid. Two matchgrids are holographically equivalent, if they
are holographically equivalent to the same vertex model.
\end{definition}

\begin{proposition}
Holographically equivalent matchgrids give rise to gauge equivalent
dimer models. Therefore, they have the same probability measure.
\end{proposition}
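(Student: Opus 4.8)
The plan is to trace the freedom in the holographic reduction back to a gauge transformation of the dimer weights, and then invoke the fact recorded in Section 2.2 that gauge equivalent weights define the same Boltzmann measure. Suppose $M$ and $M'$ are two matchgrids, both holographically equivalent to the same vertex model $(G,\{r_v\})$. Each arises from a choice of basis $\{f_e^0,f_e^1\}$ (resp.\ $\{g_e^0,g_e^1\}$) on every edge $e$, the matchgate signatures $m_v$ (resp.\ $m_v'$) being the coordinates of the \emph{same} fixed signature $r_v$ in these bases, subject to the parity (oddness) constraints. The first step is to compare the two bases. By the explicit solution of the realizability equations (\ref{expbe1})--(\ref{s6}) in Section 3.1, the fixed signatures $\{r_v\}$ determine all the ratios $a_{lm}^{ij},b_{lm}^{ij}$, i.e.\ the \emph{directions} of the two basis vectors on each edge; the only remaining freedom is to rescale $f_e^0$ and $f_e^1$ independently (together with a discrete interchange of the two, coming from the two roots of the quadratics in (\ref{s1})--(\ref{s4})). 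Hence, up to this discrete ambiguity, $g_e^i=\lambda_e^i f_e^i$ for nonzero scalars $\lambda_e^0,\lambda_e^1$; equivalently the base-change matrix on $e$ is altered by multiplication by $D_e=\mathrm{diag}(\lambda_e^0,\lambda_e^1)$.

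Second, I would compute the effect of this rescaling on the matchgate signatures. Because coordinates transform inversely to basis vectors, rescaling $f_e^0,f_e^1$ by $\lambda_e^0,\lambda_e^1$ multiplies the slot-$e$ entries of $m_v$ by $(\lambda_e^{0})^{-1}$ or $(\lambda_e^{1})^{-1}$ according to whether the $e$-digit is $0$ or $1$; on the dual (white) side of $e$ the dual basis scales inversely, so the slot-$e$ entries are instead multiplied by $\lambda_e^0$ or $\lambda_e^1$. In either case the induced slot transformation is \emph{diagonal}: writing $S_e$ for the $2\times 2$ matrix acting on the $e$-slot, we obtain $m_v'=\big(\otimes_{e\ni v}S_e\big)m_v$ with each $S_e$ diagonal, the two endpoints of $e$ receiving mutually inverse diagonal matrices. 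The point is simply that diagonal matrices are stable under both transpose and inverse, so the bipartite primal/dual bookkeeping of the reduction does not spoil diagonality.

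Third, I would realize each diagonal slot transformation by gauge moves on $M$. For a connecting edge $e=x_bx_w$ joining external vertices $x_b,x_w$ of two matchgates, multiplying all gadget-edges incident to $x_b$ by $\mu$ multiplies exactly those signature entries with $e$-digit $0$ by $\mu$ (the terms in which $x_b$ is matched internally) and fixes the digit-$1$ entries, i.e.\ it realizes $\mathrm{diag}(\mu,1)$ on the $e$-slot of that matchgate and the identity on all other slots. To implement the required $S_e$ on both gadgets while keeping the connecting edge at weight $1$, I pair the two endpoints: a gauge factor $\mu_e$ at $x_b$ and $\mu_e^{-1}$ at $x_w$ leaves $W(e)=1$ unchanged and produces $\mathrm{diag}(\mu_e,1)$ and $\mathrm{diag}(\mu_e^{-1},1)$ on the two slots, while the remaining overall scalars are absorbed by gauge moves at interior vertices of the gadgets (and in any case do not affect the measure). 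Carrying this out for every edge turns the weights of $M$ into those of $M'$ by a sequence of gauge moves, so the two dimer models are gauge equivalent and therefore, by Section 2.2, define the same Boltzmann measure.

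Finally, the discrete interchange $f_e^0\leftrightarrow f_e^1$ left over from the quadratics is precisely the permutation-of-basis-vectors operation of Figures 7--8 (relabeling the digit $0\leftrightarrow 1$ on $e$, equivalently interchanging the even/odd matchgate at its ends); it is an isomorphism of the matching structure and likewise leaves the induced probability measure unchanged. The main obstacle I anticipate is the first step: justifying that the parity constraints, under Assumption 3.5, really do pin the base change down to basis rescaling with no further continuous freedom --- this is exactly what the explicit solution (\ref{expbe1})--(\ref{s6}) provides. A secondary point requiring care is the bookkeeping in the third step, namely pairing the two endpoints of each connecting edge so that the slot transformations on adjacent gadgets are matched simultaneously without disturbing the weight-$1$ connecting edges.
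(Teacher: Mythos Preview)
Your argument is correct and follows essentially the same line as the paper's. Both rest on the observation, extracted from the explicit solution (\ref{expbe1})--(\ref{s6}), that the vertex signatures pin down the basis on each edge up to independent rescaling of the two basis vectors (the paper phrases this as ``uniqueness of the basis on each edge''), so that the two matchgate signatures differ by a tensor of $2\times 2$ diagonal factors, one per incident edge. From there the paper simply writes the new matchgate weights out explicitly in terms of the old ones --- its formulas for $\hat a,\hat b,\hat c,\hat d$ are exactly your diagonal slot transformation $\otimes_k\mathrm{diag}(n_{0k},p_{1k})$ times a per-gadget scalar $C^{ij}$ --- and then checks, via an occupied/unoccupied case split on each connecting edge, that the weight of every dimer configuration changes by the same constant $\prod_{i,j}C_1^{ij}C_2^{ij}$. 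Your step~3 is this same cancellation, recast as a pair of inverse gauge moves at the two endpoints of each connecting edge. One small caveat: the bare-triangle Fisher gadgets have no interior vertices, so your remark about absorbing the residual per-gadget scalar there does not literally apply; your parenthetical (``in any case do not affect the measure'') is what carries that step, and indeed the paper's own endpoint is precisely that the configuration-weight ratio is a fixed constant.
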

\begin{proof}See the appendix.
\end{proof}

\subsection{Orthogonal Realizability }

\begin{definition}
A vertex model is orthogonally realizable if it is realizable by an
orthonormal base change matrix on each edge.
\end{definition}

Consider a single vertex. To the incident edges of the vertex,
associate an matrices $U_1,U_2,U_3$. Without loss of generality, we
can assume $U_1,U_2,U_3\in SO(2)$. In fact, if $\det U_i=-1$, we
multiply the first row of $U_i$ by $-1$. The new signature of the
matchgate will be multiplied by $-1$. This change does not violate
the parity constraint.

Assume
$U_i=\left(\begin{array}{cc}\cos\alpha_i&\sin\alpha_i\\-\sin\alpha_i&\cos\alpha_i\end{array}\right)$.
Assume $U=\otimes_{i=1}^{3}U_{i}$, then $U\in SO(8)$, each term of
which is product of three of $\cos{\alpha_i},\sin{\alpha_i}$.
Moreover, the eigenvalues are
$e^{(\pm\alpha_1\pm\alpha_2\pm\alpha_3)\sqrt{-1}}$.  Using
trigonometric identities, each entry is a linear combination of
$\cos(\pm\alpha_1\pm\alpha_2\pm\alpha_3)$ and
$\sin(\pm\alpha_1\pm\alpha_2\pm\alpha_3)$. If we further define
\begin{center}
\begin{align*}
g=\left(\begin{array}{cccccccc}g_1&g_2&g_3&g_4&g_5&g_6&g_7&g_8\end{array}\right)'\\
h=\left(\begin{array}{cccccccc}0&h_2&h_3&0&h_5&0&0&h_8\end{array}\right)'\\
\gamma=\alpha_1+\alpha_2-\alpha_3;\
\psi=\alpha_1+\alpha_3-\alpha_2;\
\varphi=\alpha_2+\alpha_3-\alpha_1\\
j_1=g_4+g_6+g_7-g_1;\ j_2=g_1+g_6+g_7-g_4;\\
j_3=g_1+g_4+g_7-g_6;\ j_4=g_1+g_4+g_6-g_7;\\
j_5=g_3+g_5+g_8-g_2;\ j_6=g_2+g_5+g_8-g_3;\\
j_7=g_2+g_3+g_8-g_5;\ j_8=g_2+g_3+g_5-g_8
\end{align*}
\end{center}
\begin{align*}
P=j_2\cos(\varphi)+j_7\sin(\varphi);\
Q=j_3\cos(\psi)+j_6\sin(\psi);\
R=j_4\cos(\gamma)+j_5\sin(\gamma)\\
K=j_1\cos(\varphi+\psi+\gamma)-j_8\sin(\varphi+\psi+\gamma)\\
H=j_7\cos(\varphi)-j_2\sin(\varphi);\
L=j_6\cos(\psi)-j_3\sin(\psi);\ M=j_5\cos(\gamma)-j_4\sin(\gamma)\\
N=j_8\cos(\varphi+\psi+\gamma)+j_1\sin(\varphi+\psi+\gamma)
\end{align*}
then if $Ug=h$, we have
\begin{eqnarray}
0=P=Q=R=K\label{orthogonalrealizability}\\
4h_2=H+L-M+N\\
4h_3=H-L+M+N\\
4h_5=N+M+L-H\\
4h_8=H+L+M-N
\end{eqnarray}

By (\ref{orthogonalrealizability}), we have
$\tan\varphi=-\frac{j_2}{j_7}$,$\tan\psi=-\frac{j_3}{j_6}$,$\tan\gamma=-\frac{j_4}{j_5}$,$\tan(\varphi+\psi+\gamma)=\frac{j_1}{j_8}$.
If we define
\begin{align*}
t(u,v)&=&\frac{u+v}{1-uv}\\
tt(a,b,c,d)&=&\frac{a+b+c+d-abc-abd-acd-bcd}{1-ab-ac-ad-bc-bd-cd+abcd}
\end{align*} Then the following theorem holds:

\begin{theorem}
A vertex model on a periodic honeycomb lattice with period $n\times
n$ is orthogonally realizable if and only if its signatures satisfy
the following system of equations
\begin{align*}
tt(-\frac{z_7^{ijk}}{z_2^{ijk}},-\frac{z_6^{ijk}}{z_3^{ijk}},-\frac{z_4^{ijk}}{z_5^{ijk}},-\frac{z_1^{ijk}}{z_8^{ijk}})=0\qquad\forall i,j,k\\
\left.\begin{array}{c}t(-\frac{z_6^{ij0}}{z_3^{ij0}},-\frac{z_4^{ij0}}{z_5^{ij0}})=t(-\frac{z_6^{ij1}}{z_3^{ij1}},-\frac{z_4^{ij1}}{z_5^{ij1}})\\
t(-\frac{z_3^{ij-1,0}}{z_6^{ij-1,0}},-\frac{z_2^{ij-1,0}}{z_7^{ij-1,0}})=t(-\frac{z_3^{ij1}}{z_6^{ij1}},-\frac{z_2^{ij,1}}{z_7^{ij1}})\\
t(-\frac{z_7^{i-1,j0}}{z_2^{i-1,j0}},-\frac{z_4^{i-1,j0}}{z_5^{i-1,j0}})=t(-\frac{z_7^{ij1}}{z_2^{ij1}},-\frac{z_4^{ij1}}{z_5^{ij1}})\end{array}\right\}\rightarrow\forall
i,j
\end{align*}
where
$z_1^{ij1}=x_4+x_6+x_7-x_1,z_2^{ij1}=x_3+x_5+x_8-x_2,z_3^{ij1}=x_2+x_5+x_8-x_3,z_4^{ij1}=x_1+x_6+x_7-x_4,z_5^{ij1}=x_2+x_3+x_8-x_5,z_6^{ij1}=x_1+x_4+x_7-x_6,z_7^{ij1}=x_1+x_4+x_6-x_7,z_8^{ij1}=x_2+x_3+x_5-x_8$,
and the same relation for $z_l^{ij0}$ and $y_1,...,y_8$. $x's$ and
$y's$ are defined by ($\ref{relationsignaturex}$) and
$(\ref{relationsignaturey})$.
\end{theorem}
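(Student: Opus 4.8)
The plan is to show that orthogonal realizability of the whole periodic model is equivalent to the existence of a globally consistent assignment of one rotation angle per oriented edge, and then to read off the two displayed families of equations as, respectively, a per-vertex solvability condition and a per-edge compatibility condition. The discussion preceding the statement has already done the local linear algebra: writing each edge matrix as a planar rotation $U_i$ through $\alpha_i$ and setting $U=\otimes_{i=1}^{3}U_i$, the requirement $Ug=h$ together with the parity zeros (the vanishing of the $1,4,6,7$ entries of $h$ forced by the matchgate being odd) is exactly the system $P=Q=R=K=0$. I would take this system as the starting point.

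First I would analyze one vertex. The equations $P=0$, $Q=0$, $R=0$ give $\tan\varphi=-z_4/z_5$, $\tan\psi=-z_6/z_3$, $\tan\gamma=-z_7/z_2$ under the dictionary $j_1\leftrightarrow z_1$, $j_2\leftrightarrow z_4$, $j_3\leftrightarrow z_6$, $j_4\leftrightarrow z_7$, $j_5\leftrightarrow z_2$, $j_6\leftrightarrow z_3$, $j_7\leftrightarrow z_5$, $j_8\leftrightarrow z_8$ relating the $j$'s to the signature combinations $z_l^{ijk}$. Since $(\varphi,\psi,\gamma)\mapsto(\alpha_1,\alpha_2,\alpha_3)$ is the invertible linear map $\alpha_1=(\psi+\gamma)/2$, $\alpha_2=(\varphi+\gamma)/2$, $\alpha_3=(\varphi+\psi)/2$, prescribing these three tangents is the same as prescribing the three edge rotations, and the remaining equation $K=0$ becomes a pure consistency condition on the signature. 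The key observation is that $K=0$ reads $\tan(\varphi+\psi+\gamma)=z_1/z_8$, so the four numbers $\tan\gamma,\tan\psi,\tan\varphi,-z_1/z_8$ are the tangents of four angles summing to $0\pmod\pi$; by the four-angle tangent addition formula encoded in $tt$ this is exactly $tt(-z_7/z_2,-z_6/z_3,-z_4/z_5,-z_1/z_8)=\tan 0=0$. This yields the first family, one equation per vertex $(i,j,k)$, and establishes the equivalence ``the vertex $(i,j,k)$ is locally orthogonally realizable $\iff$ its $tt$-expression vanishes.''

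Next I would impose that the two endpoints of each shared edge assign it the same rotation. Using $\alpha_1=(\psi+\gamma)/2$, $\alpha_2=(\varphi+\gamma)/2$, $\alpha_3=(\varphi+\psi)/2$, the addition formula expresses $\tan(2\alpha_m)$ as a value $t(\cdot,\cdot)$ of two of the tangents already computed from the signature; for instance $\tan(2\alpha_3)=t(-z_4/z_5,-z_6/z_3)$. Equating the value obtained at the white vertex $(i,j)$ with the value obtained at the black vertex sharing that edge produces exactly one $t$-equation per edge class, the index shifts ($i-1$, $j-1$, or none) being forced by the honeycomb adjacencies already recorded in the relations $a_{l,m}^{i,j}=b_{l,m}^{(\cdots)}$ of the previous subsection. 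Running this for the three edge types $a,b,c$ gives the three compatibility equations quantified over all $(i,j)$.

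Finally I would assemble the two directions. Necessity is immediate: an orthogonal realization supplies the angles, so the $tt$-conditions and the $t$-equalities all hold. For sufficiency I would run the argument backwards: the $tt$-conditions let me solve $P=Q=R=K=0$ at each vertex, fixing $\varphi,\psi,\gamma$ (hence the $\alpha_m$) up to independently adding $\pi$ to $\varphi,\psi,\gamma$, while the $t$-equalities guarantee that on each shared edge the two endpoints agree modulo this ambiguity. The main obstacle is precisely this sufficiency step: I must show the discrete branch freedom can be resolved globally and coherently on the torus. The point to exploit is that a shift $\alpha_m\mapsto\alpha_m+\pi/2$ multiplies $U_m$ by the rotation through $\pi/2$, i.e.\ by a signed permutation matrix, which is exactly the basis-permutation (gauge) move discussed earlier in this section and does not affect realizability; hence matching $\tan(2\alpha_m)$ rather than $\alpha_m$ itself across each edge is enough to build a genuine global orthogonal base change. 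Checking that these branch choices never obstruct one another and that the resulting matchgate signatures keep the correct parity and nonzero entries is the delicate part; the trigonometric identities underlying $t$ and $tt$ are then routine once the bookkeeping above is in place.
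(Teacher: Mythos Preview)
Your proposal is correct and follows essentially the same route as the paper. The paper's argument is really the derivation that precedes the theorem: it parametrizes each edge matrix as a rotation, reduces the parity constraint to $P=Q=R=K=0$, reads off $\tan\varphi,\tan\psi,\tan\gamma,\tan(\varphi+\psi+\gamma)$ in terms of the $j$'s (your $z$'s), and then simply states the theorem, remarking afterwards that verification is trivial. Your write-up makes explicit what the paper leaves implicit: the $tt$-equation at each vertex is the tangent-addition identity expressing $K=0$ once $P=Q=R=0$ have fixed the three tangents, and the three $t$-equalities are the compatibility of $\tan(2\alpha_m)=\tan(\varphi+\psi)$, etc., across each shared edge. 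Your dictionary $j_1\leftrightarrow z_1$, $j_2\leftrightarrow z_4$, $j_3\leftrightarrow z_6$, $j_4\leftrightarrow z_7$, $j_5\leftrightarrow z_2$, $j_6\leftrightarrow z_3$, $j_7\leftrightarrow z_5$, $j_8\leftrightarrow z_8$ and the inversion $\alpha_1=(\psi+\gamma)/2$, $\alpha_2=(\varphi+\gamma)/2$, $\alpha_3=(\varphi+\psi)/2$ are exactly right. The one place you go beyond the paper is the sufficiency direction: the paper does not discuss the $\pi$-branch ambiguity in $\varphi,\psi,\gamma$ at all, whereas you correctly note that it translates into $\pi/2$-shifts of the $\alpha_m$, i.e.\ signed-permutation gauge moves, and flag the global coherence check as the residual work. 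That extra care is appropriate and does not diverge from the paper's method.
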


It is trivial to verify these equations in any given situation.

\begin{definition}
A vertex model on a periodic honeycomb lattice with period $n\times
n$ is positively orthogonally realizable if it is orthogonal
realizable and for each vertex $(i,j,k)$, there exists angles
$\varphi^{ijk},\psi^{ijk},\gamma^{ijk}$, such that
\begin{align*}
\sin\varphi=\frac{z_4^{ijk}}{\sqrt{(z_4^{ijk})^2+(z_5^{ijk})^2}};\\
\sin\psi=\frac{z_6^{ijk}}{\sqrt{(z_3^{ijk})^2+(z_6^{ijk})^2}};\\
\sin\gamma=\frac{z_7^{ijk}}{\sqrt{(z_7^{ijk})^2+(z_2^{ijk})^2}}\\
\sin(-\gamma-\varphi-\psi)=\frac{z_1^{ijk}}{\sqrt{(z_1^{ijk})^2+(z_8^{ijk})^2}}\\
\end{align*}
\end{definition}

\begin{proposition}
If a vertex model on a bi-periodic hexagonal lattice is orthogonally
realizable, then the corresponding dimer configuration has positive
edge weights.
\end{proposition}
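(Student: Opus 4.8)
The plan is to identify the edge weights of the Fisher graph $M$ with the nonzero matchgate signature entries and then prove these are positive. Each degree-$3$ vertex of the honeycomb is replaced by a three-vertex (odd) triangular matchgate; deleting a single external vertex leaves one edge whose dimer partition function equals that edge's weight, so the three triangle edge weights are exactly $h_2,h_3,h_5$ (the $\{001\},\{010\},\{100\}$ entries), while deleting all three external vertices gives the empty graph, so $h_8$ (the $\{111\}$ entry) equals $1$ after normalization. Since the connecting edges carry weight $1$, the proposition reduces to showing $h_2,h_3,h_5$ share the sign of $h_8$, and it suffices to prove $h_2,h_3,h_5,h_8>0$.

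First I would record the four relations already in hand, $4h_2=H+L-M+N$, $4h_3=H-L+M+N$, $4h_5=-H+L+M+N$, $4h_8=H+L+M-N$, and evaluate $H,L,M,N$. The realizability equations $P=Q=R=K=0$ give $\tan\varphi=-j_2/j_7$, $\tan\psi=-j_3/j_6$, $\tan\gamma=-j_4/j_5$ and $\tan(\varphi+\psi+\gamma)=j_1/j_8$; the sine conditions in the definition of positive orthogonal realizability single out the branch on which $H=\sqrt{j_2^2+j_7^2}$, $L=\sqrt{j_3^2+j_6^2}$, $M=\sqrt{j_4^2+j_5^2}$ and $N=\sqrt{j_1^2+j_8^2}$. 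Introducing the planar vectors $\mathbf a=(j_2,j_7)$, $\mathbf b=(j_3,j_6)$, $\mathbf c=(j_4,j_5)$, $\mathbf d=(j_1,j_8)$, this is precisely $H=|\mathbf a|$, $L=|\mathbf b|$, $M=|\mathbf c|$, $N=|\mathbf d|$, so that $4h_8=|\mathbf a|+|\mathbf b|+|\mathbf c|-|\mathbf d|$ and similarly for $h_2,h_3,h_5$, with the minus sign moved onto $\mathbf c,\mathbf b,\mathbf a$ respectively.

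The essential input is positivity of the vertex signature. Writing each $j_k$ in the form $S-2g_\bullet$, where $S=g_1+g_4+g_6+g_7$ on one colour class and $S'=g_2+g_3+g_5+g_8$ on the other, a short computation gives the four identities $\mathbf a+\mathbf b+\mathbf c-\mathbf d=(4g_1,4g_8)$, $-\mathbf a+\mathbf b+\mathbf c+\mathbf d=(4g_4,4g_5)$, $\mathbf a-\mathbf b+\mathbf c+\mathbf d=(4g_6,4g_3)$ and $\mathbf a+\mathbf b-\mathbf c+\mathbf d=(4g_7,4g_2)$. Since all $g_i>0$, every one of these four ``one-minus'' combinations lies in the open positive quadrant. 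I would then prove the elementary planar lemma: if $A,B,C,D$ lie in the open positive quadrant, then $|A+B+C-D|<|{-A+B+C+D}|+|A-B+C+D|+|A+B-C+D|$. This follows from $|A+B+C-D|\le|A+B-C+D|+2|C-D|$ combined with $2|C-D|<2|C+D|\le|{-A+B+C+D}|+|A-B+C+D|$, the strict middle step $|C-D|<|C+D|$ holding because $\langle C,D\rangle>0$ for vectors with positive coordinates. Applying the lemma to the four arrangements of $\mathbf a,\mathbf b,\mathbf c,\mathbf d$ yields $h_2,h_3,h_5,h_8>0$, and normalizing $h_8$ to $1$ leaves the triangle weights positive.

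The step I expect to be most delicate is guaranteeing that $H,L,M,N$ are the \emph{positive} square roots. The four equations determine $\varphi,\psi,\gamma$ only modulo $\pi$; flipping any single angle by $\pi$ reverses the sign of its root and, through the coupling $K=0$ on the sum $\varphi+\psi+\gamma$, simultaneously reverses the sign of $N$, so the product of the four signs is an invariant that cannot be set to $+1$ by rebranching alone. It is exactly the extra sine conditions of positive orthogonal realizability that fix this invariant, and this is where that hypothesis is used in full; under plain orthogonal realizability one must argue separately that the positive branch is attainable. The remaining work---verifying the four quadrant identities and exploiting the symmetry so that a single planar inequality settles all four signs---is routine once the vectors $\mathbf a,\mathbf b,\mathbf c,\mathbf d$ are in place.
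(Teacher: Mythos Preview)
Your argument is correct and runs parallel to the paper's, with the same two essential ingredients: (i) using the positive orthogonal realizability hypothesis to pin down $H,L,M,N$ as the positive square roots $\sqrt{j_2^2+j_7^2}$, $\sqrt{j_3^2+j_6^2}$, $\sqrt{j_4^2+j_5^2}$, $\sqrt{j_1^2+j_8^2}$, so that $4h_2,4h_3,4h_5,4h_8$ become the four ``three-plus-one-minus'' combinations of these lengths; and (ii) exploiting the linear identities $j_2+j_3+j_4-j_1=4g_1$, $j_7+j_6+j_5-j_8=4g_8$, etc., together with positivity of the signature entries $g_i$, to bound each length by the sum of the other three. The paper carries out step~(ii) componentwise: it first shows $|z_1|\le|z_4|+|z_6|+|z_7|$ and $|z_8|\le|z_2|+|z_3|+|z_5|$ from the identities (checking that these survive when one of the $z_i$ is negative), and then combines the two coordinate bounds by monotonicity of the Euclidean norm plus the ordinary triangle inequality. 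Your packaging bundles both coordinates into a single vector identity $\mathbf a+\mathbf b+\mathbf c-\mathbf d=(4g_1,4g_8)$ lying in the open positive quadrant, and replaces the paper's two-step estimate by your self-contained planar lemma. This buys you a strict inequality (the paper gets only $\ge 0$) and avoids the case analysis on the sign of a possibly negative $z_i$. Your explicit discussion of the branch ambiguity---that flipping one angle by $\pi$ reverses two of $H,L,M,N$ simultaneously, so the product of the four signs is an invariant fixed only by the \emph{positive} hypothesis---makes precise a point the paper invokes but does not spell out; as you note, the proposition as stated really requires positive orthogonal realizability, which is exactly what the paper's own proof uses.
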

\begin{proof}
 Under the assumption that the vertex model have nonnegative signature, we
have
\begin{align*}
z_1\leq z_4+z_6+z_7;\  z_4\leq z_1+z_6+z_7;\ z_6\leq z_1+z_4+z_7;\ z_7\leq z_4+z_6+z_7\\
z_2\leq z_3+z_5+z_8;\ z_3\leq z_2+z_5+z_8;\ z_5\leq z_2+z_3+z_8;\
z_8\leq z_2+z_3+z_5
\end{align*}
at all vertices. Since at least three of $z_1,z_4, z_6, z_7$ are
nonnegative, similarly for $z_2,z_3,z_5,z_8$, if we take absolute
value for all $z_i$, the above inequalities also hold. therefore
\begin{align*}
\sqrt{z_4^2+z_5^2}+\sqrt{z_6^2+z_3^2}+\sqrt{z_2^2+z_7^2}-\sqrt{z_1^2+z_8^2}\geq 0\\
\sqrt{z_4^2+z_5^2}+\sqrt{z_6^2+z_3^2}+\sqrt{z_1^2+z_8^2}-\sqrt{z_2^2+z_7^2}\geq 0\\
\sqrt{z_4^2+z_5^2}+\sqrt{z_1^2+z_8^2}+\sqrt{z_2^2+z_7^2}-\sqrt{z_3^2+z_6^2}\geq 0\\
\sqrt{z_1^2+z_8^2}+\sqrt{z_6^2+z_3^2}+\sqrt{z_2^2+z_7^2}-\sqrt{z_4^2+z_5^2}\geq
0
\end{align*}
Under the assumption that both the vertex model are positively
orthogonal realizable, the left side of the above inequalities are
exactly edge weights of corresponding matchgates.
\end{proof}

\begin{theorem}
If all matchgates have the same signature, for a generic choice of
signature, a vertex model on a hexagonal lattice is realizable if
and only if it is orthogonal realizable.
\end{theorem}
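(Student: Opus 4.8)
The plan is to prove the equivalence by observing that one direction is immediate and reducing the other to a short manipulation of the realizability equations of Section 3.1. Orthogonal realizability trivially implies realizability, since an orthonormal base change on each edge is a special admissible base change; for a generic signature the realizing rotation angles avoid the values at which an entry of an orthogonal matrix vanishes, so Assumption 3.5 is respected. Thus the whole content is the forward implication: a realizable, same-signature model is, generically, orthogonally realizable. Throughout I write $a_{lm},b_{lm}$ for the ratios attached to the white and black base change matrices as in Section 3.1, and I use the period $1\times1$ of the model together with genericity to take the base change matrices translation invariant, so that the consistency relations collapse to $a_{lm}=b_{lm}$ and, since all vertices carry the same signature, $x_i=y_i$ for all $i$.

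The key is to combine the two equations attached to each edge type. Fixing, say, the $c$-edge and writing $S=a_{03}+a_{13}=b_{03}+b_{13}$ and $\Pi=a_{03}a_{13}=b_{03}b_{13}$, equations (\ref{s1}) and (\ref{s2}) become two affine relations $A\Pi+BS+C=0$ and $-C\Pi+BS-A=0$ in $(S,\Pi)$, where $A=2(x_1x_7-x_5x_3)$, $B=x_1x_8+x_2x_7-x_4x_5-x_6x_3$, and $C=2(x_2x_8-x_6x_4)$ are exactly the coefficients appearing there. Subtracting eliminates the $S$-term and yields
\[
(A+C)(\Pi+1)=0 .
\]
The analogous combinations of (\ref{s3}),(\ref{s4}) and of (\ref{s5}),(\ref{s6}) give $(A'+C')(\Pi'+1)=0$ and $(A''+C'')(\Pi''+1)=0$ for the $b$- and $a$-edges. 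Since $A+C$, $A'+C'$, $A''+C''$ are fixed polynomials in the signature, nonvanishing for a generic choice, we conclude
\[
a_{0m}a_{1m}=b_{0m}b_{1m}=-1\qquad (m=1,2,3).
\]

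Now I would read off orthogonality. The identity $a_{0m}a_{1m}=-1$, that is $n_{0m}n_{1m}+p_{0m}p_{1m}=0$, says precisely that the two rows of the base change matrix $T^{(m)}$ are orthogonal; hence by the singular value decomposition (equivalently, by row-normalization) $T^{(m)}=D_mO_m$ with $D_m$ diagonal and $O_m\in O(2)$, and after the reflection normalization of Section 3.2 we may take $O_m\in SO(2)$. Writing $T_v=T_a\otimes T_b\otimes T_c=(D_a\otimes D_b\otimes D_c)(O_a\otimes O_b\otimes O_c)$, the diagonal factor preserves each coordinate entry up to a nonzero scalar, hence preserves the parity subspace: the four vanishing entries of $m_v$ are the same for $T_v r$ and for $(O_a\otimes O_b\otimes O_c)r$. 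Moreover the dual matrix seen from the black endpoint is $(T^{(m)t})^{-1}=D_m^{-1}O_m$, with the same orthogonal part $O_m$, so the normalization is consistent at both ends of every edge, and by Proposition 3.9 only a gauge-equivalent dimer model results. Discarding the diagonal factors $D_m$ therefore exhibits an orthonormal base change realizing the same signature, which is exactly orthogonal realizability.

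I expect the main obstacle to be making the genericity precise and legitimizing the two reductions used above. One must check that $A+C$, $A'+C'$, $A''+C''$ are not identically zero on the same-signature locus, so that $\Pi_m=-1$ rather than the degenerate alternative is forced; this is the mechanism by which the codimension-$2$ realizability conditions of Theorem 3.6 collapse to codimension $1$ on the diagonal, matching the single orthogonal equation $tt(\cdots)=0$ of Theorem 3.11. One must also justify that a realizable same-signature model admits a translation-invariant realization, so that $a_{lm}=b_{lm}$, and confirm that stripping the diagonal factors $D_m$ keeps all entries nonzero in the sense of Assumption 3.5. Once these points are settled, the computation above closes the forward implication and hence the equivalence.
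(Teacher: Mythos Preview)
Your argument is correct and takes a genuinely different route from the paper's. The paper applies the singular value decomposition $T_i=U_i\Lambda_iV_i$ to each edge matrix, observes that both $(\otimes_i\Lambda_i)v$ and $(\otimes_i\Lambda_i^{-1})v$ (with $v=(\otimes_iV_i)r$) are orthogonally realized by the same $U_i$, and then combines the four resulting orthogonal-realizability equations to obtain $(\lambda_i^2-1)\cdot q_i(v)=0$ for certain quadratics $q_i$; genericity in $v$ forces $\lambda_i=1$, so the $T_i$ themselves are orthogonal. You instead stay with the explicit realizability equations (\ref{s1})--(\ref{s6}): on the same-signature diagonal the two equations attached to each edge type have coefficient triples $(A,B,C)$ and $(-C,B,-A)$, so eliminating the sum $S$ yields $(A+C)(\Pi+1)=0$ and hence $a_{0m}a_{1m}=-1$, i.e.\ the rows of each $T_m$ are orthogonal; then you strip the diagonal factor $D_m$ in $T_m=D_mO_m$ and observe that $O_m$ alone already satisfies the parity constraints at both colours (since $O_m^{-t}=O_m$). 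Your approach is more elementary---it uses neither the SVD nor the machinery of Theorem~3.11---and your generic conditions $A+C\neq0$ are explicit polynomials in the vertex signature $r$ rather than in the auxiliary vector $v$. The paper's approach gives the slightly stronger conclusion that the original $T_i$ is, up to a scalar, itself orthogonal (equal singular values), whereas you only obtain orthogonal rows; but as you correctly note, the weaker conclusion suffices for the theorem via the diagonal-stripping step. The caveats you flag (translation invariance of the realization, and nontriviality of $A+C$, $A'+C'$, $A''+C''$) are exactly the places where both arguments rely on genericity, and the paper is no more explicit about them than you are.
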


\begin{proof}
Obviously orthogonal realizability implies realizability, we only
need to show that realizability implies orthogonal realizability.

Assume
$r=\left(\begin{array}{cccccccc}x_1&x_2&x_3&x_4&x_5&x_6&x_7&x_8\end{array}\right)$
is a realizable signature for the vertex model. By lemma 4.1, we can
assume the corresponding bases on all edges have real entries.
Consider the singular value decomposition for the base change matrix
on each edge. Without loss of generality, assume
\begin{align*}
T_i=U_i\left(\begin{array}{cc}1&0\\0&\lambda_i\end{array}\right)
V_i\ \ i=1,2,3
\end{align*}
where $U_i,V_i\in O(2)$, and $\lambda_i$ is a nonnegative real
number. Assume
\begin{align*}
v=(\otimes_{i=1}^{3}
V_{i})r^{t}=\left(\begin{array}{cccccccc}v_1&v_2&v_3&v_4&v_5&v_6&v_7&v_8\end{array}\right)^{t}
\end{align*}
Then we can consider
$(\otimes_{i=1}^{3}\left(\begin{array}{cc}1&0\\0&
\lambda_i\end{array}\right))v$ and $(\otimes_{i=1}^{3}\left(\begin{array}{cc}1&0\\
0&\frac{1}{\lambda_i}\end{array}\right))v$ to be signatures of black
and white vertices that are orthogonal realizable. Then by the
conditions of orthogonal realizability, we have
\begin{eqnarray}
\frac{v_1+\lambda_1\lambda_3 v_6+\lambda_1\lambda_2
v_7-\lambda_2\lambda_3 v_4}{\lambda_3 v_2+\lambda_2
v_3+\lambda_1\lambda_2\lambda_3 v_8-\lambda_1
v_5}=\frac{v_1+\frac{1}{\lambda_1\lambda_3}v_6+\frac{1}{\lambda_1\lambda_2}v_7-\frac{1}{\lambda_2\lambda_3}v_4}{\frac{1}{\lambda_3}v_2+\frac{1}{\lambda_2}v_3+\frac{1}{\lambda_1\lambda_2\lambda_3}v_8-\frac{1}{\lambda_1}v_5}\label{svd1}\\
\frac{v_1+\lambda_1\lambda_3 v_6-\lambda_1\lambda_2
v_7+\lambda_2\lambda_3 v_4}{-\lambda_3 v_2+\lambda_2
v_3+\lambda_1\lambda_2\lambda_3 v_8+\lambda_1
v_5}=\frac{v_1+\frac{1}{\lambda_1\lambda_3}v_6-\frac{1}{\lambda_1\lambda_2}v_7+\frac{1}{\lambda_2\lambda_3}v_4}{-\frac{1}{\lambda_3}v_2+\frac{1}{\lambda_2}v_3+\frac{1}{\lambda_1\lambda_2\lambda_3}v_8+\frac{1}{\lambda_1}v_5}\label{svd2}\\
\frac{v_1-\lambda_1\lambda_3 v_6+\lambda_1\lambda_2
v_7+\lambda_2\lambda_3 v_4}{\lambda_3 v_2-\lambda_2
v_3+\lambda_1\lambda_2\lambda_3 v_8+\lambda_1
v_5}=\frac{v_1-\frac{1}{\lambda_1\lambda_3}v_6+\frac{1}{\lambda_1\lambda_2}v_7+\frac{1}{\lambda_2\lambda_3}v_4}{\frac{1}{\lambda_3}v_2-\frac{1}{\lambda_2}v_3+\frac{1}{\lambda_1\lambda_2\lambda_3}v_8+\frac{1}{\lambda_1}v_5}\label{svd3}\\
\frac{-v_1+\lambda_1\lambda_3 v_6+\lambda_1\lambda_2
v_7+\lambda_2\lambda_3 v_4}{\lambda_3 v_2+\lambda_2
v_3-\lambda_1\lambda_2\lambda_3 v_8+\lambda_1
v_5}=\frac{-v_1+\frac{1}{\lambda_1\lambda_3}v_6+\frac{1}{\lambda_1\lambda_2}v_7+\frac{1}{\lambda_2\lambda_3}v_4}{\frac{1}{\lambda_3}v_2+\frac{1}{\lambda_2}v_3-\frac{1}{\lambda_1\lambda_2\lambda_3}v_8+\frac{1}{\lambda_1}v_5}\label{svd4}
\end{eqnarray}
If we transform fractions into polynomials, and consider
(\ref{svd1})$+$(\ref{svd2})$-$(\ref{svd3})$-$(\ref{svd4}),
(\ref{svd1})$+$(\ref{svd3})$-$(\ref{svd2})$-$(\ref{svd4}), and
(\ref{svd1})$+$(\ref{svd4})$-$(\ref{svd2})$-$(\ref{svd3}), then
\begin{align*}
(\lambda_1+1)(\lambda_1-1)(v_4v_8+v_1v_5+v_3v_7+v_2v_6)=0\\
(\lambda_2+1)(\lambda_2-1)(v_3v_4+v_5v_6+v_1v_2+v_7v_8)=0\\
(\lambda_3+1)(\lambda_3-1)(v_4v_2+v_7v_5+v_3v_1+v_8v_6)=0
\end{align*}
Then for a generic choice of signature, we must have
$\lambda_1=\lambda_2=\lambda_3=1$.
\end{proof}

\section{Characteristic Polynomial}

Assume all entries of the vertex signatures are strictly positive.
In this section we prove some interesting properties of the
characteristic polynomial.
\begin{lemma}
For realizable vertex model with positive signature and period
$1\times 1$, there exists a realization of base change over
$\mathbf{GL}_2(\mathbb{R})$, (i.~e.~ one can take base change
matrices to be real), with the property that, at each edge,
$\frac{n_0n_1}{p_0p_1}< 0$.
\end{lemma}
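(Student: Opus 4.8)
The plan is to argue edge by edge and reduce the whole statement to a single sign inequality. Fix an edge type $m\in\{1,2,3\}$ and write $a_{0m}=n_{0m}/p_{0m}$, $a_{1m}=n_{1m}/p_{1m}$ for the two column ratios of the base change matrix $T_e$, so that the quantity to be controlled is exactly $\frac{n_{0m}n_{1m}}{p_{0m}p_{1m}}=a_{0m}a_{1m}$. For period $1\times 1$ the two endpoints of each edge are the single black and the single white vertex, and the identification of the $a$- and $b$-variables discussed after (\ref{s6}) lets me treat both the black and the white realizability equation on that edge as relations in the two symmetric functions $S_m:=a_{0m}+a_{1m}$ and $P_m:=a_{0m}a_{1m}$. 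Concretely, the $a$-edge is governed by (\ref{s5}) (white, in the $x$'s) and (\ref{s6}) (black, in the $y$'s), the $b$-edge by (\ref{s3}),(\ref{s4}), and the $c$-edge by (\ref{s1}),(\ref{s2}). Each pair is linear in $(S_m,P_m)$, so Cramer's rule produces $S_m$ and $P_m$ as explicit rational functions of the signature entries; since the signature is real, these are real numbers.

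Next I would observe that the entire lemma follows from the single claim $P_m<0$ at every edge. Indeed, $a_{0m},a_{1m}$ are the roots of $t^2-S_m t+P_m$, whose discriminant $S_m^2-4P_m$ is strictly positive once $P_m<0$; the roots are then real, and choosing any real gauge (e.g.\ $p_{0m}=p_{1m}=1$, $n_{lm}=a_{lm}$) yields a genuine $\mathbf{GL}_2(\mathbb{R})$ realization. Simultaneously $\frac{n_{0m}n_{1m}}{p_{0m}p_{1m}}=P_m<0$ is precisely the asserted sign. It is also worth recording that the sign of $P_m$ is gauge invariant: rescaling either column of $T_e$ by a nonzero real scalar multiplies $P_m$ by a positive factor, so $P_m<0$ is a well-defined condition on the signature, independent of the chosen realization.

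The heart of the argument is thus to prove $P_m<0$. Writing $P_m$ as the Cramer quotient of the two linear equations, the task is to show that its numerator and denominator have opposite signs for all positive signatures. I would carry this out by re-expressing the coefficients in terms of the combinations $z_1,\dots,z_8$ of Section 3.2 and invoking the triangle-type inequalities established in the proof of Proposition 3.13 (at least three of $z_1,z_4,z_6,z_7$ are nonnegative, and likewise for $z_2,z_3,z_5,z_8$, with the inequalities surviving passage to absolute values). As a guide for the expected sign, note that at any orthogonally realizable signature the base change may be taken in $SO(2)$, with columns $(\cos\alpha,-\sin\alpha)$ and $(\sin\alpha,\cos\alpha)$, giving $\frac{n_{0m}n_{1m}}{p_{0m}p_{1m}}=-1$; this confirms the target sign and suggests that $P_m$ stays negative throughout the positive realizable locus.

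I expect the sign determination to be the only genuine obstacle: the numerator and denominator are quadratic forms in the eight positive entries of each vertex signature and do not factor transparently, so the inequality must be extracted by careful grouping rather than by inspection. A conceptually cleaner alternative would be a continuity argument, since $P_m$ is continuous and, by Assumption 3.5, nonvanishing on the positive realizable locus and equals $-1$ at orthogonal points, so that connectedness of that locus would force $P_m<0$; but this merely trades the algebra for a proof of connectedness, which is itself not obvious. I would therefore rely on the direct inequality, with the sign bookkeeping as the main piece of work.
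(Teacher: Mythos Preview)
Your reduction is correct: the lemma is equivalent to $P_m:=a_{0m}a_{1m}<0$ for each edge type $m$, and once $P_m<0$ the roots of $t^2-S_mt+P_m$ are real and of opposite sign, giving a real realization with the required property. The paper proceeds the same way up to this point.

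The gap is in how you propose to obtain $P_m<0$. Your plan pairs one equation from the white vertex with one from the black vertex, namely (\ref{s1})--(\ref{s2}) for $m=3$, (\ref{s3})--(\ref{s4}) for $m=2$, (\ref{s5})--(\ref{s6}) for $m=1$, and solves by Cramer's rule. But in each of those equations the coefficient of $P_m$ is a \emph{difference} of products of signature entries, e.g.\ $x_6x_4-x_2x_8$ in (\ref{s2}) or $y_1y_7-y_5y_3$ in (\ref{s1}), with no definite sign. The Cramer quotient therefore has numerator and denominator that are indefinite quartic forms in the $x$'s and $y$'s, and you yourself note that the sign ``must be extracted by careful grouping rather than by inspection.'' No such grouping is supplied, and the suggested use of the $z$-variables or of connectedness of the realizable locus is speculative. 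As stated, the proof is incomplete precisely at the crucial step.

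The paper avoids this difficulty by combining black and white information at an earlier stage. In the $1\times1$ case one has $a_{lm}=b_{lm}$, so the four expressions for $a_{01}$ and for $a_{11}$ in (\ref{expbe1})--(\ref{expbe2}) mix the white data $u,v$ (in the $x$'s) with the black data $t,w$ (in the $y$'s). Equating these expressions and eliminating $a_{01},a_{11}$, then factoring out $a_{02}-a_{12}$ (or $a_{03}-a_{13}$), yields a single quadratic in $z$ whose roots are $a_{03},a_{13}$:
\[
(x_6y_5+y_7x_8+y_1x_2+y_3x_4)\,z^2+\cdots-(y_6x_5+x_3y_4+y_2x_1+x_7y_8)=0,
\]
and similarly for $a_{02},a_{12}$. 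Here the leading and constant coefficients are sums of products $x_iy_j$ with all positive signature entries and opposite overall signs, so the product of the roots is manifestly negative. No inequalities from Proposition~3.13 and no connectedness argument are needed. The key idea you are missing is to use the \emph{redundant} presentations of $a_{01},a_{11}$ in (\ref{expbe1})--(\ref{expbe2}) rather than the already-reduced pair (\ref{s1})--(\ref{s6}); the redundancy is what produces coefficients that are bilinear in $(x,y)$ and hence positive by inspection.
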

\begin{proof}
For $1\times 1$ fundamental domain, $a_{lk}=b_{lk},\ \forall\ l,k$.
By (\ref{expbe1})-(\ref{expbe2}), we have
\begin{align*}
a_{01}=-\frac{p\cdot t}{p\cdot w}=-\frac{s\cdot t}{s\cdot
w}=\frac{r\cdot u}{r\cdot v}=\frac{q\cdot u}{q\cdot v}\\
a_{11}=-\frac{q\cdot t}{q\cdot w}=-\frac{r\cdot t}{r\cdot
w}=\frac{s\cdot u}{s\cdot v}=\frac{p\cdot u}{p\cdot v},
\end{align*}
From which we derive
\begin{eqnarray}
(p\cdot t)(r\cdot v)+(p\cdot w)(r\cdot u)-(r\cdot t)(p\cdot
v)-(r\cdot w)(p\cdot u)=0\label{1by1a}\\ (s\cdot t)(q\cdot
v)+(s\cdot w)(q\cdot u)-(q\cdot t)(s\cdot v)-(q\cdot w)(s\cdot
u)=0\\ (p\cdot t)(q\cdot v)+(p\cdot w)(q\cdot u)-(q\cdot t)(p\cdot
v)-(q\cdot w)(p\cdot
u)=0\\
(s\cdot t)(r\cdot v)+(s\cdot w)(r\cdot u)-(r\cdot t)(s\cdot
v)-(r\cdot w)(s\cdot u)=0\label{1by1b}.
\end{eqnarray}
Since each base change matrix is invertible, we have $a_{0i}\neq
a_{1i}$. Plugging (\ref{variable}) into (\ref{1by1a})-(\ref{1by1b}),
and factor out $(a_{02}-a_{12})$, or $(a_{03}-a_{13})$, we obtain
that $a_{03}$, $a_{13}$ are two roots of quadratic polynomial
\begin{align*}
(x_6y_5+y_7x_8+y_1x_2+y_3x_4)z^2+(-y_7x_7-y_3x_3+y_2x_2+y_4x_4+x_6y_6-x_5y_5+y_8x_8-y_1x_1)z\\
-y_6x_5-x_3y_4-y_2x_1-x_7y_8=0,
\end{align*}
and $a_{02}$, $a_{12}$ are two roots of quadratic polynomial
\begin{align*}
(x_3y_1+y_6x_8+y_2x_4+y_5x_7)z^2+(y_8x_8-x_6y_6-x_5y_5+y_7x_7-y_2x_2+y_3x_3+y_4x_4-y_1x_1)z\\
-y_8x_6-y_7x_5-x_1y_3-y_4x_2=0
\end{align*}
Under the assumption that $x_i$,$y_j$ are positive, these
polynomials have real roots, since the products of two roots are
always negative. Then the lemma follows.
\end{proof}

By definition, the characteristic polynomial $P(z,w)$ is the
determinant of an $6\times 6$ matrix $K(z,w)$ whose rows and columns
are indexed by the $6$ vertices in a $1\times 1$ fundamental domain,
see Figure 8:
\begin{figure}[htbp]
  \centering
\includegraphics*[190,655][417,817]{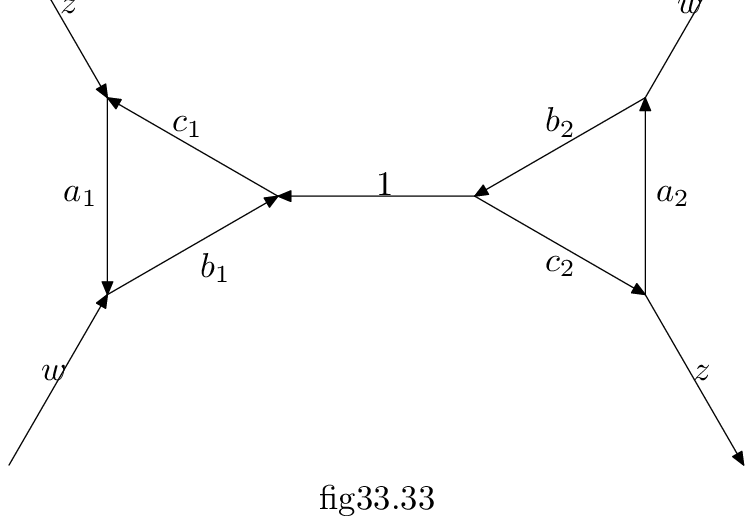}
   \caption{Weighted $1\times1$ Fundamental Domain }
\end{figure}

\begin{align*} P(z,w)&=\det\left(\begin{array}{cccccc} 0&
c_1&  -b_1&   -1& 0& 0\\-c_1&    0&   a_1&    0& -\frac{1}{z}& 0\\
b_1&  -a_1&
0& 0& 0 &-\frac{1}{w}\\     1&    0&    0&    0&   c_2&  -b_2\\
0& z& 0& -c_2& 0&   a_2\\     0&    0&    w&   b_2&  -a_2&
0\end{array}
 \right)\\
&=(z+\frac{1}{z})(ab-c)+(w+\frac{1}{w})(ac-b)+(\frac{z}{w}+\frac{w}{z})(bc-a)+a^2+b^2+1+c^2
\end{align*}
where
\begin{eqnarray}
a=a_1a_2\qquad b=b_1b_2\qquad c=c_1c_2\label{prodweight}
\end{eqnarray}
and we consider the non-degenerate case, which means $ac-b\neq 0$,
$bc-a\neq 0$, $ab-c\neq 0$. We have the following lemma:

\begin{lemma}
For realizable vertex model with period $1\times 1$, let $a,b,c,d$
be the product of dimer weights (\ref{prodweight}) after holographic
reduction, under assumption 3.4, we have the following inequalities:
\begin{align*}
(a+b-c-d)(a+c-b-d)(a+d-b-c)>0\\
a+b+c+d>0.
\end{align*}
\end{lemma}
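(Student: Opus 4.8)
The plan is to separate the two inequalities: the second is essentially positivity of a dimer partition function, while the first carries the real content. Throughout, $d$ denotes the product of the three connecting-edge weights of the Fisher graph, so $d=1$ and the constant term of $P(z,w)$ is $a^2+b^2+c^2+d^2$. Note also that $P$ already determines $a,b,c$ (through its coefficients $ab-c,\ ac-b,\ bc-a$ and its constant term), so $a,b,c$ are gauge invariant and may be computed in whichever realization is convenient.

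For $a+b+c+d>0$ I would identify $a+b+c+d$ with the dimer partition function $Z_1$ of the Fisher graph on the $1\times1$ torus. The fundamental domain is two triangles joined by three connecting edges, and it has exactly four perfect matchings: for each connecting edge, the matching using it together with the opposite edge of each triangle (weights $a=a_1a_2$, $b=b_1b_2$, $c=c_1c_2$), and the matching using all three connecting edges (weight $d=1$). Hence $Z_1=a+b+c+d$. This can be cross-checked against the four–Pfaffian formula of Section 2.2: evaluating $K(z,w)$ at the four points $(\pm1,\pm1)$ gives the values $-(a+b+c-1),\ -(a+b-c+1),\ -(a-b+c+1),\ -(a-b-c-1)$, and the signed combination $\tfrac12|\mathrm{Pf}(K^{00})+\mathrm{Pf}(K^{10})+\mathrm{Pf}(K^{01})-\mathrm{Pf}(K^{11})|$ collapses to $a+b+c+d$. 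Since a realizable model admits a realization with nonnegative Fisher weights (the orthogonal realization, by the positivity proposition of Section 3.2), we get $a,b,c\ge0$ and the inequality follows.

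For the product inequality I would pass to the (positively) orthogonal realization, where each triangle weight is an explicit ratio of the four nonnegative magnitudes $M_i=\sqrt{z_4^2+z_5^2}$, $L_i=\sqrt{z_3^2+z_6^2}$, $H_i=\sqrt{z_2^2+z_7^2}$, $N_i=\sqrt{z_1^2+z_8^2}$ attached to the two vertices $i=1,2$, normalized so that the $\{111\}$-weight is $1$; if the orthogonal form is unavailable one substitutes instead the explicit real realization of the preceding lemma. The key device is the substitution $p_i=M_i+L_i,\ q_i=N_i-H_i,\ r_i=N_i+H_i,\ s_i=M_i-L_i$, which diagonalizes the relevant combinations; after clearing the positive denominators the three factors reduce to bilinear forms in the two vertices,
\begin{align*}
a+b-c-d &\ \propto\ M_1N_2+N_1M_2-L_1H_2-H_1L_2,\\
a+c-b-d &\ \propto\ L_1N_2+N_1L_2-M_1H_2-H_1M_2,\\
a+d-b-c &\ \propto\ M_1L_2+L_1M_2-N_1H_2-H_1N_2,
\end{align*}
with positive proportionality constants, so the claim becomes positivity of the product of these three forms.

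This last step is the crux and the main obstacle. The magnitudes satisfy only the quadrilateral inequalities (each of $M_i,L_i,H_i,N_i$ is at most the sum of the other three), and these do \emph{not} suffice: they permit, for example, $N_i$ large enough to flip the sign of one factor. The missing ingredient is that the magnitudes are not free—they come from a strictly positive signature subject to the realizability (tangent–addition) relation of the orthogonal-realizability theorem, i.e. the constraint that the angles $\varphi_i,\psi_i,\gamma_i$ and $-(\varphi_i+\psi_i+\gamma_i)$ be consistent. I would finish by showing that this relation together with $x_k,y_k>0$ confines $(M_i,L_i,H_i,N_i)$ to exactly the region where the product of the three bilinear forms is positive, the offending configurations being precisely those excluded by positivity of the signature entries. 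Establishing this confinement in general is the technical heart of the argument.
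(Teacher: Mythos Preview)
Your argument has two genuine gaps.

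First, for $a+b+c+d>0$ you rely on an orthogonal realization with nonnegative Fisher weights. But Theorem 3.14 (realizable $\Rightarrow$ orthogonally realizable) is proved only under the hypothesis that all vertices carry the \emph{same} signature; for a $1\times1$ periodic model the black and white signatures $y,x$ are in general different, so you cannot invoke that theorem. Your parenthetical ``if the orthogonal form is unavailable'' does not help: the remainder of your outline is set up entirely in the orthogonal variables $M_i,L_i,H_i,N_i$. The paper bypasses this by noting that holographic reduction preserves the partition function, so $a+b+c+d$ equals the vertex-model partition function on $G_1$, namely $\sum_{k=1}^{8}x_ky_k>0$.

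Second, your treatment of the product inequality is explicitly incomplete: you end by saying that confining $(M_i,L_i,H_i,N_i)$ to the correct region ``is the technical heart of the argument,'' without carrying it out. The paper's route avoids this entirely. Working in the real realization of Lemma 4.1 and manipulating the realizability equations directly, one obtains the closed-form identities
\begin{align*}
a+d-b-c &= (a_{13}-a_{03})\,(y_3x_4+x_8y_7+x_6y_5+y_1x_2),\\
a+c-b-d &= (a_{02}-a_{12})\,(y_1x_3+y_2x_4+y_6x_8+y_5x_7),\\
a+b-c-d &= (a_{01}-a_{11})\,(y_1x_5+y_4x_8+y_2x_6+y_3x_7).
\end{align*}
The bracketed sums are positive because $x_k,y_k>0$, so the sign of the product is that of $(a_{01}-a_{11})(a_{02}-a_{12})(a_{03}-a_{13})$. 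Lemma 4.1 gives $a_{0k}a_{1k}<0$ for each $k$; choosing $a_{02},a_{03}>0$ (hence $a_{12},a_{13}<0$) and reading off $a_{01}<0$ from the vanishing of the $\{000\}$ entry of the black matchgate signature then yields $-(a_{01}-a_{11})(a_{02}-a_{12})(a_{03}-a_{13})>0$, which is the claim. This is the missing idea: express each linear combination $a\pm b\pm c\pm d$ as a single base-ratio difference times a manifestly positive quantity, rather than trying to control three bilinear forms in magnitudes subject to nontrivial angular constraints.
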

\begin{proof}
For generic choice of vertex signature, we can assume quotients of
basis entries $a_{ij}=\frac{n_{ij}}{p_{ij}}$ are finite. Apply the
realizability equation (\ref{rlbcd}) to a $1\times 1$ quotient
graph, we have 8 equations for each black vertex, and 8 equations
for each white vertex. At each black vertex, 4 equations has 0 on
the right side, and 4 equations with $a_1,b_1,c_1,d_1$ on the right
side. We divide the 8 equations into 4 groups, each of which has 1
equation with 0 on the right, 1 equation with a non-vanishing edge
weight on the right. We take the difference of the two equations in
each group, and we get 4 new equations with a non-vanishing edge
weight on the right. We perform the same procedure for each white
vertex. Then we multiply and add those equations correspondingly, we
obtain
\begin{align*}
a+d-b-c&=(-a_{03}+a_{13})(y_3x_4+x_8y_7+x_6y_5+y_1x_2)\\
a+c-b-d&=(a_{02}-a_{12})(y_1x_3+y_2x_4+y_6x_8+y_5x_7)\\
a+b-c-d&=(a_{01}-a_{11})(y_1x_5+y_4x_8+y_2x_6+y_3x_7)
\end{align*}
Moreover, since holographic reduction leaves the partition function
invariant,
\begin{align*}
a+b+c+d=x_1y_1+x_2y_2+x_3y_3+x_4y_4+x_5y_5+x_6y_6+x_7y_7+x_8y_8
\end{align*}
where the left side is the partition function of dimers of the
$1\times 1$ quotient graph, and the right side is the partition
function of the vertex model, see Figure 10. According to lemma 5.1,
$a_{0k}$ and $a_{1k}$ are two real numbers with opposite sign.
Without loss of generality, we can assume $a_{02}$ and $a_{03}$ are
both positive. At the black vertices, the $\{000\}$ entry of
matchgate signature corresponds to partition functions of perfect
matchings of the generator with all output vertices kept. Since
there are odd number of vertices, no perfect matching exists,
therefore
\begin{align*}
a_{01}=-\frac{a_{02}a_{03}y_5+a_{02}y_6+a_{03}y_7+y_8}{a_{02}a_{03}y_1+a_{02}y_2+a_{03}y_3+y_4}<0
\end{align*}
we have
\begin{align*}
-(a_{01}-a_{11})(a_{02}-a_{12})(a_{03}-a_{13})>0
\end{align*}
The lemma follows from the assumption that entries of relation
signatures are positive.
\end{proof}

The expression of the characteristic polynomial shows that $P(z,w)$
is a smooth function on $\mathbb{T}^2=\{(z,w)||z|=1,|w|=1\}$. We are
interested in the intersection of spectral curve $P(z,w)=0$ with
unit torus $\mathbb{T}^2$, because it has implications on the
convergence rate of correlations. Theorem 4.4 describes the behavior
of $P(z,w)$ on $\mathbb{T}^2$. Before stating the theorem, we
mention an elementary lemma:

\begin{lemma} Let $f(\phi)=A\sin\phi+B\cos\phi+C$, where $A,B,C$ are real.
If $C\geq 0$ and $A^2+B^2-C^2\leq 0$, then $f(\phi)\geq 0$ for any
$\phi\in\mathbb{R}$.
\end{lemma}

\begin{theorem}
Under assumption 3.4, either the spectral curve after holographic
reduction is disjoint from $\mathbb{T}^2$, or intersects
$\mathbb{T}^2$ at a single real node, that is, for some
$(z_0,w_0)=(\pm1,\pm1)$,
\[P(z,w)=\alpha(z-z_0)^2+\beta(z-z_0)(w-w_0)+\gamma(w-w_0)^2+...,\]
where $\beta^2-4\alpha\gamma\leq 0$.
\end{theorem}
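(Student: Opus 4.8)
The plan is to restrict $P$ to the unit torus, write $z=e^{i\theta}$, $w=e^{i\phi}$, and reduce the whole question to the one--variable minimization handled by Lemma 4.5. On $\mathbb{T}^2$ the explicit formula for $P$ becomes
\[
P=2\mu\cos\theta+2\nu\cos\phi+2\rho\cos(\theta-\phi)+\sigma ,
\]
where $\mu=ab-c$, $\nu=ac-b$, $\rho=bc-a$ and $\sigma=a^2+b^2+c^2+1$ (so $d=1$ after the normalization of Theorem 3.7). For fixed $\theta$ this is $A\sin\phi+B\cos\phi+C$ with $A=2\rho\sin\theta$, $B=2\nu+2\rho\cos\theta$, $C=2\mu\cos\theta+\sigma$, and one checks $A^2+B^2=4|\nu+\rho e^{i\theta}|^2$. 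First I would verify $C\ge 0$ for all $\theta$ by the elementary bound $\sigma=a^2+b^2+c^2+1\ge 2ab+2c\ge 2|ab-c|$; then Lemma 4.5 reduces nonnegativity of $P$ to the single inequality $C^2\ge A^2+B^2$, that is, to showing that
\[
h(\theta):=C-2|\nu+\rho e^{i\theta}|=2\mu\cos\theta+\sigma-2|\nu+\rho e^{i\theta}|\ge 0 .
\]

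Next I would locate the minimum of $h$. Differentiating gives $h'(\theta)=2\sin\theta\,(\tfrac{\nu\rho}{|\nu+\rho e^{i\theta}|}-\mu)$, so the only candidate minimizers are the corners $\theta\in\{0,\pi\}$ together with possible interior points where $|\nu+\rho e^{i\theta}|=\nu\rho/\mu$. At the corners I would use the identities $P(1,1)=(a+b+c-d)^2$, $P(1,-1)=(a+b+d-c)^2$, $P(-1,1)=(a+c+d-b)^2$, $P(-1,-1)=(b+c+d-a)^2$, which show $h(0),h(\pi)\ge 0$ and that the four corner values are perfect squares of the ``sum of three minus the fourth'' quantities. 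An interior critical point can occur only when $\mu\nu\rho>0$ and $\nu\rho/\mu$ lies in the attainable range $[\,|\,|\nu|-|\rho|\,|,\ |\nu|+|\rho|\,]$ of the modulus; a short computation then evaluates $h$ there as
\[
h=\sigma-\left(\frac{\mu\nu}{\rho}+\frac{\nu\rho}{\mu}+\frac{\rho\mu}{\nu}\right)=-\frac{\Delta}{64\,\mu\nu\rho},
\]
where $\Delta=64(\mu^2\nu^2+\nu^2\rho^2+\rho^2\mu^2-\mu\nu\rho\,\sigma)$ is the discriminant of $C^2-A^2-B^2$ viewed as a quadratic in $\cos\theta$.

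The crux of the argument --- and the step I expect to be the main obstacle --- is to show that the realizability inequality of Lemma 4.3, namely $(a+b-c-d)(a+c-b-d)(a+d-b-c)>0$ together with $a+b+c+d>0$, forbids any interior minimum with $h<0$. Concretely I would prove that for a realizable positive signature one never simultaneously has $\mu\nu\rho>0$, the modulus condition above, and $\Delta>0$; equivalently, the minimizing value of $\cos\theta$ of the endpoint quadratic always falls outside $(-1,1)$ unless $\Delta\le 0$. This is where the cubic sign condition of Lemma 4.3 is indispensable, and translating it into control of the vertex location and of the sign of $\Delta$ is the delicate part: the several sign sub-cases for $(\mu,\nu,\rho)$ (all positive, or exactly two negative) must each be excluded. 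Granting this, $h\ge 0$ on the whole circle, so $P\ge 0$ on $\mathbb{T}^2$ and any zero forces $\theta\in\{0,\pi\}$ and then $\phi\in\{0,\pi\}$, i.e.\ $(z_0,w_0)=(\pm1,\pm1)$.

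Finally I would settle uniqueness and the nodal type. At most one of $a+b+c-d,\,a+b+d-c,\,a+c+d-b,\,b+c+d-a$ can vanish, since the sum of any two of them is a positive combination of $a,b,c,d$ (for instance $(a+b+c-d)+(a+b+d-c)=2a+2b>0$); hence at most one corner lies on the spectral curve, which gives exactly the stated dichotomy. At the vanishing corner, say $(1,1)$ with $a+b+c=d=1$, I would Taylor expand: using $z-z_0\sim i\theta$, $w-w_0\sim i\phi$, the quadratic part of $P$ is $\alpha(z-z_0)^2+\beta(z-z_0)(w-w_0)+\gamma(w-w_0)^2$ with $\alpha=\mu+\rho$, $\beta=-2\rho$, $\gamma=\nu+\rho$, so that $\beta^2-4\alpha\gamma=-4(\mu\nu+\nu\rho+\rho\mu)$. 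Using $a+b+c=d=1$ this collapses to $-16abc<0$ (and the same value arises at each of the other three corners), which is $\le 0$ as required and in fact strict, so the intersection is a genuine isolated real node.
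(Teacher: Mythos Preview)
Your reduction to $A\sin\phi+B\cos\phi+C$ and the appeal to Lemma~4.3 are exactly the paper's first moves, but from that point on you diverge: you pass to $h=C-\sqrt{A^2+B^2}$ and use calculus, whereas the paper stays with the polynomial $g(\cos\theta)=C^2-A^2-B^2$. The step you flag as ``the main obstacle'' is a genuine gap, and the paper does not fill it by the sign sub-case analysis you sketch. Instead it makes two clean algebraic observations. First, the discriminant of $g$ (your $\Delta$) factors as
\[
\Delta=64\,abc\,(a+b+c+d)(a+b-c-d)(a+c-b-d)(a+d-b-c),
\]
so by Lemma~4.2 the sign of $\Delta$ is exactly the sign of $abc$. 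This immediately dispatches the case $abc<0$ (then $\Delta<0$, so $g>0$ and $P>0$ on $\mathbb{T}^2$). Second, when $abc>0$ the paper uses the completion-of-squares identity
\[
g(\cos\theta)=\bigl[2(ab+c)\cos\theta+a^2+b^2-c^2-1\bigr]^2+16\,abc\,\sin^2\theta\ge 0,
\]
which shows $g\ge0$ with equality only at $\sin\theta=0$. Your interior-critical-point computation $h(t_1)=-\Delta/(64\mu\nu\rho)$ is correct, but when $abc>0$ one has $\Delta>0$, so your formula would give $h(t_1)<0$ whenever $\mu\nu\rho>0$; the resolution is not that $\Delta\le0$ in those sub-cases but that the critical value $t_1$ always lands outside $[-1,1]$ --- and the only transparent way to see this is the sum-of-squares identity above. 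A direct sign analysis in $(\mu,\nu,\rho)$ does not seem to yield this without essentially rediscovering that identity.

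Two smaller points. Your uniqueness argument (``the sum of any two of the four linear forms is $2a+2b>0$'') assumes $a,b,c,d>0$ individually, which is not part of the hypotheses here; only the inequalities of Lemma~4.2 are available, and those are what one must use to exclude two simultaneous corner zeros. On the other hand, your explicit Hessian computation $\beta^2-4\alpha\gamma=-16abc$ at the node is nicer than the paper's ``trivial to check'' remark, and combined with the observation that zeros on $\mathbb{T}^2$ only occur in the case $abc>0$ it gives strict negativity, hence a genuine isolated real node.
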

\begin{proof}
Let
\begin{align*}
Q(\theta,\phi)&=P(e^{i\theta},w^{i\phi})\\
&=[2(ac-b)+2(bc-a)\cos\theta]\cos\phi+2(bc-a)\sin\theta\sin\phi\\
&+2(ab-c)\cos\theta+a^2+b^2+c^2+1.
\end{align*}

Consider $Q(\theta,\phi)$ as a trigonometric polynomial with respect
to $\phi$,
\begin{align*}
Q(\theta,\phi)=A\sin\phi+B\cos\phi+C
\end{align*}
where
\begin{align*}
A&=2(ac-b)+2(bc-a)\cos\theta\\
B&=2(bc-a)\sin\theta\\
C&=a^2+b^2+c^2+1+2(ab-c)\cos\theta\geq 0.
\end{align*}
Define
\begin{align*}
g(\cos\theta)&=C^2-A^2-B^2\\
&=4(ab-c)^2\cos^2\theta+4(ab+c)(a^2+b^2-c^2-1)\cos\theta\\
&-4(bc-a)^2-4(ac-b)^2+(a^2+b^2+c^2+1)^2,
\end{align*}
then $g$ is a quadratic polynomial with respect to $\cos\theta$ with
discriminant
\begin{eqnarray*}
\Delta=64abc(a+b-c-1)(a+b+c+1)(a+1-b-c)(a+c-b-1).
\end{eqnarray*}
The minimal value of $g(t)$ is attained at
\[t_0=-\frac{(ab+c)(a^2+b^2-c^2-1)}{2(ab-c)^2},\]
where \[g(t_0)=-\frac{\Delta}{16(ab-c)^2}\]

If $abc>0$,
\begin{align*}
g(\cos\theta)
=[2(ab+c)\cos\theta+a^2+b^2-c^2-1]^2+16abc\sin^2\theta\geq0
\end{align*}
therefore $Q(\theta,\phi)\geq 0$, and $Q(\theta,\phi)=0$ only if
$\sin\theta=\sin\phi=0$.

If $abc<0$, lemma 4.2 implies that $\Delta<0$, then
$g(\cos\theta)>0$, therefore $Q(\theta,\phi)>0$.

If $abc=0$, we have $\left|\frac{ab+c}{ab-c}\right|=1$. Moreover,
\[(a^2+b^2-c^2-1)^2-4(ab-c)^2=(a+1+b+c)(a-1-b+c)(a+1-b-c)(a-1+b-c)>0,\]
which implies
\[|t_0|=\left|\frac{ab+c}{ab-c}\right|\left|\frac{a^2+b^2-c^2-1}{2(ab-c)}\right|>1\]
then $g(\cos\theta)>0$, and $Q(\theta,\phi)>0$.

Therefore the only possible intersection of the spectral curve with
$\mathbb{T}^2$ is real point. It is trivial to check that
$\left.\frac{\partial P}{\partial z}\right|_{(\pm1,\pm1)}=0$,
$\left.\frac{\partial P}{\partial w}\right|_{(\pm1,\pm1)}=0$, and
$Hessian[P(z,w)]$ is positive definite wherever a real intersection
of $P(z,w)=0$ and $\mathbb{T}^2$ exists, then the theorem follows.
\end{proof}

\section{Local Statistics}
\subsection{Configuration at Single Vertex}
We are interested in the probability of a specified configuration at
a single vertex. Consider a realizable vertex model on a planar
finite hexagonal lattice, whose partition function is $S$. Assume
both the vertex with specified signature and its neighbors lie in
the interior of the graph. We work out an explicit example here;
other cases are very similar. If only the configuration $\{011\}$ is
allowed at a black vertex $v$, we get a new vertex model, assume
partition function is $S_{001}$. Then the probability that the local
configuration $\{001\}$ appear at $v$ is
\[Pr(\{001\},v)=\frac{S_{001}}{S}.\]
Hence the local statistics problem reduces to the problem of how to
compute $S_{001}$ efficiently. We will show that $S_{001}$ can also
be computed by the dimer technique with the help of the holographic
algorithm.

 Let us denote the adjacent
vertices of $v$ by $v_a,v_b,v_c$, according to the direction of the
connecting edges, as illustrated in Figure 11.
\begin{figure}[htbp]
  \centering
\includegraphics*[266,756][350,818]{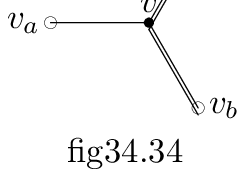}
   \caption{Configuration {011} at a Black Vertex }
\end{figure}

Let $r_v,\ r_a,\ r_b,\ r_c$ denote the signatures of relations at
vertices $v,\ v_a,\ v_b,\ v_c$, then we have
\begin{align*}
\left(\begin{array}{c}{r_v}_{\{000\}}\\{r_v}_{\{001\}}\\{r_v}_{\{010\}}\\{r_v}_{\{011\}}\\{r_v}_{\{100\}}\\{r_v}_{\{101\}}\\{r_v}_{\{110\}}\\{r_v}_{\{111\}}\end{array}\right)=\left(\begin{array}{c}0\\0\\0\\y_4\\0\\0\\0\\0\end{array}\right)\qquad
\left(\begin{array}{c}{r_a}_{\{000\}}\\{r_a}_{\{001\}}\\{r_a}_{\{010\}}\\{r_a}_{\{011\}}\\{r_a}_{\{100\}}\\{r_a}_{\{101\}}\\{r_a}_{\{110\}}\\{r_a}_{\{111\}}\end{array}\right)=\left(\begin{array}{c}x_1^a\\x_2^a\\
x_3^a\\x_4^a\\\star\\\star\\\star\\\star \end{array}\right)\\
\left(\begin{array}{c}{r_b}_{\{000\}}\\{r_b}_{\{001\}}\\{r_b}_{\{010\}}\\{r_b}_{\{011\}}\\{r_b}_{\{100\}}\\{r_b}_{\{101\}}\\{r_b}_{\{110\}}\\{r_b}_{\{111\}}\end{array}\right)=\left(\begin{array}{c}\star\\\star
\\x_3^b\\x_4^b\\\star \\\star \\x_7^b\\x_8^b\end{array}\right)\qquad
\left(\begin{array}{c}{r_c}_{\{000\}}\\{r_c}_{\{001\}}\\{r_c}_{\{010\}}\\{r_c}_{\{011\}}\\{r_c}_{\{100\}}\\{r_c}_{\{101\}}\\{r_c}_{\{110\}}\\{r_c}_{\{111\}}\end{array}\right)=\left(\begin{array}{c}
\star\\x_2^c\\\star
\\x_4^c\\\star \\x_6^c\\\star\\x_8^c \end{array}\right)\\
\end{align*}
Here $\star$ means the entry at the position can be arbitrary. This
is because we only allow the configuration $\{011\}$ at $v$,
therefore the only configurations which actually affect the
partition function of the vertex model will be those who do not
occupy the a-edge of $v_a$, and occupy both the b-edge for $v_b$ and
the c-edge for $v_c$. We will split each of $r_a,r_b,r_c$ into 2
parts, namely, $r_l=r_l(0)+r_l(1),\ l=a,b,c$, below. By definition,
the partition function of the vertex model with signature
$r,r_a,r_b,r_c$ is a sum of 8 terms, each of which is the partition
function of a vertex model with signature $r_v,r_a(i),r_b(j),r_c(k)\
i,j,k\in\{0,1\}$. That is,
\begin{eqnarray}
S_{001}=S_{\{r_v,r_a,r_b,r_c\}}}=\sum_{i,j,k\in\{0,1\}}S_{\{r_v,r_a(i),r_b(j),r_c(k)\}\label{splitvertexmodel}
\end{eqnarray}
For each $S_{\{r_v,r_a(i),r_b(j),r_c(k)\}}$, we give new bases on
edges adjacent to $v$, such that $S_{\{r_v,r_a(i),r_b(j),r_c(k)\}}$
become the partition function of certain local dimer configurations.
Namely
\begin{eqnarray}
S_{\{r_v,r_a(i),r_b(j),r_c(k)\}}\buildrel base\ change \over
=Z_{\{m_v(i,j,k),m_a(i),m_b(j),m_c(k)\}}\label{basechange}
\end{eqnarray}
where $Z_{\{m_v(i,j,k),m_a(i),m_b(j),m_c(k)\}}$ is the partition
function of the dimer model on a matchgrid with signature
$m_v(i,j,k),m_a(i),m_b(j),m_c(k)$. Assume the original matchgates
$u_a,\ u_b,\ u_c$ have weights
\begin{align*}
m_l=\left(\begin{array}{cccccccc}0&a_2^l&b_2^l&0&c_2^l&0&0&d_2^l\end{array}\right)^t,\
\ \ l=a,b,c
\end{align*}
see Figure 12.
\begin{figure}[htbp]
  \centering
\includegraphics*[238,673][379,820]{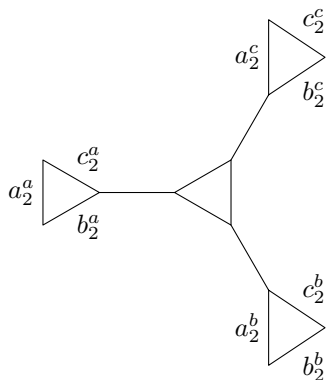}
   \caption{Weighted Matchgate }
\end{figure}
We require
\begin{eqnarray*}
m_a(1)=\left(\begin{array}{cccccccc}0&0&0&0&c_2^a&0&0&d_2^a\end{array}\right)^t\\
m_a(0)=\left(\begin{array}{cccccccc}0&a_2^a&b_2^a&0&0&0&0&0\end{array}\right)^t\\
m_b(1)=\left(\begin{array}{cccccccc}0&0&b_2^b&0&0&0&0&d_2^b\end{array}\right)^t\\
m_b(0)=\left(\begin{array}{cccccccc}0&a_2^b&0&0&c_2^b&0&0&0\end{array}\right)^t\\
m_c(1)=\left(\begin{array}{cccccccc}0&a_2^c&0&0&0&0&0&d_2^c\end{array}\right)^t\\
m_c(0)=\left(\begin{array}{cccccccc}0&0&b_2^c&0&c_2^c&0&0&0\end{array}\right)^t.\label{dimersignature}
\end{eqnarray*}

Notice that for any $l\in\{a,b,c\}$, $m_l(0)$ corresponds to
configurations with an unoccupied $l$-edge; and $m_l(1)$ corresponds
to configurations with an occupied $l$-edge. We are trying to
determine the new base change matrices on incident edges of $v$, for
each $r_v,r_a(i),r_b(j),r_c(k)$, such that
(\ref{splitvertexmodel}),(\ref{basechange}), (\ref{dimersignature})
are satisfied simultaneously.

Assume on the adjacent edges of $v_a,\ v_b,\ v_c$ we have original
base change matrix
\begin{align*}
T_k^j=\left(\begin{array}{cc}n_{0k}^j&n_{1k}^j\\p_{0k}^j&p_{1k}^j\end{array}\right)\
\ \ k=1,2,3;\ j=a,b,c
\end{align*}

On an a-type edge adjacent to $v$, we consider 2 possible base
change matrices
\begin{align*}
S_1^0=\left(\begin{array}{cc}m_{01}^0&0\\q_{01}^0&q_{11}^0\end{array}\right)\qquad
S_1^1=\left(\begin{array}{cc}0&m_{11}^1\\q_{01}^1&q_{11}^1\end{array}\right)
\end{align*}
On a b-type edge adjacent to $v$, we consider 2 possible base change
matrices
\begin{align*}
S_2^0=\left(\begin{array}{cc}m_{02}^0&m_{12}^0\\q_{02}^0&0\end{array}\right)\qquad
S_2^1=\left(\begin{array}{cc}m_{02}^1&m_{12}^1\\0&q_{12}^1\end{array}\right)
\end{align*}
On a c-type edge adjacent to $v$, we consider 2 possible base change
matrices
\begin{align*}
S_3^0=\left(\begin{array}{cc}m_{03}^0&m_{13}^0\\q_{03}^0&0\end{array}\right)\qquad
S_3^1=\left(\begin{array}{cc}m_{03}^1&m_{13}^1\\0&q_{13}^1\end{array}\right)
\end{align*}

For signatures $r_v,r_a(i),r_b(j),r_c(k)$, we choose basis
$S_1^i,S_2^j,S_3^k$ on edges $a,b,c$. By the realizability condition
at vertex $v$, we have
\begin{align*}
m_v(i,j,k)=(S_1^i\otimes S_2^j\otimes S_3^k)^t\cdot
r_v=e_{\{ijk\}}w_{ijk},
\end{align*}
where
\[w_{ijk}=m_{i1}^iq_{j2}^jq_{k3}^ky_4,\]
and $e_{\{ijk\}}$ is the 8 dimensional vector with entry 1 at the
position labeled by the binary sequence $\{ijk\}$, and 0 elsewhere.
Obviously the choice of new base change matrices, namely, the
position of zeros in the new base change matrices, results in the
single configuration $\{ijk\}$ at the matchgate corresponding to
$v$. Now the problem is to determine the entries of the base change
matrices satisfying the equations.

According to realizability conditions at vertices $v_a,v_b,v_c$,
\begin{eqnarray}
r_a&=&\Sigma_{i=1}^2 r_a(i)=\Sigma_{i=1}^2 S_1^i\otimes T_2^a\otimes
T_3^a\cdot m_a(i)\\
r_b&=&\Sigma_{j=1}^2 r_a(j)=\Sigma_{i=1}^2 T_1^b\otimes S_2^j\otimes
T_3^b\cdot m_b(j)\\
r_c&=&\Sigma_{k=1}^2 r_c(k)=\Sigma_{k=1}^2 T_1^c\otimes T_2^c\otimes
S_3^k\cdot m_c(k).\label{splitrealizability}
\end{eqnarray}

For the original basis, we have the realizability condition as
follows:
\begin{eqnarray}
r_l=T_1^l\otimes T_2^l\otimes T_3^l\cdot
m_l.\label{originalrealizability}
\end{eqnarray}
Substituting $r_a,r_b,r_c$ by (\ref{originalrealizability}) in
(\ref{splitrealizability}) we have the following system of linear
equations with respect to entries of $S$
\begin{align*}
T_2^a\otimes
T_3^a\left(\begin{array}{c}m_{11}^1c_2^a\\m_{01}^2a_2^a\\m_{01}^2b_2^a\\m_{11}^1d_2^a\end{array}\right)=\left(\begin{array}{c}x_1^a\\x_2^a\\x_3^a\\x_4^a\end{array}\right)=T_2^a\otimes
T_3^a\left(\begin{array}{c}n_{11}^ac_2^a\\n_{01}^aa_2^a\\n_{01}^ab_2^a\\n_{11}^ad_2^a\end{array}\right)\\
T_1^b\otimes
T_3^b\left(\begin{array}{c}q_{12}^1b_2^b\\q_{02}^2a_2^b\\q_{02}^2c_2^b\\q_{12}^1d_2^b\end{array}\right)=\left(\begin{array}{c}x_3^b\\x_4^b\\x_7^b\\x_8^b\end{array}\right)=T_1^b\otimes
T_3^b\left(\begin{array}{c}p_{12}^bb_2^b\\p_{02}^ba_2^b\\p_{02}^bc_2^b\\p_{12}^bd_2^b\end{array}\right)\\
T_1^c\otimes
T_2^c\left(\begin{array}{c}q_{13}^1a_2^c\\q_{03}^2b_2^c\\q_{03}^2c_2^b\\q_{13}^1d_2^b\end{array}\right)=\left(\begin{array}{c}x_2^c\\x_4^c\\x_6^c\\x_8^c\end{array}\right)=T_1^c\otimes
T_2^c\left(\begin{array}{c}p_{13}^ca_2^c\\p_{03}^cb_2^c\\p_{03}^cc_2^b\\p_{13}^cd_2^b\end{array}\right)
\end{align*}
Under the assumption that original base change matrices are
invertible, we have
\begin{align*}
m_{11}^1=n_{11}^a,\qquad m_{01}^2=n_{01}^a\\
q_{12}^1=p_{12}^b,\qquad q_{02}^2=p_{02}^b\\
q_{13}^1=p_{13}^c,\qquad q_{03}^2=p_{03}^c
\end{align*}
Therefore, we only need to choose the nonzero entries of the new
base change matrices to be the same as the original ones.

For the other seven configurations at vertex $v$, we can use the
same method to achieve a similar result. By splitting each one of
$r_a$, $r_b$, $r_c$ into 2 parts, we express the partition function
of the satisfying assignments as the sum of 8 terms, each one of
which is the partition function of the relations with signature
$r_v,\ r_a(i),\ r_b(j), r_c(k)$, $i,j,k\in\{0,1\}$. We apply the
holographic reduction to each part separately, and we derive that
the partition function of the relations with signature $r_v,\
r_a(i),\ r_b(j), r_c(k)$, $i,j,k\in\{0,1\}$ is equal to the
partition function of the dimer configurations on the corresponding
matchgrids with signature $m_v(i,j,k),m_a(i),m_b(j),m_c(k)$, which
corresponds to the local configuration $\{ijk\}$. This may not be a
dimer configuration; to make it a dimer configuration, let us divide
those 8 terms into two groups: one consists of all $\{i,j,k\}$
satisfying $i+j+k\equiv 0(\mod\ 2)$, namely
$\{000\},\{011\},\{101\},\{110\}$; the other consists of all
$\{i,j,k\}$ satisfying $i+j+k\equiv 1(\mod\ 2)$. The sum of
partition functions in the first group is equal to the partition
function of dimer configurations in a matchgrid with weights
illustrated in the left graph of Figure 11, where
$t=\frac{w_{000}}{w_{011}+w_{101}+w_{110}}$; the sum of partition
functions corresponding to $\{001\},\{010\},\{100\},\{111\}$ is
equal to the partition function of dimer configurations in a
matchgrid with weights as illustrated in the right graph of Figure
11, up to a multiplication constant $w_{111}$.

\begin{figure}[htbp]
  \centering
\includegraphics*[232,672][379,821]{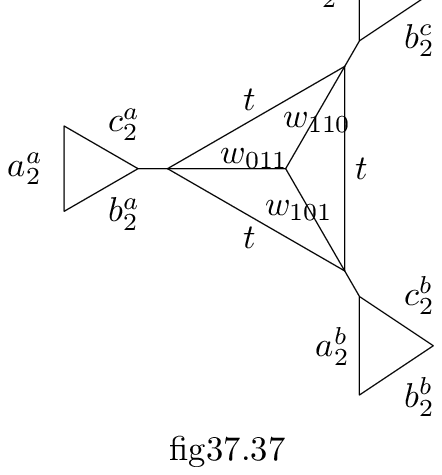}\qquad\includegraphics*[232,672][379,821]{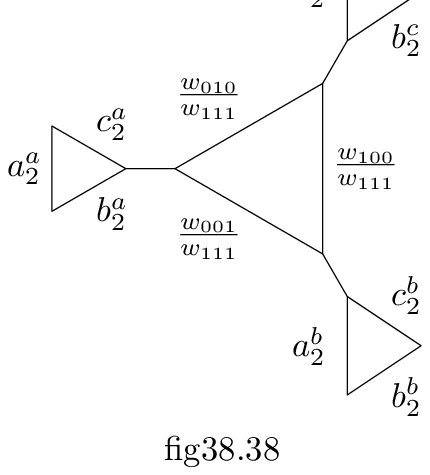}
   \caption{Even and Odd Matchgates}
\end{figure}

However, in the left graph we change the parity of the total number
of vertices, as a result, there is no dimer. Therefore we have the
following theorem

\begin{theorem}
Assume we have a vertex model $\mathcal{H}_0$ on a finite hexagonal
lattice, which is holographic equivalent to the dimer model on a
Fisher graph $\mathcal{F}_0$. If at a fixed interior vertex $v$ of
$\mathcal{H}_0$, only one local configuration is allowed, then the
partition function of the new vertex model $\mathcal{H}_1$ can be
computed by the partition function of the dimer model on a Fisher
graph $\mathcal{F}_1$, multiplied by a non-zero constant.
$\mathcal{F}_0$ and $\mathcal{F}_1$ have the same edge weights
except for the matchgate corresponding to $v$.
\end{theorem}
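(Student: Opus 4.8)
The plan is to recognise that the explicit computation carried out above for the single allowed configuration $\{011\}$ at a black vertex $v$ is already a complete proof of the theorem in that case, and to argue that the identical construction applies to any allowed local configuration at any interior vertex. I would organise the argument into three steps: a multilinear reduction, a local base change, and a parity dichotomy.

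First, fixing the single allowed configuration at $v$ replaces $r_v$ by a signature supported on one entry, and expanding the vertex-model partition function in the three neighbouring signatures $r_a,r_b,r_c$ (each written as $r_l=r_l(0)+r_l(1)$) gives a sum of eight terms,
\[
S_{\mathcal{H}_1}=\sum_{i,j,k\in\{0,1\}}S_{\{r_v,r_a(i),r_b(j),r_c(k)\}}.
\]
This is pure multilinearity of the partition function in each signature vector and uses no positivity, so the only data that matter are the prescribed entries of $r_a,r_b,r_c$ (the starred entries never contribute).

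Second, for each triple $(i,j,k)$ I would install near $v$ the three base change matrices $S_1^i,S_2^j,S_3^k$ with the prescribed zero pattern. The key point, verified by the realizability equation (\ref{rlbcd}), is that these turn $r_v$ into the single-configuration matchgate signature $m_v(i,j,k)=e_{\{ijk\}}w_{ijk}$ with $w_{ijk}=m_{i1}^iq_{j2}^jq_{k3}^ky_4\neq0$, while the neighbour realizability relations (\ref{splitrealizability}) reduce, after substituting (\ref{originalrealizability}), to a linear system whose solution lets one take the surviving entries of $S_1^i,S_2^j,S_3^k$ equal to the corresponding entries of the original base change matrices $T_k^j$. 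Consequently the gadgets at $v_a,v_b,v_c$ — and hence at every vertex other than $v$ — are inherited unchanged from $\mathcal{F}_0$, which is exactly the assertion that $\mathcal{F}_0$ and $\mathcal{F}_1$ agree off the gadget of $v$; and each term becomes a dimer partition function, $S_{\{r_v,r_a(i),r_b(j),r_c(k)\}}=Z_{\{m_v(i,j,k),m_a(i),m_b(j),m_c(k)\}}$ as in (\ref{basechange}).

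Third, I would split the eight triples by the parity of $i+j+k$ and invoke the parity constraint. Since $m_v(i,j,k)$ is supported on the single entry $\{ijk\}$, it is an even matchgate exactly when $i+j+k$ is even and an odd matchgate exactly when $i+j+k$ is odd. With all the remaining gadgets odd, the four even triples $\{000\},\{011\},\{101\},\{110\}$ force a gadget of the wrong parity at $v$, so the corresponding matchgrid (the left graph of the figure, with $t=\tfrac{w_{000}}{w_{011}+w_{101}+w_{110}}$) admits no perfect matching and contributes zero; the four odd triples $\{001\},\{010\},\{100\},\{111\}$ assemble into a single odd matchgate (the right graph), so that $S_{\mathcal{H}_1}$ equals the dimer partition function of the resulting Fisher graph $\mathcal{F}_1$ times the nonzero constant $w_{111}$. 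This is precisely the statement of the theorem.

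The main obstacle is the second step: producing base change matrices with the imposed zero pattern that simultaneously realize a one-hot signature at $v$ and respect realizability at all three neighbours without altering their gadgets. The delicate verification is that the nonzero entries may be inherited unchanged from the original matrices; this rests on invertibility of the original base change matrices (the standing nonzero-entries assumption). Once this is in place, the parity bookkeeping is routine, and the case of an arbitrary allowed configuration at an arbitrary interior vertex follows from the worked case by the evident symmetry of the construction among the three edge directions and the two vertex colours.
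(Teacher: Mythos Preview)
Your proposal is correct and follows essentially the same approach as the paper: the multilinear split of the three neighbouring signatures into eight terms, the local base change with zero patterns whose nonzero entries are inherited from the original $T_k^j$, and the parity dichotomy that kills the even group and leaves the odd group as a single modified matchgate with multiplicative constant $w_{111}$. Your organisation into three labelled steps is cleaner than the paper's running computation, but the mathematical content is identical.
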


\subsection{Configuration at Finitely Many Vertices}
We are interested in the probability that one specified
configuration is allowed at each of the given finitely many vertices
for a realizable vertex model on a hexagonal lattice.  For
simplicity, assume all of them are black. Let $V_0^b$ be the set of
all black vertices with specified configuration. To compute the
probability, we first give a criterion to construct the new base
change matrices on incident edges of $V_0^b$ according to whether
the edge is present in the dimer configuration and in the vertex
model configuration. Assume
$\left(\begin{array}{cc}n_0&n_1\\p_0&p_1\end{array}\right)$ is the
original base change matrix on the edge,
\\
\newcommand{\ZZ}[2]{\rule[#1]{0pt}{#2}}
\begin{center}
\begin{tabular}{|c|c|c|}\hline
\multicolumn{1}{|c|}{\ZZ{-8pt}{25pt}\textbf{Base Change Matrices}}
&\multicolumn{1}{|c|} {\ZZ{-8pt}{25pt}\textbf{Presence in Vertex
Configuration}}
&\multicolumn{1}{|c|}{\ZZ{-8pt}{25pt}\textbf{Presence in
Dimer}}\\\hline
\multicolumn{1}{|c|}{\small$\left(\begin{array}{cc}0&
n_1\\p_0&p_1\end{array}\right)$}&\multicolumn{1}{|c|}{No}&\multicolumn{1}{|c|}{Yes}\\\hline
\multicolumn{1}{|c|}{\small$\left(\begin{array}{cc}n_0&
0\\p_0&p_1\end{array}\right)$}&\multicolumn{1}{|c|}{No}&\multicolumn{1}{|c|}{No}\\\hline
\multicolumn{1}{|c|}{\small$\left(\begin{array}{cc}n_0&
n_1\\0&p_1\end{array}\right)$}&\multicolumn{1}{|c|}{Yes}&\multicolumn{1}{|c|}{Yes}\\\hline
\multicolumn{1}{|c|}{\small$\left(\begin{array}{cc}n_0&
n_1\\p_0&0\end{array}\right)$}&\multicolumn{1}{|c|}{Yes}&\multicolumn{1}{|c|}{No}\\\hline
\end{tabular}
\end{center}
whether the edges incident to vertices in $V_0^b$ are present in the
configuration of the vertex model is known, given all the specified
configurations at $V_0^b$. As before, we are going to split the
signature of each adjacent white vertex into several parts, and we
apply the holographic reduction to each parts separately. Our
expectation is that after the reduction process, each part will be
equivalent to the partition function of a single local dimer
configuration. The new base change matrix on each edge is chosen
according to whether we want the edge to be present in the dimer
configuration or not after the reduction. As before, all the
non-vanishing entries of the new bases change matrices are equal to
the original ones; we will see how it works below.

Consider an arbitrary white vertex $w\in \Gamma(V_0^b)$, the
neighbors of $V_0^b$. Assume $r_w$ is the vertex signature at $w$,
and $m_w$ is the original dimer signature at the corresponding
matchgate. Assume
\begin{align*}
m_w=\left(\begin{array}{cccccccc}0&a_w&b_w&0&c_w&0&0&d_w\end{array}\right)^t
\end{align*}
Let $D(w)$ denote the number of adjacent vertices of $w$ with
specified configuration. We classify $\Gamma(V_0^b)$ according to
$D$.

If $D(w)=1$, we split $r_w=r_w(1)+r_w(2)$. Without loss of
generality, assume the left-digit edge(a-type edge, horizontal edge)
connects $w$ with $b\in V_0^b$, and the edge $wb$ is present in the
configuration of the vertex model. Assume
\begin{align*}
T_1=\left(\begin{array}{cc}n_{01}&n_{11}\\p_{01}&p_{11}\end{array}\right)\qquad
T_2=\left(\begin{array}{cc}n_{02}&n_{12}\\p_{02}&p_{12}\end{array}\right)\qquad
T_3=\left(\begin{array}{cc}n_{03}&n_{13}\\p_{03}&p_{13}\end{array}\right)
\end{align*}
are original base change matrices on the edges adjacent to $w$.
According to the table, the presence of $wb$ in the vertex model
configuration implies two possible choices of the new basis on the
horizontal edge, namely,
\begin{align*}
S_1^1=\left(\begin{array}{cc}n_{01}&n_{11}\\p_{01}&0\end{array}\right)\qquad
S_1^2=\left(\begin{array}{cc}n_{01}&n_{11}\\0&p_{11}\end{array}\right)
\end{align*}
while on the two other incident edges of $w$ we have the original
bases $T_2$, $T_3$. Assume
\begin{align*}
m_w(1)=\left(\begin{array}{cccccccc}0&a_w&b_w&0&0&0&0&0\end{array}\right)\\
m_w(2)=\left(\begin{array}{cccccccc}0&0&0&0&c_w&0&0&d_w\end{array}\right)
\end{align*}
Notice that the nonzero entries $m_w(1)$ have indices $\{001\}$ and
$\{010\}$, which corresponds to configurations without the a-edge
present; the non-vanishing entries of $m_w(2)$ have indices
$\{100\}$ and $\{111\}$, which corresponds to configurations with
the a-edge present. Choose
\begin{align*}
r_w(1)=S_1^1\otimes T_2\otimes T_3\cdot m_w(1)\\
r_w(2)=S_1^2\otimes T_2\otimes T_3\cdot m_w(2)
\end{align*}
Then we can check
\begin{align*}
r_w(1)+r_w(2)=\otimes_{i=1}^{3}T_i\cdot m_w=r_w
\end{align*}

If $D(w)=2$, we split $r_w=\sum_{i=1}^{4}r_w(i)$. Without loss of
generality, assume the middle-digit(b-type edge) and
right-digit(c-type edge) edges connects $w$ to $b_2,b_3\in V_0^b$,
$wb_2$ is present while $wb_3$ is not present in the configuration
of the vertex model. According to our criteria listed in the table,
on $wb_2$ we have two different base change matrices
\begin{align*}
S_2^1=\left(\begin{array}{cc}n_{02}&n_{12}\\p_{02}&0\end{array}\right)\qquad
S_2^2=\left(\begin{array}{cc}n_{02}&n_{12}\\0&p_{12}\end{array}\right)
\end{align*}
On $wb_3$ we have two different base change matrices
\begin{align*}
S_3^1=\left(\begin{array}{cc}n_{03}&0\\p_{03}&p_{13}\end{array}\right)\qquad
S_3^2=\left(\begin{array}{cc}0&n_{13}\\p_{03}&p_{13}\end{array}\right)
\end{align*}
while on the left-digit edge, we keep original basis $T_1$, the
original basis on the middle and right digit edge are $T_2$, $T_3$,
as in the $D(w)=1$ case. Assume
\begin{align*}
m_w(1)=\left(\begin{array}{cccccccc}0&a_w&0&0&0&0&0&0\end{array}\right)\\
m_w(2)=\left(\begin{array}{cccccccc}0&0&b_w&0&0&0&0&0\end{array}\right)\\
m_w(3)=\left(\begin{array}{cccccccc}0&0&0&0&c_w&0&0&0\end{array}\right)\\
m_w(4)=\left(\begin{array}{cccccccc}0&0&0&0&0&0&0&d_w\end{array}\right)
\end{align*}
Obviously, $m_w(4)$ corresponds to both edge being present, $m_w(3)$
corresponds to neither being present, $m_w(2)$ corresponds to b-edge
being present and c-edge not, and $m_w(1)$ corresponds to b-edge not
present and c-edge present. Again according to the table, choose
\begin{align*}
r_w(1)=T_1\otimes S_2^1\otimes S_3^2\cdot m_w(1)\\
r_w(2)=T_1\otimes S_2^2\otimes S_3^1\cdot m_w(2)\\
r_w(3)=T_1\otimes S_2^1\otimes S_3^1\cdot m_w(3)\\
r_w(4)=T_1\otimes S_2^2\otimes S_3^2\cdot m_w(4)
\end{align*}
we can check
\begin{align*}
\sum_{i=1}^{4}r_w(i)=\otimes_{1}^{3}T_{i}\cdot m_w=r_w
\end{align*}

If $D(w)=3$, we split $r_w=\sum_{i=1}^{4}r_w(i)$, and assume
$m_{w}(i),i=1,...,4$ as in $D(w)=2$. Without loss of generality,
assume $S_1^1,S_1^2$ as in $D(w)=1$,
$S_2^{1},S_2^{2},S_3^{1},S_3^{2}$ as in $D(w)=2$; that is, we have
vertex-model signature $\{110\}$ at $w$. Choose
\begin{align*}
r_w(1)=S_1^{1}\otimes S_2^1\otimes S_3^2\cdot m_w(1)\\
r_w(2)=S_1^{1}\otimes S_2^2\otimes S_3^1\cdot m_w(2)\\
r_w(3)=S_1^{2}\otimes S_2^1\otimes S_3^1\cdot m_w(3)\\
r_w(4)=S_1^{2}\otimes S_2^2\otimes S_3^2\cdot m_w(4)
\end{align*}
again we can check $r_w=\sum_{i=1}^{4}r_w(i)$ as in $D(w)=2$.

For all the other local configurations, the same technique works,
and we will have a similar result. Assume $V_0^b=\{v_1,...,v_p\},
\Gamma(V_0^b)=\{w_1,...,w_k\}$, then
\begin{align*}
S(\tilde{r}_{v_1},...,\tilde{r}_{v_p})
&=&[\otimes_{j=1}^{k}r_{w_j}\otimes
\otimes_{w\notin\Gamma(V_0^b)}r_w]\cdot[\otimes_{q=1}^{p}\tilde{r}_{v_q}\otimes \otimes_{b\notin V_0^b}r_b]\\
&=&\sum_{i_1,...,i_k}[\otimes_{j=1}^{k}r_{w_j}(i_j)\otimes
\otimes_{w\notin\Gamma(V_0^b)}r_w]\cdot[\otimes_{q=1}^{p}\tilde{r}_{v_q}\otimes
\otimes_{b\notin V_0^b}r_b]
\end{align*}
where $\tilde{r}_{v_q}$ is the specified configuration at the vertex
$v_q$. The first equality follows from the definition of the
partition function of satisfying assignments. When we compute the
tensor product of relation signatures of all black (white) vertices,
we get a vector of dimension $2^{|E(G)|}$, where $|E(G)|$ is the
number of edges in the planar finite graph $G$. This vector can be
indexed by binary sequences of length $|E(G)|$. Each binary sequence
corresponds to a configuration, and the entry there is the product
of weights of the local configurations obtained from the
configuration restricted to each black (white) vertices. Obviously
the inner product of the vector at black vertices and the vector at
white vertices are exactly the partition function of the satisfying
assignments. The second equality follows from the multi-linearity of
the tensor product. For $1\leq j\leq k$, if $D(w_j)=1$,
$i_j\in\{1,2\}$; if $D(w_j)=2,3$, $i_j\in\{1,2,3,4\}$. For each
summand, we choose a basis on incident edges of vertices in $V_0^b$
according to whether the edge is present in the dimer configuration
and relation configuration, and keep the original bases on all the
other edges. as described above. This is realizable because for each
$b\in V_0^b$, whether its incident edges are occupied by dimers are
completely determined in each part of the sum. In other words, each
term on the right side of the second equality corresponds to an
configuration on all the edges incident to vertices in $V_0^b$,
which is a local dimer configuration at each odd matchgate
corresponding to white vertices $w$'s with $D(w)=3$, To make them be
dimer configurations at each black matchgate, we divide those
configurations into groups according to the following criterion: two
configurations are in the same group if and only if the parity of
the number of occupied incident edges at each black vertex in
$V_0^b$ is the same. If zero or two incident edges are occupied, we
construct an even matchgate with modified weights at the black
vertex; if one or three incident edges are occupied, we construct an
odd matchgate with modified weights. Notice that the modified
weights depend only on the local configuration on the incident edges
of the vertex. Since $|V_0^b|=p$, we have $2^p$ different
constructions in total, but only $2^{p-1}$ of them admit dimer
cover, each of which has even number of even matchgates. Therefore
we have

\begin{theorem}
Assume we have a realizable vertex model $\mathcal{H}_0$ on a finite
hexagonal lattice, which is holographic equivalent to the dimer
model on  a Fisher graph $\mathcal{F}_0$. Let $V_0^b$ be a subset of
black vertices. If for each $v\in V_0^b$, we only allow one local
configuration, this way we obtain a new vertex model
$\mathcal{H}_1$. The partition function of $\mathcal{H}_1$ is equal
to the sum of partition functions of $2^{p-1}$ dimer models on
matchgrids $\mathcal{F}_1,...,\mathcal{F}_{2^{p-1}}$, each of which
has an even number of even matchgates. $\mathcal{F}_i(1\leq i\leq
2^{p-1})$ are the same $\mathcal{F}_0$, except for the matchgates
corresponding to vertices in $V_0^b$.
\end{theorem}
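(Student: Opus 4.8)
The plan is to reduce the $p$-vertex case to the single-vertex reduction of Theorem 5.1 by treating all specified vertices at once, the genuinely new feature being the white neighbors that touch more than one vertex of $V_0^b$. First I would write the partition function of $\mathcal{H}_1$ as the pairing of the tensor product of all white signatures against the tensor product of all black signatures,
\[S(\tilde r_{v_1},\dots,\tilde r_{v_p})=\Big[\otimes_{j=1}^{k}r_{w_j}\otimes\otimes_{w\notin\Gamma(V_0^b)}r_w\Big]\cdot\Big[\otimes_{q=1}^{p}\tilde r_{v_q}\otimes\otimes_{b\notin V_0^b}r_b\Big],\]
which is merely the definition of the vertex-model partition function regrouped by color. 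Then, for each white neighbor $w_j$, I would insert the splitting $r_{w_j}=\sum_{i_j}r_{w_j}(i_j)$ supplied by the explicit $D(w)=1,2,3$ constructions and expand by multilinearity of the tensor product, obtaining a sum over tuples $(i_1,\dots,i_k)$ in which every summand fixes, at each vertex of $V_0^b$, a definite pattern of occupied and unoccupied incident edges.

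Next I would realize each summand as a dimer weight. On every edge incident to $V_0^b$ I install the new base-change matrix prescribed by the table preceding the theorem, determined by whether that edge is to be present in the vertex-model configuration and in the dimer configuration, while keeping the original bases elsewhere; because each nonzero entry of a new base-change matrix coincides with the original one, the resulting matchgrid agrees with $\mathcal{F}_0$ except at the gates over $V_0^b$. Applying the holographic reduction of Section 3 locally then turns each summand into the partition function of a prescribed local dimer pattern, which is already a legitimate dimer configuration at every white matchgate and at every black matchgate outside $V_0^b$. The only place where a summand can fail to extend to a perfect matching is at the gates over $V_0^b$, where the parity of the number of occupied incident edges dictates, via the parity constraint, whether the modified gate is even or odd.

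Finally I would collect the summands by the vector of these $p$ parities, giving $2^p$ groups; within a group I build, at each $v\in V_0^b$, a single even or odd matchgate whose modified weights aggregate all local dimer patterns of the prescribed parity, so that summing the group reproduces the dimer partition function of one matchgrid $\mathcal{F}_i$ agreeing with $\mathcal{F}_0$ off $V_0^b$. The hard part is the parity accounting: since $\mathcal{F}_0$ has an even total vertex count with all gates odd, flipping a gate from odd to even toggles the global parity, so a group admits a perfect matching precisely when it contains an even number of even matchgates, and the number of the $2^p$ parity vectors with an even count of even gates is exactly $2^{p-1}$, because a $p$-element set has $2^{p-1}$ subsets of even cardinality. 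The remaining $2^{p-1}$ groups then have odd total vertex count, hence no dimer cover and zero contribution, leaving $S(\mathcal{H}_1)$ as the asserted sum of $2^{p-1}$ dimer partition functions.
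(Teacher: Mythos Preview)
Your proposal is correct and follows essentially the same approach as the paper's own argument in Section~5.2: write the partition function as the black--white tensor pairing, split each white neighbor's signature according to the $D(w)=1,2,3$ recipes, expand by multilinearity, install the table-prescribed base-change matrices on edges incident to $V_0^b$, and then regroup the resulting summands by the parity vector at the $p$ black gates. Your parity accounting (even total vertex count in $\mathcal{F}_0$, each odd-to-even flip toggles global parity, hence exactly the $2^{p-1}$ parity vectors with an even number of even gates survive) is precisely the reasoning behind the paper's closing sentence, stated a bit more explicitly than the paper does.
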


\subsection{Ising Model and Vertex Model}

Consider the Ising model on a finite Kagome lattice, embedded into a
torus, as illustrated in the following figure.

\begin{figure}[htbp]
  \centering
\includegraphics*[171,659][445,815]{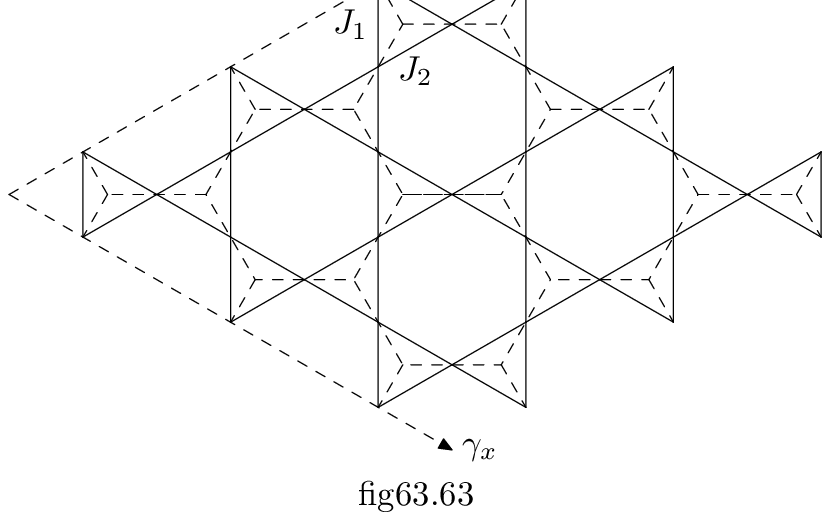}
   \caption{Kagome Lattice and Honeycomb Lattice}
\end{figure}

The associated honeycomb lattice is illustrated in the Figure by the
dashed line. Each vertex of the Kagome lattice corresponds to an
edge of the honeycomb lattice; hence each Ising spin configuration
on the Kagome lattice corresponds to an edge subset of the honeycomb
lattice. If the spin is ''$+$'', then the corresponding edge is
included in the subset; otherwise the edge is not included. Assume
the bonds of the Kagome lattice have interactions $J_1,J_2,J_3$, as
illustrated in Figure 14. We define a vertex model on the honeycomb
lattice with signature at all vertices as follows.
\begin{align*}
\left(\begin{array}{c}r_{v,000}\\r_{v,001}\\r_{v,010}\\r_{v,011}\\r_{v,100}\\r_{v,101}\\r_{v,110}\\r_{v,111}\end{array}\right)=\left(
\begin{array}{c}e^{2(J_1+J_2+J_3)}\\e^{2J_3}\\e^{2J_2}\\e^{2J_1}\\e^{2J_1}\\e^{2J_2}\\e^{2J_3}\\e^{2(J_1+J_2+J_3)}
\end{array}\right)
\end{align*}
This way the probability measure of the Ising model on the Kagome
lattice is equivalent to the probability measure of the vertex model
on the honeycomb lattice. It is trivial to check that this vertex
model is orthogonally realizable.

\section{Asymptotic Behavior}

In this section, we prove the main theorems concerning the
asymptotic behavior of realizable vertex models on the periodic
hexagonal lattice, as stated in the introduction. Consider an
infinite periodic graph $G$, with period $1\times 1$, see Figure 9.
Our technique to deal with such a graph is to consider a graph $G_n$
with $n^2$ $1\times 1$ fundamental domains, embed $G_n$ into a
torus, and consider the limit when $n\rightarrow\infty$, see Page 5.
Our first theorem is about the free energy of the infinite periodic
hexagonal lattice.
\\
\\
\noindent \textbf{Proof of Theorem 1.1} Assume $M_n$ is the
corresponding matchgrid with respect to $G_n$, and $P_n(z,w)$ is the
characteristic polynomial. Obviously $M_n$ is also a quotient graph
of a periodic infinite graph modulo a subgraph of $\mathbb{Z}^2$
generated by $(n,0)$ and $(0,n)$. The corresponding Kasteleyn
matrices here are defined given the orientation of Figure 9. For
even $n$, the crossing orientation can be obtained from the
orientation of Figure 9 by reversing all the $z$-edges and
$w$-edges. By Theorem 3.6, $M_n$ is a Fisher graph, and the
partition of the vertex model can be expressed as follows according
to the principle of holographic reduction:
\begin{align*}
S(G_n)=Z(M_n)=\frac{1}{2}|-\mathrm{Pf} K_n(1,1)+\mathrm{Pf}
K_n(1,-1)+\mathrm{Pf} K_n(-1,1)+\mathrm{Pf} K_n(-1,-1)|
\end{align*}
Thus
\begin{align*}
\max_{u,v\in\{-1,1\}}|\mathrm{Pf} P_n(u,v)|\leq Z(M_n)\leq
2\max_{u,v\in\{-1,1\}}|\mathrm{Pf} P_n(u,v)|
\end{align*}
On the other hand, according to the formula of enlarging fundamental
domains,
\begin{align*}
\frac{1}{n^2}\log\max_{u,v\in\{-1,1\}}|\mathrm{Pf}
K_n(u,v)|=\max_{u,v\in\{-1,1\}}\frac{1}{2n^2}\sum_{z^n=u}\sum_{w^n=v}\log|P(z,w)|
\end{align*}
By Theorem 4.4, either $P(z,w)$ has no zeros on $\mathbb{T}^2$, or
it has a single real node, in which case any sample point in
$\max_{u,v\in\{-1,1\}}\frac{1}{2n^2}\sum_{z^n=u}\sum_{w^n=v}\log|P(z,w)|$
is at least $\frac{C}{n}$ from the real node, for some constant
$C>0$. Therefore
\begin{align*}
\lim_{n\rightarrow\infty}\frac{1}{n^2}S(G_n)&=&\lim_{n\rightarrow\infty}\max_{u,v\in\{-1,1\}}\frac{1}{2n^2}\sum_{z^n=u}\sum_{w^n=v}\log|P(z,w)|\\
&=&\frac{1}{8\pi^2}\iint_{|z|=1,|w|=1}\log
P(z,w)\frac{dz}{iz}\frac{dw}{iw}
\end{align*}$\Box$

For each $G_n$, a measure $\lambda_n$ is defined as in
\eqref{measure}. Assume at a fixed vertices $v$, we only allow
configuration $c_v$. Let $\mu_n$ be the Boltzmann measure of dimer
configurations on $M_n$. Let $\tilde{M}_{n}$ be the matchgrid
corresponding to the vertex model which only allows $c_v$ at $v$, as
described in Theorem 5.1. Let $m_v$ be the matchgate of
$\tilde{M}_n$, corresponding to $v$, and $d_{j}$ be a local dimer
configuration at $m_v$, $w_{d_{j}}$ be product of weights of
matchgate edges included in $d_{j}$, and $V_{d_{j}}$ be the set of
external vertices of matchgates $m_{v}$ occupied by dimer
configuration $d_{j}$. In our graph, every matchgate has 3 external
vertices. Our second theorem is about the asymptotic behavior of the
measure $\lambda_n$.
\\
\\
\noindent \textbf{Proof of Theorem 1.2} According to Theorem 5.1
\begin{align*}
\lambda_n(c_1,...,c_p)&=&\frac{Z(\tilde{M}_{n})}{Z(M_n)}
=\sum_{d_{j}}\frac{Z(\tilde{M}_{n}(d_{j}))}{Z(M_n)}\\
&=&\sum_{d_{j}}w_{d_{j}}\frac{Z(M_n\setminus V(d_{j}))}{Z(M_n)}\\
\end{align*}
where the sum is over all local dimer configurations $d_j$. Since
the number of local configurations is finite, it suffices to
consider $\lim_{n\rightarrow\infty}\frac{Z(M_n\setminus
V(d_{j}))}{Z(M_n)}$. Note that $\tilde{M}_n$ differs from $M_n$ only
on edge weights of $m_v$, hence $M_n\setminus V(d_{j})$ and
$\tilde{M}_n\setminus V(d_{j})$ are the same.

Given $d_{j}$, let $W_{n,d_{j}}=M_n\setminus V(d_{j})$ be the
subgraph of $M_n$ by removing all vertices occupied by the
configuration $(d_{j})$, as well as their incident edges. Then
\begin{align*}
\frac{Z(W_{n,d_{j}})}{Z(M_n)}=\frac{1}{2Z_n}|-\mathrm{Pf}(K_n^{11}(W_{n,d_{j}}))+\mathrm{Pf}(K_n^{-1,1}(W_{n,d_{j}}))\\
+\mathrm{Pf}(K_n^{1,-1}(W_{n,d_{j}}))+\mathrm{Pf}(K_n^{-1,-1}(W_{n,d_{j}}))|
\end{align*}
First of all, let us assume $P(z,w)$ has no zeros on $\mathbb{T}^2$.
According to the formula of enlarging fundamental domains, for any
$(\theta,\tau)\in\{-1,1\}$, $Pf(K_n^{\theta,\tau})\neq 0$. Then
\begin{align*}
|\mathrm{Pf}(K_n^{\theta,\tau})_{E^c}|=|\mathrm{Pf}(K_n^{\theta,\tau})^{-1}_{E}||\mathrm{Pf}(K_n^{\theta,\tau})|
\end{align*}
In \cite{ckp,kos}, it was proved that given two vertices
$(u,x_1,y_1)$ and $(v,x_2,y_2)$
\begin{align*}
(K_n^{\theta,\tau})^{-1}((u,x_1,y_1),(v,x_2,y_2))=\frac{1}{n^2}\sum_{z^n=\theta}\sum_{w^n=\tau}z^{x_1-x_2}w^{y_1-y_2}\frac{\mathrm{Cof}(K(z,w))_{u,v}}{P(z,w)}
\end{align*}
Since $P(z,w)$ has no zeros on $\mathbb{T}^2$, we have
\begin{align*}
\lim_{n\rightarrow\infty}(K_n^{\theta,\tau})^{-1}((u,x_1,y_1),(v,x_2,y_2))&=&\frac{1}{4\pi^2}\iint_{\mathbb{T}^2}z^{x_1-x_2}w^{y_1-y_2}\frac{\mathrm{Cof}(K(z,w))_{u,v}}{P(z,w)}\frac{dz}{iz}\frac{dw}{iw}\\
&=&K_{\infty}^{-1}((u,x_1,y_1),(v,x_2,y_2))
\end{align*}
As $n\rightarrow\infty$, each entry of $(K_n^{\theta\tau})^{-1}$ is
convergent, so is the Pfaffian of a finite order sub-matrix
$(K_n^{\theta\tau})^{-1}_{V(d_{j})}$, and we have
\begin{align*}
\lim_{n\rightarrow\infty}\frac{Z(M_n\setminus
V(d_{j}))}{Z(M_n)}=|\mathrm{Pf}(K_{\infty}^{-1})_{V(d_{j})}|
\end{align*}

If $P(z,w)$ has a real node on $\mathbb{T}^2$, without loss of
generality, we can assume the real node is $(1,1)$. It was proved in
\cite{bt} that if $P(z,w)=0$ has a node at $(1,1)$, then for any
fixed finite subset E
\begin{align*}
\lim_{n\rightarrow\infty}\frac{1}{2Z_n}[\mathrm{Pf}(K_n^{-1,1})_{E^c}
+\mathrm{Pf}(K_n^{1,-1})_{E^c}+\mathrm{Pf}(K_n^{-1,-1})_{E^c}]
=\mathrm{Pf}(K_{\infty}^{-1})_{E}
\end{align*}
and
\begin{align*}
\lim_{n\rightarrow\infty}\frac{\mathrm{Pf}(K_n^{1,1})_{E^c}}{2Z_n}=0
\end{align*}
If we take $E=V(d_{i,j},...,d_{p,j})$, our theorem follows. $\Box$

\section{Simulation of 1-2 Model}

Assume we have 1-2 model with signature
$r_s=\left(\begin{array}{cccccccc}0&c&b&a&a&b&c&0\end{array}\right)^t$
at all vertices. In other words, at each vertex, either one or two
edges are allowed to be present in a local configuration. By Section
3.2, the signature is orthogonally realizable with base change
matrix
$T=\left(\begin{array}{cc}\cos\frac{3\pi}{4}&\sin\frac{3\pi}{4}
\\-\sin\frac{3\pi}{4}&\cos\frac{3\pi}{4}\end{array}\right)$ on all edges.

We define a discrete-time, time-homogeneous Markov chain
$\mathfrak{M}_t$ with state space the set of all configurations of
the 1-2 model. For an $n\times n$ honeycomb lattice embedded into a
torus, the state space is finite, and let us denote it by
$\{i_k\}_{k=1}^{K}$. Let $\Gamma(i_k)$ be the set of configurations
that can be obtained from configuration $i_k$ by adding or deleting
a single edge. Assume $p,q,r$ are edge variables, namely
$p,q,r\in\{a,b,c\}$, and $\{p,q,r\}=\{a,b,c\}$. Define
$\Gamma_{p,+q}(i_k)$ be the set of configurations of 1-2 model which
can be obtained from $i_k$ by adding a single $q$-edge $uv$; before
adding $uv$, only a $p$-edge is present at both $u$ and $v$. Define
$\Gamma_{p,-q}(i_k)$ be the set of configurations of 1-2 model which
can be obtained from $i_k$ by deleting a single $q$-edge $uv$; after
deleting $uv$, only a $p$-edge is present at both $u$ and $v$.
Define
$\Gamma_{0}(i_k)=\Gamma(i_k)\setminus\{\cup_{p,q\in{a,b,c},p\neq
q}\Gamma_{p,+q}(i_k)\cup_{p,q\in{a,b,c},p\neq
q}\Gamma_{p,-q}(i_k)\}$. Define entries of transition matrix for
$\mathfrak{M}_t$ as follows:
\begin{align*}
P(i_l|i_k)=\left\{\begin{array}{cc}\frac{1}{3n^2}&\mathrm{if}\
i_l\in\Gamma_{0}(i_k)\\ \frac{1}{3n^2}&\mathrm{if}\
i_l\in\Gamma_{p,+q}(i_k)\ \mathrm{and}\ r\geq p\\
\frac{1}{3n^2}\frac{r^2}{p^2}&\mathrm{if}\ i_l\in\Gamma_{p,+q}(i_k)\
\mathrm{and}\ r<p\\ \frac{1}{3n^2}&\mathrm{if}\
i_l\in\Gamma_{p,-q}(i_k)\ \mathrm{and}\ r\leq p\\
\frac{1}{3n^2}\frac{p^2}{r^2}&\mathrm{if}\ i_l\in\Gamma_{p,-q}(i_k)\
\mathrm{and}\ r>
p\\1-\sum_{_{i_j\in\Gamma(i_k)}}P(i_j|i_k)&\mathrm{if}\
i_l=i_k\\0&\mathrm{else}\end{array}\right.
\end{align*}
Obviously, $\mathfrak{M}_t$ is aperiodic. For more information on
Markov chain, see \cite{lawler,lpw}. Moreover, we have

\begin{proposition}
$\mathfrak{M}_t$ is irreducible.
\end{proposition}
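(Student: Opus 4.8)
The plan is to recast irreducibility as a connectivity statement about an undirected graph. First I would observe that the allowed transitions are reversible on the configuration space: from the definition of $P(\cdot\,|\,\cdot)$ every element of $\Gamma(i_k)$ receives strictly positive probability (all the weights $a,b,c$ are positive, so the ratios $r^2/p^2$ and $p^2/r^2$ are finite and positive), and if $i_l\in\Gamma_{p,+q}(i_k)$ then $i_k\in\Gamma_{p,-q}(i_l)$, with the analogous statement for the $\Gamma_0$-moves. Hence $P(i_l\,|\,i_k)>0$ if and only if $P(i_k\,|\,i_l)>0$, so $\mathfrak{M}_t$ is irreducible precisely when the \emph{flip graph} $\mathcal{G}$ — whose vertices are the valid $1$-$2$ configurations on $G_n$ and whose edges join two configurations differing by a single admissible edge addition or deletion — is connected.

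Next I would record the two structural facts that govern the moves. Because each vertex must keep degree $1$ or $2$, one may add an edge $uv$ only when $u$ and $v$ both have degree $1$, and one may delete an edge $uv$ only when $u$ and $v$ both have degree $2$. Consequently every valid configuration is a spanning disjoint union of simple paths (of length $\ge 1$) and cycles covering all vertices, and the moves act on this structure by merging two path-ends, closing a path into a cycle, splitting a path at an interior edge, or opening a cycle into a path. I would fix as a reference the perfect matching $\rho$ consisting of all $a$-type edges — each vertex is incident to exactly one $a$-edge, so $\rho$ is a valid configuration with all degrees equal to $1$ — and reduce the proposition to showing that every configuration $C$ is connected in $\mathcal{G}$ to $\rho$.

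The reduction itself I would carry out in two phases. In the first phase I eliminate all cycles: every vertex of a cycle has degree $2$, so deleting any one of its edges is admissible and turns the cycle into a path; after finitely many such deletions $C$ becomes a spanning linear forest. In the second phase I transform this forest into $\rho$: a path on an even number of vertices can be cut down to single edges by deleting interior edges (both endpoints there having degree $2$), whereas a path on an odd number of vertices cannot be matched internally and must first be merged with another odd path, which is possible since $G_n$ has $2n^2$ vertices and hence an even number of odd paths. The main obstacle is precisely this second phase: the moves are rigid — one can never create a degree-$0$ or degree-$3$ vertex, so for instance a path on three vertices admits no deletion whatsoever — and on the torus a path may wind homologically nontrivially. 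I expect to overcome both difficulties with \emph{composite sliding moves}, namely adding an edge at a degree-$1$ path-end to an adjacent degree-$1$ vertex and then deleting an edge elsewhere, which translate and reconnect path-endpoints so that any two prescribed endpoints can be made lattice-adjacent and joined, and which allow homologically nontrivial paths to be unwound. Feeding these composite moves into an induction on the size of the symmetric difference $C\bigtriangleup\rho$ then drives $C$ to $\rho$, establishing connectivity of $\mathcal{G}$ and hence irreducibility of $\mathfrak{M}_t$.
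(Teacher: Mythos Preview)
Your overall strategy matches the paper's: recast irreducibility as connectivity of the flip graph, anchor at dimer configurations, identify paths with an odd number of vertices as the obstruction, and resolve by local moves. The paper organizes this slightly differently --- it first shows any two perfect matchings are connected (their symmetric difference is a union of even cycles, around each of which one adds the new matching's edges one by one and then deletes the old ones), and separately shows every $1$-$2$ configuration reaches \emph{some} matching --- but this is cosmetic.

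The genuine gap is your sliding step. Your composite move ``add an edge at a degree-$1$ path-end to an adjacent degree-$1$ vertex, then delete elsewhere'' presupposes that the path-end in question \emph{has} a lattice neighbor of degree $1$; nothing you have written rules out the situation where every neighbor of an endpoint is an interior vertex of another component, and if such a neighbor happens to be the center of a three-vertex path then no edge incident to it can be deleted either, so no slide is available from that side. The paper closes this gap by (i) first cutting every odd path down so that only dimers and three-vertex paths remain, and then (ii) exploiting bipartiteness: each three-vertex path is BWB or WBW, and a parity count forces equally many of each type. A local case analysis in the neighborhood of a WBW path then shows one can always either cancel a WBW/BWB pair or slide the WBW path to an adjacent position; finiteness of the torus forces an eventual meeting with a BWB path. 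This BWB/WBW mechanism is precisely the missing ingredient behind your assertion that ``any two prescribed endpoints can be made lattice-adjacent,'' and it is where the actual work in the argument lies.
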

\begin{proof}
By definition, we only need to prove that any two configurations
communicate to each other. We claim that any two dimer configuration
can obtained from each other by finite steps. In fact, the symmetric
difference of any two dimer configurations is a union of finitely
many loops. Obviously one dimer configuration can be obtained from
any other dimer configuration by first adding finitely many edges to
achieve their union, then deleting alternating edges of loops.
Notice that we have a 1-2 configuration at each step. Hence we only
need to prove that any configuration of 1-2 model can reach a dimer
by finite steps, each of which is adding or deleting one single
edge.
\par Let us start with an arbitrary configuration of 1-2 model. There
are 3 types of connected local configurations: loops with even
number of edges; zigzag paths with odd number of edges; zigzag paths
with even number of edges. For the first and second types, we can
always achieve dimers by deleting alternating edges. Hence we can
assume that all the zigzag paths with even number of edges are of
length 2, and all the other connected local configurations are
dimers. There are two types of length-2 paths. One has vertices
black-white-black (BWB), and the other has vertices
white-black-white (WBW). For each fixed configuration, the number of
BWB paths is the same as the number of WBW paths; otherwise the
complement graph cannot be covered by dimers. Consider an arbitrary
WBW path in a fixed configuration, as illustrated in Figure 12.

\begin{figure}[htbp]
\centering
\scalebox{0.8}[0.8]{\includegraphics*[220,680][388,820]{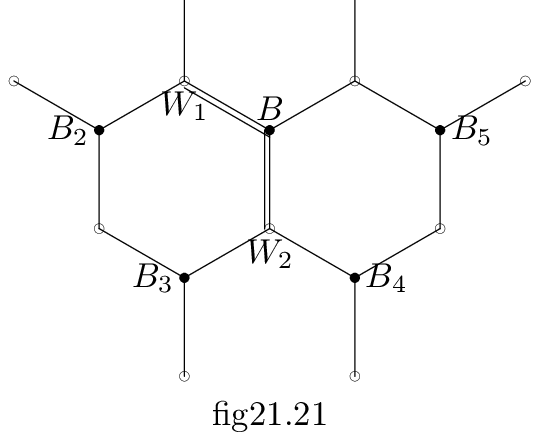}}
\caption{WBW Path}
\end{figure}

One can easily check that no matter what the configuration is, it
communicates with a configuration satisfying one of the following
two conditions: 1. the number of length-2 edges is decreased by 2;
2. it can be moved to a WBW configurations containing $B_i$, for all
$1\leq i\leq 6$. Hence if the number of length-2 paths does not
decrease, one can transverse all black vertices without meeting a
BWB path, because our graph is finite. However, this is impossible
because a BWB path always exists as long as a WBW path exists.
\end{proof}

For an irreducible, aperiodic Markov chain $\mathfrak{M}_t$ with
transition matrix $P$ and stationary distribution $\pi$, let $x_0$
be an arbitrary initial distribution. Then
\begin{align*}
\lim_{n\rightarrow\infty}x_0P^n=\pi.
\end{align*}
Therefore, in order to sample a configuration, we can approximately
sample according to the distribution $x_0P^N$, with large $N$. To
that end, first we choose a fixed dimer configuration with
probability 1 as the initial distribution $x_0$, then we randomly
change the configuration by adding or deleting a single edge
according to the conditional probability specified by the transition
matrix $P$. If neither adding nor deleting $u_1u_2$ ends up with a
satisfying configuration, we just keep the previous configuration;
else we get a new configuration by adding or deleting $u_1u_2$. Then
we repeat the process for $N$ steps. This way we get a sample for
distribution $x_0P^N$, if $N$ is sufficiently large, this is
approximately a sample for distribution $\pi$, which is exactly the
distribution given by 1-2 model.

\begin{example} (Uniform 1-2 Model) Consider the 1-2 model with
$a=b=c=1$. After the holographic reduction, the signature of each
matchgate is
\[\left(\begin{array}{cccccccc}0&\frac{\sqrt 2}{2}&\frac{\sqrt 2}{2}&0&\frac{\sqrt{2}}{2}&0&0&-\frac{3\sqrt2}{2}\end{array}\right)'\] It
is gauge equivalent to a positive-weight dimer model on Fisher
graph, whose spectral curve does not intersect $\mathbb{T}^2$. We
are interested in the probability that a $\{001\}$ dimer occurs,
that is, at a pair of adjacent vertices $v_1,v_2$ connected by an
$a$-edge, only the configuration $\{001\}$ is allowed. By the
technique described in section 5 and section 6, the partition
function of the configurations in which a dimer is present at
$v_1v_2$ is equal to the sum of two partition functions, one
corresponds to both $v_1$ and $v_2$ are replaced by an odd matchgate
with signatures
\[\left(\begin{array}{cccccccc}0&-\frac{\sqrt{2}}{4}&\frac{\sqrt{2}}{4}&0&\frac{\sqrt{2}}{4}&0&0&-\frac{\sqrt{2}}{4}\end{array}\right);\]
the other corresponds to both $v_1$ and $v_2$ are replaced by an
even matchgate with signatures
\[\left(\begin{array}{cccccccc}\frac{\sqrt{2}}{4}&0&0&-\frac{\sqrt{2}}{4}&0&-\frac{\sqrt{2}}{4}&\frac{\sqrt{2}}{4}&0\end{array}\right)\]
Both of them have the same partition function as a graph with
positive weights. Moreover, if we give a base change matrix
$\left(\begin{array}{cc}0&1\\1&0\end{array}\right)$ on $v_1v_2$
edge, and apply holographic reduction, we see that the two graphs
with positive weights are holographic equivalent. Hence it suffices
to consider the one with a pair of odd matchgates with modified
weights. Then
\[Pr(\mathrm{a\ dimer\ is\ present\ at}\ v_1v_2)=\frac{Z_{001}}{18Z}\]
where $Z_{001}$ is the partition function of dimer configurations
with weight 1 on triangles corresponding to $v_1$ and $v_2$ and
weight $\frac{1}{3}$ on all the other triangles, and $Z$ is the
partition function with weight $\frac{1}{3}$ on all the triangles.
Moreover
\begin{align*}
Z_{001}&=&Z^{001001}+Z^{001111}+Z^{010010}+Z^{010100}+Z^{100100}+Z^{100010}+Z^{111001}+Z^{111111}\\
&=&Z^{001001}+2Z^{001111}+2Z^{010010}+2Z^{010100}+Z^{111111}
\end{align*}
where $Z^{ijk,\tilde{i}\tilde{j}k}$ is the partition function of
dimer configurations with fixed configuration
$ijk,\tilde{i}\tilde{j}k$ on triangles corresponding to $v_1$ and
$v_2$. The second equality follows from symmetry. Meanwhile
\[\frac{1}{9}Z^{001001}+\frac{2}{3}Z^{001111}+\frac{2}{9}Z^{010010}+\frac{2}{9}Z^{010100}+Z^{111111}=Z,\]
then
\begin{align*}
Pr(\mathrm{a\ dimer\ is\ present\ at}\
v_1v_2)&=&\frac{1}{18}\left(1+\frac{8Z^{001001}}{9Z}+\frac{4Z^{001111}}{3Z}+\frac{16Z^{010010}}{9Z}+\frac{16Z^{010100}}{9Z}\right)\\
&=&\frac{1}{18}\left(1+\frac{4}{3}|(K^{-1}_{\infty})_{23}|+\frac{4}{9}|\mathrm{Pf}(K^{-1}_{\infty})_{2356}|+\frac{16}{3}|(K^{-1}_{\infty})_{13}|\right)
\end{align*}
where
\begin{align*}
(K^{-1}_{\infty})_{23}&=&\frac{3}{16\pi^2}\iint_{\mathbb{T}^2}\frac{w(zw+1+z-9w)}{2(z^2+w^2)+2(z+w)+2zw(z+w)-21zw}\frac{dz}{iz}\frac{dw}{iw}=(K^{-1}_{\infty})_{56}\\
(K^{-1}_{\infty})_{13}&=&-\frac{3}{16\pi^2}\iint_{\mathbb{T}^2}\frac{w(w+z+z^2-9zw)}{2(z^2+w^2)+2(z+w)+2zw(z+w)-21zw}\frac{dz}{iz}\frac{dw}{iw}\\
(K^{-1}_{\infty})_{25}&=&\frac{1}{16\pi^2}\iint_{\mathbb{T}^2}\frac{8z+8zw-82w+9w^2+9}{2(z^2+w^2)+2(z+w)+2zw(z+w)-21zw}\frac{dz}{iz}\frac{dw}{iw}=(K^{-1}_{\infty})_{36}\\
(K^{-1}_{\infty})_{26}&=&\frac{1}{16\pi^2}\iint_{\mathbb{T}^2}\frac{-z-zw-w+9}{2(z^2+w^2)+2(z+w)+2zw(z+w)-21zw}\frac{dz}{iz}\frac{dw}{iw}=(K^{-1}_{\infty})_{35}
\end{align*}
The entries of the inverse matrix can be expressed as elliptic
functions. The probability that a $\{001\}$ dimer occurs is
approximately $6\%$. A sample of the uniform 1-2 model is
illustrated in Figure 16.

\begin{figure}[htbp]
\centering
\rotatebox{90}{\scalebox{1.2}[1.2]{\includegraphics*[38,253][593,553]{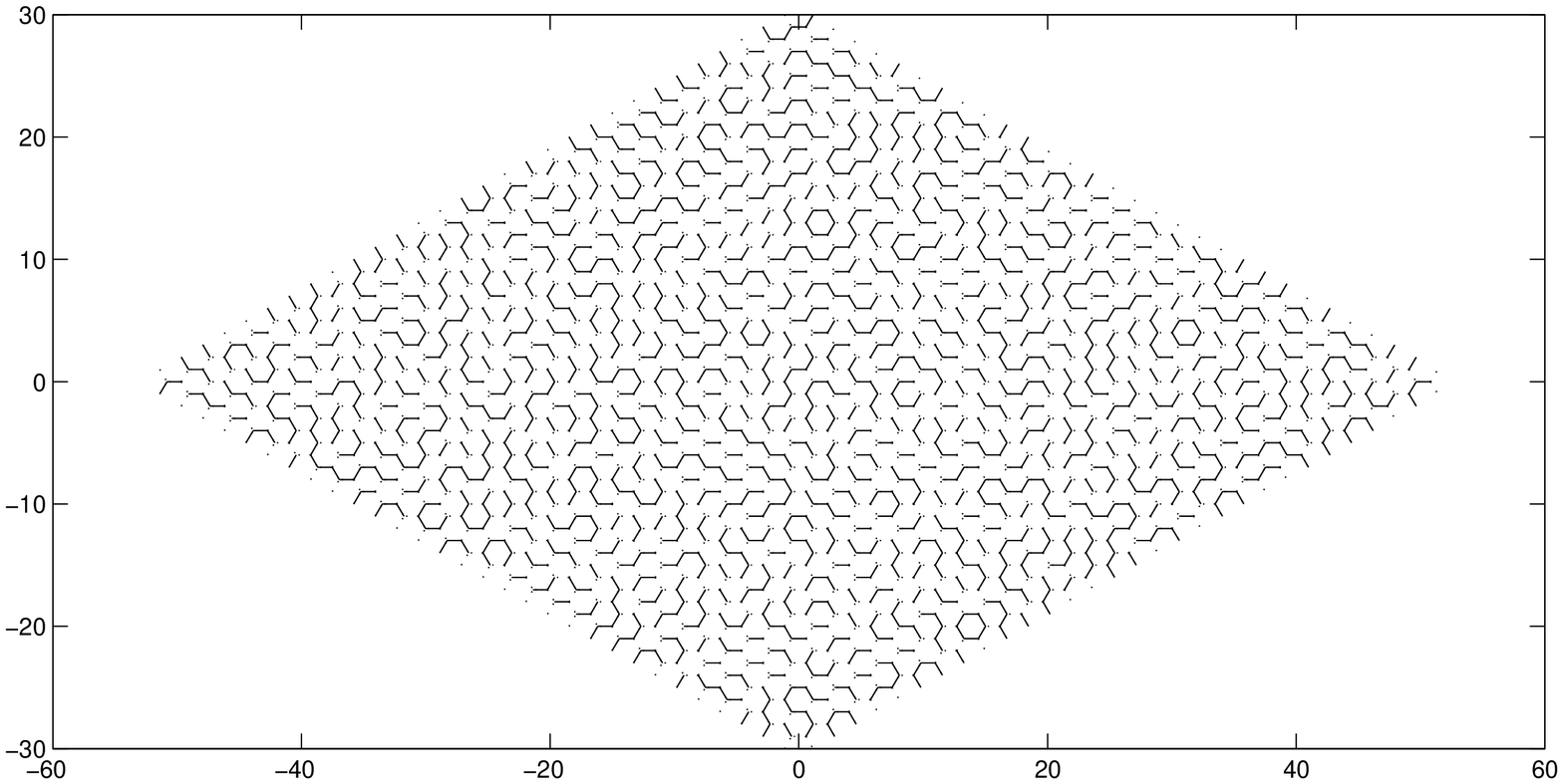}}}
\caption{Sample of Uniform 1-2 Model}
\end{figure}
\end{example}

\begin{example} (Critical 1-2 Model) Consider the 1-2 model with
$a=4,b=c=1$. After the holographic reduction, it has the same
partition function as a positive-weight dimer model on Fisher graph,
whose spectral curve has a single real node on $\mathbb{T}^2$. The
probability of the configuration $\{011\}$
\[Pr(\{011\})=\frac{Z_{011}}{3Z}=\frac{1}{3}(|(K^{-1}_{\infty})_{12}|+|(K^{-1}_{\infty})_{23}|+|(K^{-1}_{\infty})_{13}|)+|\mathrm{Pf}(K^{-1}_{\infty})_{123456}|)\]
$Z_{011}$ is the partition function on a Fisher graph with weights
$1,1,1$ on one triangle, and weights
$\frac{2}{3},\frac{2}{3},\frac{1}{3}$ on all the other triangles.
$Z$ is the partition function with weights
$\frac{2}{3},\frac{2}{3},\frac{1}{3}$ on all the triangles. By
symmetry $|(K^{-1}_{\infty})_{12}|=|(K^{-1}_{\infty})_{23}|$. Then
\begin{eqnarray*}
Pr(\{011\})&=&\frac{1}{3}+\frac{2}{9}|(K^{-1}_{\infty})_{12}|+\frac{2}{9}|(K^{-1}_{\infty})_{23}|\\
&=&\frac{1}{3}+\frac{1}{6\pi^2}\left|\iint_{\mathbb{T}^2}\frac{w(4wz+4+z-9w)}{-7(w^2+z^2)+32wz(w+z)+32(w+z)-114wz}\frac{dz}{iz}\frac{dw}{iw}\right|\\
&&+\frac{1}{3\pi^2}\left|\iint_{\mathbb{T}^2}\frac{z(9wz-4z-w^2-4w)}{-7(w^2+z^2)+32wz(w+z)+32(w+z)-114wz}\frac{dz}{iz}\frac{dw}{iw}\right|\\
&=&\frac{1}{3}+\frac{1}{12\pi}\left|\int_{|z|=1}\frac{8}{32z-7}-\frac{18}{z(32z-7)}+\frac{16z^2-85z+29}{z(32z-7)\sqrt{4z^2-17z+4}}dz\right|\\
&&+\frac{1}{3\pi}\left|\int_{|w|=1}\frac{9}{(32w-7)}-\frac{4}{w(32w-7)}+\frac{22w^2-35w+8}{w(32w-7)\sqrt{4w^2-17w+4}}dw\right|\\
&=&\frac{23}{48}-\frac{25}{112\pi}\arctan{\frac{4}{3}}-\frac{65}{336\pi}\arctan{\frac{44}{117}}\approx39\%.
\end{eqnarray*}
For $\sqrt{\cdot}$ we choose a branch with positive real part. This
integral can be evaluated explicitly because the graph has critical
edge weights. A sample of the critical 1-2 model is illustrated in
Figure 17.

\begin{figure}[htbp]
\centering
\scalebox{0.7}[0.67]{\includegraphics*[98,211][690,578]{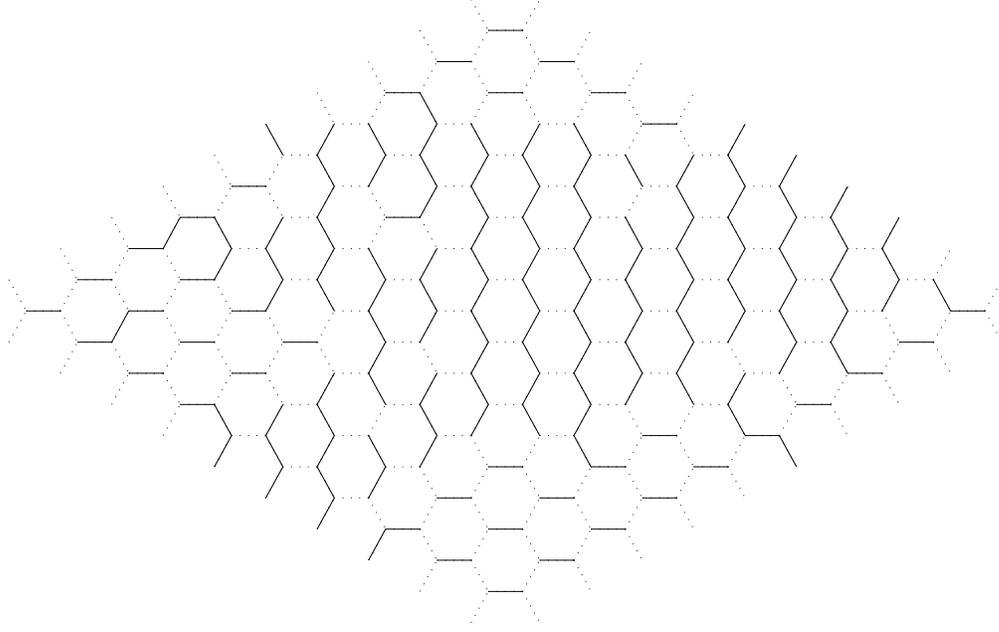}}
\caption{Sample of Critical 1-2 Model}
\end{figure}

\end{example}

\textbf{Question}: How large should $N$ be as a function of the size
of the graph?

\appendix
\section{Realizability Conditions}

In this section we give explicit realizability conditions for
periodic honeycomb lattice embedded on a $n\times n$ domain. Let us
consider an arbitrary vertex $x$ with adjacent vertices $y,z,w$.
Assume relations have following signatures:
\begin{align*}
r_x=\left(\begin{array}{ccc}x_1&...&x_8\end{array}\right)^t\ \ \
r_y=\left(\begin{array}{ccc}y_1&...&y_8\end{array}\right)^t\\
r_z=\left(\begin{array}{ccc}z_1&...&z_8\end{array}\right)^t\ \ \
r_w=\left(\begin{array}{ccc}w_1&...&w_8\end{array}\right)^t
\end{align*}

According to the realizability equation (1), after a lengthy
computation, we have
\\
\begin{theorem}
A periodic network of relation on $n\times n$ honeycomb lattice is
relizable if and only if at each vertex $x$ with adjacent vertices
$y,z,w$ connected by $a,b,c$ edge, respectively, the signatures
satisfy the following equation:
\begin{align*}
\sum_{i,j,k\in\{0,1,2\}}X_{ijk}Y_iZ_jW_k=0
\end{align*}
where
\begin{align*}
\begin{array}{ccc}Y_{0}=y_1y_4-y_2y_3&Y_1=y_1y_8+y_4y_5-y_2y_7-y_3y_6&Y_2=y_5y_8-y_6y_7\\Z_0=z_1z_6-z_2z_5&Z_1=z_1z_8+z_3z_6-z_4z_5-z_2z_7&Z_2=z_3z_8-z_4z_7\\
W_0=w_1w_7-w_3w_5&W_1=w_1w_8+w_2w_7-w_5w_4-w_3w_6&W_2=w_2w_8-w_6w_4
\end{array}
\end{align*}
and
\begin{align*}
X_{000}&=&x_1^2(x_3x_6+x_2x_7+x_4x_5-x_1x_8)-2x_1x_2x_3x_5\\
X_{001}&=&x_1^2(x_6x_4-x_2x_8)+x_2^2(x_1x_7-x_3x_5)\\
X_{002}&=&x_2^2(x_2x_7-x_1x_8-x_3x_6-x_4x_5)+2x_1x_2x_4x_6\\
X_{010}&=&x_3^2(x_1x_6-x_2x_5)+x_1^2(x_7x_4-x_3x_8)\\
X_{011}&=&x_1x_2(x_4x_7-x_3x_8)+x_3x_4(x_1x_6-x_2x_5)\\
X_{012}&=&x_2^2(x_4x_7-x_3x_8)+x_4^2(x_1x_6-x_2x_5)\\
X_{020}&=&x_3^2(x_3x_6-x_2x_7-x_4x_5-x_1x_8)+2x_1x_4x_3x_7\\
X_{021}&=&x_4^2(x_1x_7-x_3x_5)+x_3^2(x_6x_4-x_2x_8)\\
X_{022}&=&x_4^2(x_2x_7+x_1x_8+x_3x_6-x_4x_5)-2x_2x_3x_4x_8\\
\end{align*}
\begin{align*}
X_{100}&=&x_5^2(x_1x_4-x_2x_3)+x_1^2(x_6x_7-x_5x_8)\\
X_{101}&=&x_1x_2(x_6x_7-x_5x_8)+x_5x_6(x_1x_4-x_2x_3)\\
X_{102}&=&x_6^2(x_1x_4-x_2x_3)+x_2^2(x_6x_7-x_5x_8)\\
X_{110}&=&x_1x_3(x_7x_6-x_5x_8)+x_5x_7(x_1x_4-x_2x_3)\\
X_{111}&=&x_1x_4x_6x_7-x_2x_3x_5x_8\\
X_{112}&=&x_4x_8(x_1x_6-x_2x_5)+x_2x_6(x_4x_7-x_3x_8)\\
X_{120}&=&x_7^2(x_1x_4-x_2x_3)+x_3^2(x_6x_7-x_5x_8)\\
X_{121}&=&x_3x_7(x_4x_6-x_2x_8)+x_4x_8(x_1x_7-x_3x_5)\\
X_{122}&=&x_8^2(x_1x_4-x_2x_3)+x_4^2(x_6x_7-x_5x_8)\\
\end{align*}
\begin{align*}
X_{200}&=&x_5^2(x_4x_5-x_1x_8-x_3x_6-x_2x_7)+2x_1x_5x_6x_7\\
X_{201}&=&x_5^2(x_6x_4-x_2x_8)+x_6^2(x_1x_7-x_3x_5)\\
X_{202}&=&x_6^2(x_1x_8+x_2x_7+x_4x_5-x_3x_6)-2x_5x_6x_2x_8\\
X_{210}&=&x_5^2(x_4x_7-x_3x_8)+x_7^2(x_1x_6-x_2x_5)\\
X_{211}&=&x_6x_8(x_1x_7-x_3x_5)+x_5x_7(x_4x_6-x_2x_8)\\
X_{212}&=&x_8^2(x_1x_6-x_2x_5)+x_6^2(x_4x_7-x_3x_8)\\
X_{220}&=&x_7^2(x_1x_8+x_3x_6+x_4x_5-x_2x_7)-2x_3x_8x_5x_7\\
X_{221}&=&x_8^2(x_1x_7-x_3x_5)+x_7^2(x_6x_4-x_2x_8)\\
X_{222}&=&x_8^2(x_1x_8-x_3x_6-x_2x_7-x_4x_5)+2x_7x_8x_6x_4
\end{align*}

\end{theorem}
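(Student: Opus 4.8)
The plan is to read realizability at the vertex $x$ as the solvability of a small polynomial system in the base-change ratios on the three incident edges, and to produce the stated identity as the resultant of that system once the auxiliary ratios are eliminated. Throughout I work with the ratio variables $a_{lm}=n_{lm}/p_{lm}$ of~(\ref{rlbcd}), which are finite and give invertible matrices under the standing assumption that all base-change and signature entries are nonzero, so passing to ratios loses nothing. First I would reformulate realizability at a single vertex: writing $r_x$ as a $2\times2\times2$ tensor and the $\alpha$-th row of the matrix on edge $e$ as a covector $\omega^{e}_{\alpha}=(a_{\alpha,e},1)$, the parity constraint that the matchgate entries at the four even-weight positions $000,011,101,110$ vanish becomes
\[
f_x(\omega^a_0,\omega^b_0,\omega^c_0)=f_x(\omega^a_0,\omega^b_1,\omega^c_1)=f_x(\omega^a_1,\omega^b_0,\omega^c_1)=f_x(\omega^a_1,\omega^b_1,\omega^c_0)=0,
\]
where $f_x$ is the trilinear form carried by $x$. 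These are four equations, each linear in $x$, in the six ratios on the three edges at $x$.

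The heart of the argument is the observation that each neighbour contributes \emph{exactly one} further equation, which I would identify explicitly. For the white neighbour $y$ across the $a$-edge, eliminating $y$'s two far edges and its matchgate weights, a completion exists iff the two $a$-slices $S_0,S_1$ of the shared-matrix transform of $r_y$ can be made simultaneously anti-diagonal and diagonal; this holds iff $\operatorname{tr}(S_0\,\operatorname{adj}S_1)=0$ (reducing to $\operatorname{tr}(S_0 S_1^{-1})=0$, i.e.\ $S_0S_1^{-1}$ conjugate to an anti-diagonal matrix, with the boundary case handled by the nonvanishing assumption). Since $S_i$ is linear in the original $a$-slices $M_0^{(y)},M_1^{(y)}$ of $y$, expanding the trace produces
\[
2\pi_a Y_0-\sigma_a Y_1+2Y_2=0 ,
\]
where $\sigma_a,\pi_a$ are the elementary symmetric functions of the two $a$-edge ratios and $Y_0=\det M_0^{(y)}=y_1y_4-y_2y_3$, $Y_2=\det M_1^{(y)}=y_5y_8-y_6y_7$, $Y_1=\langle M_0^{(y)},M_1^{(y)}\rangle=y_1y_8+y_4y_5-y_2y_7-y_3y_6$ are precisely the slice discriminants of the statement. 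The identical computation across the $b$- and $c$-edges yields the relations in $(Z_0,Z_1,Z_2)$ and $(W_0,W_1,W_2)$; the shifted index patterns in $Z_j$ and $W_k$ simply record that the shared edge is the second, resp.\ third, digit at $z$ and $w$.

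Finally I would eliminate. The four conditions at $x$ together with the three neighbour relations form a system of seven equations in the six ratios. Because the neighbour data enter only through those three relations, and linearly, eliminating the six ratios leaves a single relation that is trilinear in $(Y_0,Y_1,Y_2)$, $(Z_0,Z_1,Z_2)$, $(W_0,W_1,W_2)$; its coefficients are quartic in $x$, and collecting them gives $\sum_{i,j,k}X_{ijk}Y_iZ_jW_k=0$. The $3\times3\times3$ count of coefficients and the common degree $4$ in $x$ are exactly what this elimination predicts. For the necessity (``only if'') direction this is immediate: if the model is globally realizable then the four conditions hold at $x$ while $y,z,w$ force the three neighbour relations, so the resultant must vanish. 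For sufficiency I would run the construction in reverse — the per-vertex identity is precisely the compatibility making the two endpoints of each edge determine $\sigma_e,\pi_e$ consistently, after which one solves the edge-quadratics for the ratios and lifts them to honest invertible matrices by the nonvanishing assumption, propagating a coherent choice of roots around the torus using the bipartite periodic structure.

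I expect the elimination in the last paragraph to be the main obstacle: carrying it out so that the coefficients emerge in exactly the closed forms listed (e.g.\ verifying the most symmetric entry $X_{111}=x_1x_4x_6x_7-x_2x_3x_5x_8$) and, crucially, confirming that the resultant is genuinely of degree one in each of $Y,Z,W$ with no spurious common factor introduced when clearing the $Y_0,Z_0,W_0$ denominators — this is the ``lengthy computation'' the appendix alludes to. The global root-selection in the sufficiency direction is a secondary but more routine point, controlled by the assumption that all entries are nonzero.
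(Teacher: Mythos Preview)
Your plan is sound and aligns with the paper's own route: the paper itself offers no proof beyond ``after a lengthy computation'' based on the system~(\ref{expbe1})--(\ref{s6}) set up in Section~3.1, and your proposal is precisely a structured version of that elimination. Your neighbour equation is exactly the paper's~(\ref{s6}) (at a black neighbour) or~(\ref{s5}) (at a white one) rewritten in terms of $\sigma_a,\pi_a$; your identification of $Y_0,Y_1,Y_2$ as the two slice determinants and the mixed trace $\operatorname{tr}(M_0\operatorname{adj}M_1)$ matches those equations coefficient by coefficient, and the trace condition $\operatorname{tr}(S_0S_1^{-1})=0$ is a genuinely nicer derivation than the paper's direct manipulation.

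Two small points to watch. First, the paper uses dual bases at the two ends of each edge (compare~(\ref{blackrealizability}) with~(\ref{whiterealizability})), so the single-edge equations take slightly different forms at black versus white vertices --- in~(\ref{s5}) the roles of the constant term and the $\pi$-coefficient are swapped relative to~(\ref{s6}), and the $\sigma$-term changes sign. Your uniform formulation ``at each vertex $x$'' needs to absorb this; the theorem's final form is colour-blind, but the intermediate elimination is not. Second, the trilinearity in $(Y,Z,W)$ that you predict is correct but, as you rightly flag, is the actual content of the ``lengthy computation'' --- neither you nor the paper supplies a conceptual reason that no higher-degree terms or spurious factors appear, so this remains the step where one must simply compute.
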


\begin{theorem}
A periodic network of relation with $n\times n$ fundamental domain
is bipartite realizable if and only if it is realizable and at any
vertex $v$, the signature satisfy
\begin{align*}
v_1^2v_8^2+v_2^2v_7^2+v_3^2v_6^2+v_4^2v_5^2-2v_1v_8v_2v_7-2v_1v_8v_3v_6-2v_1v_8v_4v_5\\
-2v_2v_7v_3v_6-2v_2v_7v_4v_5-2v_3v_6v_4v_5+4v_1v_4v_6v_7+4v_2v_3v_5v_8=0
\end{align*}
\end{theorem}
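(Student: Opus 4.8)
The plan is to recognize the quartic on the left-hand side as Cayley's hyperdeterminant of the $2\times 2\times 2$ signature tensor. I would reindex $v_1,\dots,v_8$ as $v_{000},v_{001},v_{010},v_{011},v_{100},v_{101},v_{110},v_{111}$, so that $v$ is a tensor in $(\mathbb{C}^2)^{\otimes 3}$. A term-by-term comparison (four square terms, six terms with coefficient $-2$, two with coefficient $+4$) shows that the quartic is exactly $\mathrm{Det}(v)$, the hyperdeterminant of $(v_{ijk})$. The single property I would exploit is that $\mathrm{Det}$ is a relative $GL_2\times GL_2\times GL_2$ invariant:
\[\mathrm{Det}\big((A\otimes B\otimes C)\,v\big)=(\det A)^2(\det B)^2(\det C)^2\,\mathrm{Det}(v)\]
for invertible $A,B,C$. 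Since at each vertex the matchgate signature $m$ is obtained from $v$ by a tensor product of invertible $2\times 2$ base change matrices (an inverse-transpose is applied at white vertices, but this is again invertible), it follows that $\mathrm{Det}(v)=0$ if and only if $\mathrm{Det}(m)=0$. Thus the quartic is a basis-independent condition, equivalent to the vanishing of the hyperdeterminant of the matchgate signature.

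Next I would evaluate $\mathrm{Det}(m)$ for a degree-$3$ matchgate. After enlarging the fundamental domain we may take every matchgate odd, so the parity constraint forces the support of $m$ onto the odd-weight positions: $m=(0,a,b,0,c,0,0,d)$. Substituting into the hyperdeterminant, every square term and every term with coefficient $-2$ vanishes, and only $4\,m_{001}m_{010}m_{100}m_{111}$ survives, so $\mathrm{Det}(m)=4abcd$. Hence the quartic at a vertex is equivalent to $abcd=0$, i.e.\ one of the four surviving matchgate weights vanishes.

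The core step is the equivalence of $abcd=0$ with bipartiteness of the gadget. For the forward direction, a matchgate inside a bipartite matchgrid is an induced subgraph, hence bipartite; I would colour its three external vertices and let $\beta,\omega$ be the sizes of the two colour classes. Each entry of $m$ is a perfect-matching partition function of the gadget with a subset $S$ of external vertices deleted, and it vanishes unless the remaining colour classes are balanced. If $a,b,c$ (the single deletions) were all nonzero they would force $\beta-\omega$ to equal $+1$ (all external vertices one colour) or $-1$ (all the other colour); but then the triple deletion giving $d$ would require $\beta-\omega=\pm 3$, a contradiction, so $d=0$. In every case $abcd=0$ and the quartic holds. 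Conversely, assuming realizability together with the quartic, in the positive (generic) regime $d\neq 0$, so $abc=0$ and one of the three triangle edges carries weight zero, which means the gadget produced by the realizing base change is already a path rather than a triangle, hence bipartite. No re-choice of basis is needed: the very base change that realizes the model already yields bipartite gadgets at every vertex.

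It then remains to assemble these local paths into a globally bipartite matchgrid, and this is where I expect the main difficulty. Each path has a $2$-colouring unique up to a swap, each connecting edge forces its two endpoints into opposite colours, and one must rule out an odd cycle running through several gadgets and connecting edges; I would use that the honeycomb lattice is bipartite to propagate a consistent global colouring on the torus, checking that the (varying) choice of which edge is deleted at each gadget never obstructs this. Granting the assembly, the only-if direction follows since any bipartite matchgrid has bipartite gadgets and hence $abcd=0$ at each vertex, while the if direction follows from realizability plus the quartic as above; together with the realizability criterion established above this proves the characterization. The routine parts are the hyperdeterminant identification and the computation $\mathrm{Det}(m)=4abcd$, whereas the substantive parts are the colour-counting argument and the global gluing.
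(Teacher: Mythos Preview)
Your identification of the quartic as Cayley's $2\times 2\times 2$ hyperdeterminant is correct and illuminating; the paper does not make this observation. The $GL_2\times GL_2\times GL_2$ relative invariance, together with the computation $\mathrm{Det}(0,a,b,0,c,0,0,d)=4abcd$, gives a clean conceptual reason why the condition is basis-independent and reduces on the matchgate side to $abcd=0$. Your colour-counting argument for the forward direction (bipartite $\Rightarrow abcd=0$) is correct and considerably more transparent than the paper's route, which instead imposes $m_{111}=0$, combines the resulting equation with the parity constraint $m_{110}=0$ and with equation~(\ref{s3}) to produce a quadratic in $a_{12}$, and identifies the quartic as its discriminant.

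The genuine gap is in your converse. You commit to the Fisher (triangle) realization, where $d=1$ always, so $abcd=0$ forces $abc=0$ and the gadget becomes a $P_3$. But which triangle edge vanishes is dictated by the vertex signature and may vary from vertex to vertex; the resulting path-gadgets do \emph{not} always glue to a bipartite matchgrid. Concretely, take a hexagonal face: going around it you traverse six connecting edges and, within each gadget, either one or two path-edges depending on whether the deleted triangle edge lies between the two in-cycle external nodes. An odd count of ``one-edge'' traversals yields an odd cycle in the matchgrid, and nothing you have said rules this out. Your remark that ``no re-choice of basis is needed'' is therefore not justified.

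The paper's argument implicitly avoids this obstruction by arranging $d=m_{111}=0$ at every matchgate rather than $abc=0$: the discriminant computation shows that the quartic is precisely the condition under which the realizability system is solvable with $m_{111}=0$. Once $d=0$ everywhere, a bipartite gadget (e.g.\ a $K_{3,2}$-type gadget) can be chosen with all three external vertices of one colour; the global $2$-colouring is then inherited directly from the black/white bipartition of the honeycomb, and no gluing problem arises. If you want to salvage your approach, the missing step is to show that, by swapping basis vectors on suitable pairs of incident edges (the Klein-four action $(a\,d)(b\,c)$, $(a\,c)(b\,d)$, $(a\,b)(c\,d)$ on the matchgate entries), one can consistently move the zero to the $d$-slot at every vertex simultaneously; this is a global parity statement analogous to the ``make all matchgates odd'' reduction in Section~3.1.
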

\begin{proof}
Without loss of generality, we assume at each matchgate, the
signature $\{111\}$ is 0, and assume $v$ is a black vertex. The
$\{111\}$ entry of the signature of a black vertex is 0 gives us
\begin{eqnarray}
a_{11}a_{12}a_{13}v_1+a_{11}a_{12}v_2+a_{11}a_{13}v_3+a_{11}v_4+a_{12}a_{13}v_5+a_{12}v_6+a_{13}v_7+v_8=0
\label{brealizability1}
\end{eqnarray}
The parity constraint implies that the $\{110\}$ entry is also 0,
namely
\begin{eqnarray}
a_{11}a_{12}a_{03}v_1+a_{11}a_{12}v_2+a_{11}a_{03}v_3+a_{11}v_4+a_{12}a_{03}v_5+a_{12}v_6+a_{03}v_7+v_8=0\label{brealizability2}
\end{eqnarray}
 From \eqref{brealizability1}\eqref{brealizability2}, we can solve $a_11$, and solution has following
form
\begin{align*}
a_{11}=\frac{N_1}{D_1}=\frac{N_2}{D_2}
\end{align*}
Then $N_1D_2-N_2D_1=0$. Under the assumption that $a_{03}\neq
a_{13}$, we have
\begin{eqnarray}
(-v_2v_5+v_6v_1)a_{12}^2+(v_8v_1+v_6v_3-v_4v_5-v_2v_7)a_{12}+v_8v_3-v_4v_7=0\label{brealizability3}
\end{eqnarray}
From (12), we have
\begin{eqnarray}
2(v_1v_6-v_2v_5)a_{02}a_{12}+(v_1v_8+v_3v_6-v_4v_5-v_2v_7)(a_{02}+a_{12})+2(v_3v_8-v_4v_7)=0\label{brealizability4}
\end{eqnarray}
2$\times$\eqref{brealizability3}$-$\eqref{brealizability4}, under
the assumption that $a_{02}\neq a_{12}$, we have
\begin{align*}
a_{12}=-\frac{v_1v_8+v_3v_6-v_4v_5-v_2v_7}{2(v_1v_6-v_2v_5)}
\end{align*}
which implies that equation \eqref{brealizability3} has double real
roots, and its discriminant is 0. That is exactly the statement of
the theorem.
\end{proof}

\textbf{Proof of Proposition 3.9} Since holographic reduction is an
invertible process, two matchgrids $M$ and $\hat{M}$ are
holographically equivalent if and only if there exists a basis for
each edge, such that one can be transformed to the other using
holographic reduction. Obviously holographic equivalent matchgrids
have the same dimer partition function. Assume weights $m_b,m_w$ of
$M$ and $\hat{m_b},\hat{m_w}$ of $\hat{M}$ are as follows:
\begin{align*}
m_w^{ij}&=\left(\begin{array}{cccccccc}0&c_1^{ij}&b_1^{ij}&0&a_1^{ij}&0&0&d_1^{ij}\end{array}\right)^t\\
m_b^{ij}&=\left(\begin{array}{cccccccc}0&c_2^{ij}&b_2^{ij}&0&a_2^{ij}&0&0&d_2^{ij}\end{array}\right)^t\\
\hat{m}_w^{ij}&=\left(\begin{array}{cccccccc}0&\hat{c}_1^{ij}&\hat{b}_1^{ij}&0&\hat{a}_1^{ij}&0&0&\hat{d}_1^{ij}\end{array}\right)^t\\
\hat{m}_b^{ij}&=\left(\begin{array}{cccccccc}0&\hat{c}_2^{ij}&\hat{b}_2^{ij}&0&\hat{a}_2^{ij}&0&0&\hat{d}_2^{ij}\end{array}\right)^t
\end{align*}
Then by equations (\ref{s1})-(\ref{s6}), and the uniqueness of the
basis on each edge, we have
\begin{eqnarray}
\frac{a_1^{ij}b_1^{ij}}{d_1^{ij}c_1^{ij}}&=&\frac{d_2^{i,j-1}c_2^{i,j-1}}{b_2^{i,j-1}a_2^{i,j-1}}=a_{03}^{ij}a_{13}^{ij}=b_{03}^{i,j-1}b_{13}^{i,j-1}\label{simplifiedequation1}\\
\frac{a_1^{ij}c_1^{ij}}{b_1^{ij}d_1^{ij}}&=&\frac{d_2^{i-1,j}b_2^{i-1,j}}{a_2^{i-1,j}c_5^{i-1,j}}=a_{02}^{ij}a_{12}^{ij}=b_{02}^{i-1,j}b_{12}^{i-1,j}\\
\frac{b_1^{ij}c_1^{ij}}{a_1^{ij}d_1^{ij}}&=&\frac{d_2^{ij}a_2^{ij}}{b_2^{ij}c_2^{ij}}=a_{01}^{ij}a_{11}^{ij}=b_{01}^{ij}b_{11}^{ij}\label{simplifiedequation2}
\end{eqnarray}
Plugging in (\ref{simplifiedequation1})-(\ref{simplifiedequation2})
to (\ref{blackrealizability})-(\ref{whiterealizability}), we have
\begin{align*}
\left\{\begin{array}{c}\hat{c}^{ij}_2={c^{ij}_2}{n_{01}^{i,j,0}n_{02}^{i,j,0}p_{13}^{i,j,0}}\cdot
C_1^{ij}\\
\hat{b}^{ij}_2={b^{ij}_2}{n_{01}^{i,j,0}p_{12}^{i,j,0}n_{03}^{i,j,0}}\cdot C_1^{ij}\\
\hat{a}^{ij}_2={a^{ij}_2}{p_{11}^{i,j,0}n_{02}^{i,j,0}n_{03}^{i,j,0}}\cdot
C_1^{ij}\\
\hat{d}^{ij}_2={d^{ij}_2}{p_{11}^{i,j,0}p_{12}^{i,j,0}p_{13}^{i,j,0}}\cdot
C_1^{ij}\end{array}\right.
\end{align*}
\begin{align*}
\left\{\begin{array}{c}\hat{c}^{ij}_1=\frac{c^{ij}_1}{n_{01}^{i,j,1}n_{02}^{i,j,1}p_{13}^{i,j,1}}\cdot C_2^{ij}\\
\hat{b}^{ij}_1=\frac{b^{ij}_1}{n_{01}^{i,j,1}p_{12}^{i,j,1}n_{03}^{i,j,1}}\cdot C_2^{ij}\\
\hat{a}^{ij}_1=\frac{a^{ij}_1}{p_{11}^{i,j,1}n_{02}^{i,j,1}n_{03}^{i,j,1}}\cdot C_2^{ij}\\
\hat{d}^{ij}_1=\frac{d^{ij}_1}{p_{11}^{i,j,1}p_{12}^{i,j,1}p_{13}^{i,j,1}}\cdot
C_2^{ij}\end{array}\right.
\end{align*}
To prove that probability measures of $M$ and $\hat{M}$ are
identical, we only need to prove that for any dimer configuration,
the products of weights differ by the same constant factor. Each
dimer configuration corresponds to a binary sequence of length $N$,
where $N=3n^2$ is the number of connecting edges. Choose an
arbitrary edge with basis
$\left(\begin{array}{cc}n_0&n_1\\p_0&p_1\end{array}\right)$. If the
edge is occupied, then from $M$ to $\hat{M}$, adjacent generator
weight is divided by $p_1$, while adjacent recognizer weight is
multiplied by $p_1$. If the edge is unoccupied, then from $M$ to
$\hat{M}$, adjacent generator weight is divided by $n_0$, while the
adjacent recognizer weight is multiplied by $n_0$. Therefore, the
total effect is for any particular dimer configuration $\varpi$
\begin{align*}
\frac{\mathrm{Partition}(\varpi\ on\
\hat{M})}{\mathrm{Partition}(\varpi\ on\
M)}=\prod_{i,j}C_1^{ij}C_{2}^{ij}
\end{align*}
which is a constant independent of configuration $\varpi$. $\Box$

\end{document}